\def\draft{1}
\def\doubleblind{0}
\documentclass[11pt,letterpaper]{article}

\usepackage[margin=1in]{geometry}
\usepackage[utf8]{inputenc}
\usepackage{amsthm}
\usepackage{amsthm, amsmath, amssymb, mathtools}
\usepackage{bm,bbm}
\usepackage{graphicx}
\usepackage{color,xcolor}
\usepackage{paralist,enumitem}
\usepackage{mathrsfs}
\usepackage[normalem]{ulem}
\usepackage{tabularx}
\usepackage{tikz}
\usepackage{caption,comment}
\usepackage{algorithmicx}
\usepackage{algorithm}
\usepackage{algpseudocode}
\newcounter{algsubstate}
\renewcommand{\thealgsubstate}{\alph{algsubstate}}

\algnewcommand\algorithmicinput{\textbf{Input:}}
\algnewcommand\Input{\item[\algorithmicinput]}

\algnewcommand\algorithmicoutput{\textbf{Output:}}
\algnewcommand\Output{\item[\algorithmicoutput]}

\algnewcommand\algorithmicgoal{\textbf{Goal:}}
\algnewcommand\Goal{\item[\algorithmicgoal]}

\newcommand{\alglinenoNew}[1]{\newcounter{ALG@line@#1}}
\newcommand{\alglinenoPop}[1]{\setcounter{ALG@line}{\value{ALG@line@#1}}}
\newcommand{\alglinenoPush}[1]{\setcounter{ALG@line@#1}{\value{ALG@line}}}

\newcommand{\blind}[2]{{\ifnum\draft=1\color{purple}\fi \ifnum\doubleblind=1#2\fi\ifnum\doubleblind=0#1\fi\ifnum\doubleblind=2$\{$ #1 $\vert$ #2 $\}$\fi}}

\makeatletter
\newcommand{\algmargin}{\the\ALG@thistlm}
\algnewcommand{\parState}[1]{\State%
  \parbox[t]{\dimexpr\linewidth-\algmargin}{\strut #1\strut}}

\usepackage[colorlinks=true, allcolors=blue]{hyperref}
\usepackage[capitalize,noabbrev,nameinlink]{cleveref}

\usepackage[style=alphabetic, backend=biber, minalphanames=3, maxalphanames=4, maxbibnames=99, maxcitenames=4]{biblatex}

\addbibresource{csps.bib}

\usepackage{amsthm}
\usepackage{thmtools,thm-restate}

\numberwithin{equation}{section}
\declaretheoremstyle[bodyfont=\it,qed=\qedsymbol]{noproofstyle}

\declaretheorem[name=Observation,numbered=no]{observation*}

\declaretheorem[numberlike=equation]{theorem}

\declaretheorem[name=Theorem,numbered=no]{theorem*}

\declaretheorem[numberlike=equation]{lemma}
\declaretheorem[name=Lemma,numbered=no]{lemma*}

\declaretheorem[numberlike=equation]{corollary}
\declaretheorem[name=Corollary,numbered=no]{corollary*}

\declaretheorem[numberlike=equation]{proposition}
\declaretheorem[name=Proposition,numbered=no]{proposition*}

\declaretheorem[numberlike=equation]{claim}
\declaretheorem[name=Claim,numbered=no]{claim*}

\declaretheorem[name=Conjecture,numbered=no]{conjecture*}

\declaretheorem[name=Question,numbered=no]{question*}

\declaretheoremstyle[bodyfont=\it]{defstyle} 

\declaretheorem[numberlike=equation,style=defstyle]{definition}
\declaretheorem[unnumbered,name=Definition,style=defstyle]{definition*}

\declaretheorem[numberlike=equation,style=defstyle]{example}
\declaretheorem[unnumbered,name=Example,style=defstyle]{example*}

\declaretheorem[unnumbered,name=Notation=defstyle]{notation*}

\declaretheorem[unnumbered,name=Construction,style=defstyle]{construction*}

\declaretheoremstyle[]{rmkstyle}

\crefname{claim}{Claim}{Claims}
\crefname{fact}{Fact}{Facts}


\DeclarePairedDelimiter{\paren}{\lparen}{\rparen}
\DeclarePairedDelimiter{\set}{\lbrace}{\rbrace}

\DeclarePairedDelimiter{\bracket}{[}{]}
\DeclarePairedDelimiter{\card}{\lvert}{\rvert}

\DeclarePairedDelimiter{\abs}{\lvert}{\rvert}

\newcommand{\mcut}{\mathsf{Max}\text{-}\mathsf{CUT}}
\newcommand{\mdcut}{\mathsf{Max}\text{-}\mathsf{DICUT}}
\newcommand{\mkand}{\mathsf{Max}\text{-}k\text{-}\mathsf{AND}}
\newcommand{\dcut}{\mathsf{DICUT}}


\newcommand{\maxval}[1]{\mathsf{maxval}_{#1}}
\newcommand{\val}[2]{\mathsf{val}_{#1}(#2)}

\newcommand{\1}{\mathbbm{1}}


\newcommand{\type}[3]{\mathsf{type}(#1,#2,#3)}

\newcommand{\nbrtype}[4]{\mathsf{nbhdtype}_{#1,#2}^{#3}(#4)}
\newcommand{\ball}[3]{\mathsf{ball}_{#1}^{#2}(#3)}

\newcommand{\DbAllTypesDeg}[3]{\mathbf{Typ}_{#1}^{#2,#3}}
\newcommand{\NDbTypes}[3]{\mathbf{N}_{#1}^{#2,#3}}

\newcommand{\Dist}[1]{\mathbf{Dists}(#1)}
\newcommand{\EmpDist}[2]{\mathsf{EmpDist}_{#1}(#2)}

\newcommand{\Unif}[1]{\mathsf{Unif}(#1)}
\newcommand{\NoReplace}[2]{\mathsf{NoReplace}_{#1}(#2)}

\newcommand{\degout}[2]{\mathsf{outdeg}_{#1}(#2)}
\newcommand{\degin}[2]{\mathsf{indeg}_{#1}(#2)}
\renewcommand{\deg}[2]{\mathsf{deg}_{#1}(#2)}
\newcommand{\verts}[1]{\mathsf{ends}(#1)} 

\newcommand{\mult}[2]{\mathsf{mult}_{#1}(#2)}
\newcommand{\toset}[1]{\mathsf{set}(#1)}
\newcommand{\vecsigma}{\boldsymbol{\sigma}}
\newcommand{\vecT}{\boldsymbol{T}}
\newcommand{\EdgeDist}[3]{\mathsf{EdgeNbhdTypeDist}_{#1;#2}^{#3}}

\newcommand{\Orderings}[1]{\mathbf{Ords}(#1)}

\newcommand{\tvdist}[2]{\mathsf{tvdist}(#1,#2)}

\newcommand{\Local}{\mathtt{Local}}

\newcommand{\degs}{\mathtt{degs}}

\usepackage{singer-macros}

\allowdisplaybreaks

\title{Streaming Algorithms via Local Algorithms \\ for Maximum Directed Cut}

\ifnum\doubleblind=0
\author{Raghuvansh R. Saxena\thanks{Tata Institute of Fundamental Research, Mumbai, Maharashtra, India. Email: \texttt{raghuvansh.saxena@gmail.com}} 
\and Noah G. Singer\thanks{Department of Computer Science, Carnegie Mellon University, Pittsburgh, PA, USA. Email: \texttt{ngsinger@cs.cmu.edu}.} 
\and Madhu Sudan\thanks{School of Engineering and Applied Sciences, Harvard University, Cambridge, Massachusetts, USA. Email: \texttt{madhu@cs.harvard.edu}.}
\and Santhoshini Velusamy\thanks{Toyota Technological Institute, Chicago, Illinois, USA. Email: \texttt{santhoshini@ttic.edu}.}}
\fi
\ifnum\doubleblind=1
\author{Anonymous authors}
\fi

\date{}
\sloppy
\begin{document}
\maketitle


\begin{abstract}
We explore the use of local algorithms in the design of streaming algorithms for the Maximum Directed Cut problem. Specifically, building on the local algorithm of \textcite{BFSS15,CLS17}, we develop streaming algorithms for both adversarially and randomly ordered streams that approximate the value of maximum directed cut in bounded-degree graphs. In $n$-vertex graphs, for adversarially ordered streams, our algorithm uses $O(n^{1-\Omega(1)})$ (sub-linear) space and for randomly ordered streams, our algorithm uses logarithmic space. Moreover, both algorithms require only one pass over the input stream. With a constant number of passes, we give a logarithmic-space algorithm which works even on graphs with unbounded degree on adversarially ordered streams. Our algorithms achieve any fixed constant approximation factor less than $\frac12$. In the single-pass setting, this is tight: known lower bounds show that obtaining any constant approximation factor greater than $\frac12$ is impossible without using linear space in adversarially ordered streams \textcite{KK19} and $\Omega(\sqrt{n})$ space in randomly ordered streams, even on bounded degree graphs \textcite{KKS15}.

In terms of techniques, our algorithms partition the vertices into a small number of different types based on the structure of their local neighborhood, ensuring that each type carries enough information about the structure to approximately simulate the local algorithm on a vertex with that type. We then develop tools to accurately estimate the frequency of each type. This allows us to simulate an execution of the local algorithm on all vertices, and thereby approximate the value of the maximum directed cut.
\end{abstract}

\color{black}

\newpage

\pagenumbering{arabic} 


\section{Introduction}

We give almost $1/2$-appproximation algorithms solving the \emph{maximum directed cut} ($\mdcut$) problem in graphs in a variety of streaming settings, by appealing to local algorithms achieving similar approximations. We describe the problem and settings in more detail below before turning to the results and techniques.

\subsection{The $\mdcut$ Problem and its Significance}

The {\em maximum directed cut} problem ($\mdcut$) is the problem of estimating the value of the maximum directed cut in an input graph $G$. Here, a (directed) cut is a subset $S$ of the vertices and the value of the cut is the fraction of edges $(u, v)$ in the graph satisfying $u \in S$ and $v \notin S$. We denote the value of the largest cut by $\maxval{G}$ and say that an algorithm produces an $\alpha$-approximation if it guarantees to output a value at least $\alpha \cdot \maxval{G}$ (and at most $\maxval{G}$) on all graphs $G$.

Besides being a central problem in its own right, $\mdcut$ is also significant as it is an example of a \emph{Constraint Satisfaction Problem (CSP)}. In a constraint satisfaction problem, there is a set of variables and a set of constraints over these variables, and the goal is find out the maximum number of constraints that can be satisfied by an assignment to the variables. CSPs form an infinite set of problems that often capture many natural settings, and have received considerable attention in both streaming~\cite{KK15,KKS15,GVV17,KKSV17,BDV18,KK19,AKSY20,AN21,CGS+22-linear-space,BHP+22,CKP+23,CGS+22-monarchy,CGS+22-linear-space,KP22,SSSV23-random-ordering,SSSV23-dicut,KPSY23,Sin23-kand,SSV24,KPV24} and non-streaming settings~\cite[among many others]{Sch78,FV98,Aus07,Rag08,Aus10,Kho10,Mos10,Bul17,Zhu20}. For streaming settings, the $\mdcut$ problem has emerged as a leader on the algorithmic front, with almost all algorithms being developed first for the $\mdcut$ problem before being extended to other CSPs. In fact, speaking in broad strokes, CSPs form an infinite class of problems, in almost all contexts they tend to have a finite classification and in particular there is a finite set of algorithms that essentially cover the entire class. Thus any algorithm that works well for any CSP problem offers hope for the entire class; and the $\mdcut$ problem has proven to be the most suitable for algorithmic developments and insights.

\subsection{Streaming Algorithms for $\mdcut$} 

In the streaming model, the input graph $G$ is presented to the algorithm as a stream of edges, whose goal is to make one or passes over this stream and produce an $\alpha$-approximation using as little memory as possible. (In this paper, all algorithms are allowed to toss random coins and need to succeed with probability $99\%$ over the choice of random coins for every input.) It is easy to see that storing $O(n)$ randomly chosen edges suffices to produce a $\paren*{ 1- \epsilon }$-approximation, and this can be done in a single pass and $\tilde{O}(n)$ memory. However, a linear-sized memory is very often unaffordable and thus, research has mostly focused on investigating the power of sublinear algorithms for $\mdcut$. 

On this front, the work of \cite{KK19} showed that $\Omega(n)$ space is needed to achieve any approximation better than $1/2$ when the edges are ordered adversarially, even on bounded degree graphs. It is believed that the proof techniques also extend to randomly ordered streams although the best bound known is a memory lower bound of $\Omega(\sqrt{n})$ in the early work of \cite{KKS15}. Thus, the best approximation factor one can hope for with sublinear space is $1/2$ and here, the $\mdcut$ problem does admit many non-trivial results.

The first non-trivial streaming algorithm for approximating $\mdcut$ was due to \cite{GVV17} who gave a single-pass that uses logarithmic space and produces a $2/5$ approximation. Subsequently, \cite{CGV20} improved this to a $4/9$ approximation and also proved that a better approximation is impossible without using at least $\Omega(\sqrt{n})$ space if the edges are ordered adversarially. Subsequently, \cite{SSSV23-random-ordering,SSSV23-dicut} showed that even better approximations (of around $0.485$) can be produced if algorithms are allowed $\tilde{O}(\sqrt{n})$ space. Moreover, the space bound improves to logarithmic if the edges in the graph as assumed to arrive in a random order.

The above works, summarized in \cref{tab:algs}, have led to improvements for a wide class of constraint satisfaction problems. For example, \cite{CGSV24} extended the work of \cite{CGV20} mentioned above to cover all possible CSPs and \cite{Sin23-kand} showed that the algorithms in \cite{SSSV23-random-ordering,SSSV23-dicut} can also be extended to a richer class. This again demonstrates the significance of algorithms for $\mdcut$ when it comes to designing algorithms for general CSPs. However, the following question remains open: Can sublinear space streaming algorithms achieve $(1/2-\epsilon)$-approximations for the $\mdcut$ problem, for \emph{every} $\epsilon > 0$? While we do not resolve this question here, we show many special cases, in particular including most used/sufficient for existing lower bounds, do have such an approximation algorithm in sublinear space.

\begin{table}
    \centering
    \begin{tabular}{c|c|c|c|c|c}
         \textbf{Citation} & \textbf{Approx. factor} & \textbf{Input order} & \textbf{Space} & \textbf{Passes} & \textbf{Bounded-degree?} \\ \hline
         Folklore & $1-\epsilon$ & Adversarial & $\tilde{O}(n)$ & $1$ & No \\
         \cite{GVV17} & $2/5-\epsilon$ & Adversarial & $O(\log n)$ & $1$ & No \\
         \cite{CGV20} & $4/9-\epsilon$ & Adversarial & $O(\log n)$ & $1$ & No \\
         \cite{SSSV23-random-ordering} & $0.485$ & Adversarial & $\tilde{O}(\sqrt n)$ & $1$ & Yes \\
         \cite{SSSV23-random-ordering} & $0.485$ & Random & $O(\log n)$ & $1$ & No \\
         \cite{SSSV23-random-ordering} & $0.485$ & Adversarial & $O(\log n)$ & $2$ & No \\
         \cite{SSSV23-dicut} & $0.485$ & Adversarial & $\tilde{O}(\sqrt n)$ & $1$ & No \\
         This work & $1/2-\epsilon$ & Adversarial & $o(n)$ & $1$ & Yes \\
         This work & $1/2-\epsilon$ & Random & $O(\log n)$ & $1$ & Yes \\
         This work & $1/2-\epsilon$ & Adversarial & $O(\log n)$ & $O(1)$ & No
    \end{tabular}
    \caption{A table of known streaming algorithms for $\mdcut$. Some lower bounds (that hold even for bounded degree graphs) are also known: $(4/9+\epsilon)$-approximation in single-pass adversarial-ordering streams requires $\Omega(\sqrt n)$ space \cite{CGV20}; $(1/2+\epsilon)$-approximation in single-pass adversarial-ordering streams requires $\Omega(n)$ space \cite{KK19}; $(1/2+\epsilon)$-approximation in single-pass random-ordering streaming requires $\Omega(\sqrt n)$ space \cite{KKS15}; and $(1-\epsilon)$-approximation over adversarially-ordered streams requires either $n^{\Omega(1)}$ space or $\Omega(1/\epsilon)$ passes \cite{AN21}. We conjecture that the $(1/2+\epsilon)$ lower bounds can be simultaneously generalized, to show that $(1/2+\epsilon)$-approximation in single-pass random-ordering streaming requires $\Omega(n)$ space.}
    \label{tab:algs}
\end{table}

\subsection{Our Results}

As mentioned above, we achieve optimal $(1/2-\epsilon)$-approximation for the $\mdcut$ problem in various restricted settings. 

\paragraph{Bounded degree graphs with adversarially ordered edges.} The first variant we consider is when the input graph is assumed to have a bounded degree, i.e., the degree of any vertex is assumed to be at most a pre-specified constant. While the bounded degree setting does make it easier to effect algorithmic improvements, it also tends to be a strong predictor of general results, in that the algorithms can often be extended (possibly with many complications in algorithm design as well as analysis) to the general setting. As an example, the $\tilde{O}(\sqrt n)$-space $\mdcut$ algorithm of \cite{SSSV23-random-ordering} for bounded-degree graphs was extended to general graphs in \cite{SSSV23-dicut}. Conversely, as far as we are aware, all known streaming lower bounds for $\mdcut$ (i.e., \cite{KKS15,KK19}) are based on random sparse graphs where the maximum degree of any vertex is at most constant (or logarithmic). Together, these past algorithms and lower bounds motivate the search for algorithms in this setting. For this setting, we show that\footnote{Throughout, the $O(\cdot)$ and $\Omega(\cdot)$ notations hide constants depending on $\epsilon$ and, if present, $D$.}:

\begin{theorem}[Adversarial-order algorithm for bounded degree graphs]
\label{thm:single-pass:adversarial-order}
For every $D \in \BN$ and $\epsilon>0$, there is a streaming algorithm which $(1/2-\epsilon)$-approximates the $\mdcut$ value of an $n$-vertex graph with maximum degree at most $D$ in $O(n^{1-\Omega(1)})$ space using a single, adversarially-ordered pass over the list of edges.
\end{theorem}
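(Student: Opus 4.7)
My plan is to simulate in one pass a $(1/2-\epsilon/2)$-approximate constant-radius local algorithm $\Obl$ from \textcite{BFSS15,CLS17}, which labels each vertex $v$ as in/out of the cut based only on (a BFS tree of) its radius-$r$ neighborhood, for some $r=r(\epsilon,D)$. With maximum degree $D$, only finitely many isomorphism classes of rooted radius-$r$ balls arise; these form a constant-size set $\mathcal{T}$ of \emph{vertex types}, and analogously the two-rooted radius-$(r{+}1)$ balls of directed edges form a constant-size set $\mathcal{E}$ of \emph{edge types}. The expected $\Obl$-cut value equals $|E|\cdot\sum_{\tau\in\mathcal{E}} \hat f(\tau)\, p_\tau$, where $\hat f(\tau)\in[0,1]$ is a closed-form quantity depending only on $\Obl$ and $p_\tau$ is the fraction of directed edges of type $\tau$. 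Thus it suffices to estimate $|E|$ and $\{p_\tau\}_{\tau\in\mathcal{E}}$ to additive error $O(\epsilon/|\mathcal{E}|)$, in $n^{1-\Omega(1)}$ space, from a single adversarial pass.

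\textbf{Algorithm.} Let $\alpha=\alpha(\epsilon,D,r)>0$ be a small constant to be chosen. Pre-commit, via an $O(\log n)$-bit hash, to a uniformly random $V_0\subseteq V$ of expected size $s=n^{1-\alpha}$. During the pass, retain precisely those edges with both endpoints in $V_0$, together with an exact counter for $|E|$. Storing $V_0$ and the induced subgraph $G[V_0]$ costs $\tilde O(sD)=n^{1-\Omega(1)}$ space regardless of stream order.

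\textbf{Estimator.} Call an edge $e$ \emph{captured} if its entire radius-$(r{+}1)$ double-ball lies in $V_0$; then $G[V_0]$ determines its edge type exactly. Since the double-ball size $b_\tau$ depends only on $\tau$, the capture probability $q_\tau=\binom{n-b_\tau}{s-b_\tau}/\binom{n}{s}$ does too, so (number of captured edges of type $\tau$)$/q_\tau$ is an unbiased estimator of $|E|\,p_\tau$. Choosing $\alpha=\Theta(1/B)$ with $B:=\max_\tau b_\tau=O(D^{r+1})$ guarantees $q_\tau=n^{-\Theta(1)}$ and expected captured count $\Omega(np_\tau q_\tau)=n^{\Omega(1)}$ whenever $p_\tau=\Omega(\epsilon)$.

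\textbf{Main obstacle.} The hardest step I anticipate is concentration: the events $\{e\text{ captured}\}$ are \emph{not} independent, because overlapping double-balls couple them. I would handle this by a second-moment bound. Edges whose double-balls are disjoint have $\mathrm{Cov}(X_e,X_{e'})=O(q_\tau q_{\tau'}\cdot s/n)$ (a small correction from sampling without replacement), and bounded degree forces only $O(n)$ pairs of edges whose double-balls intersect, each contributing at most $q_\tau$ to the covariance. Summing yields $\mathrm{Var}\le O(nq_\tau)$, while the mean is $\Theta(np_\tau q_\tau)$, so Chebyshev gives relative error $O(1/(p_\tau\sqrt{nq_\tau}))=o(1)$ for the constantly many types with $p_\tau=\Omega(\epsilon)$; low-weight types contribute at most $O(\epsilon)$ additively to the objective and may be ignored. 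A union bound over $\mathcal{E}$ and substitution into the formula from the first paragraph delivers the $(1/2-\epsilon)$-approximation to $\maxval{G}$.
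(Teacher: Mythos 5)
Your overall plan — sample a sublinear random vertex set, store the induced subgraph, estimate the frequency of each constant-sized ``edge type,'' and plug those frequencies into the value formula of the local algorithm — is exactly the paper's strategy (Section 5, \cref{alg:adv-ord,alg:adv-ord:fixed-type}), including the second-moment/Chebyshev concentration argument and the ``types with small mass can be ignored'' simplification. However, there is one genuine and load-bearing gap.

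You write: ``Call an edge $e$ \emph{captured} if its entire radius-$(r{+}1)$ double-ball lies in $V_0$; then $G[V_0]$ determines its edge type exactly. \dots (number of captured edges of type $\tau$)/$q_\tau$ is an unbiased estimator.'' The problem is that your algorithm stores \emph{only} $G[V_0]$ and $|E|$, and from $G[V_0]$ alone the algorithm cannot tell \emph{which} edges are captured. If $v\in V_0$ has a neighbor $w\notin V_0$, then in $G[V_0]$ the vertex $v$ simply has a smaller degree than in $G$; there is no visible ``hole.'' So an uncaptured edge of true type $\tau'$ whose $G[V_0]$-ball happens to be isomorphic to some smaller type $\tau$ is indistinguishable from a captured type-$\tau$ edge. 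These false positives are not rare: each vertex in a double-ball is \emph{missing} with probability $1-p = 1-n^{-\Theta(1)}$, so the mis-typed mass can swamp the true captured mass, and the proposed estimator is badly biased — not merely non-independent. The concentration argument then does not save it, because the expectation itself is wrong.

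The paper's fix is to record, alongside $G[S]$, the \emph{true} in- and out-degree $\degin{G}{v},\degout{G}{v}$ for every sampled $v\in S$ (this still fits in $\tilde O(|S|)=\tilde O(n^{1-c})$ space). Then capture is a \emph{checkable} condition: by \cref{prop:prelim:induced-nbhd,prop:prelim:induced-db-nbhd}, $\ball{G}{\ell}{e}\subseteq S$ iff every $w\in\ball{G[S]}{\ell-1}{e}$ has $\deg{G[S]}{w}=\deg{G}{w}$, and the algorithm knows both sides. This is precisely \cref{line:adv-ord:check-degs} of \cref{alg:adv-ord:fixed-type}. With that check your unbiasedness claim becomes correct and your outline goes through. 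Two smaller issues you should also repair: (i) the local algorithm of \textcite{BFSS15,CLS17} is stated for \emph{colored} graphs and its recursion depth equals the number of colors, so you need the random-$k$-coloring preprocessing (\cref{prop:coloring}) — ``radius-$r$ neighborhood'' must mean colored neighborhood and $r=k=O(1/\epsilon)$, not a purely combinatorial radius depending on $D$; and (ii) a set specifiable by an $O(\log n)$-bit hash is only $O(1)$-wise independent, not uniform, so the capture probability is $p^{b_\tau}$ (which is what the paper uses, with $2a(T)$-wise independence to make the variance bound go through), not $\binom{n-b_\tau}{s-b_\tau}/\binom{n}{s}$.
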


\paragraph{Bounded degree graphs with randomly ordered edges.} The random order streaming model is now a central model in streaming literature, and understanding it is an important quest of its own~\cite[etc.]{KKS14,MMPS17,PS18,FHM+19,AB21,Ber23,JW23}. The difference between the random order model and the adversarially ordered model is that in the random order model, the edges of the input graph\footnote{The input graph is still worst-case, just that its edges are presented in a random order.} are presented in a uniformly random order to the streaming algorithm, and the algorithm has to do well on most orders. \cite{SSSV23-random-ordering} show that it is possible to produce an approximation of around $0.485$ of $\mdcut$ using only logarithmic space, thereby going past the $4/9$ lower bound that holds for adversarially ordered streams \cite{CGV20}. However, there is still a gap between the approximation guarantee of the best known algorithm (around $0.485$) and the best known lower bound of $1/2 + \epsilon$ shown in \cite{KKS15}, that also holds for bounded degree graphs. Our work here closes this gap for bounded degree graphs. 

\begin{theorem}[Random-order algorithm for bounded degree graphs]
\label{thm:single-pass:random-order}
For every $D \in \BN$ and $\epsilon>0$, there is a streaming algorithm which $(1/2-\epsilon)$-approximates the $\mdcut$ value of an $n$-vertex graph with maximum degree at most $D$ in $O(\log n)$ space using a single, randomly-ordered pass over the list of edges.
\end{theorem}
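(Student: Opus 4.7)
The plan is to combine the constant-radius local $(1/2-\epsilon)$-approximation algorithm of \textcite{BFSS15,CLS17} with random-order sampling primitives to estimate edge-neighborhood type frequencies in logarithmic space. Fix $\mathcal{A}$ to be the local algorithm of \textcite{BFSS15,CLS17} with neighborhood radius $r = r(\epsilon, D)$, chosen so that $\mathbb{E}[\val{\mathcal{A}}{G}] \geq (1/2 - \epsilon/3)\,\maxval{G}$ on every $D$-bounded-degree graph $G$. Let $\mathcal{T}$ denote the set of isomorphism classes of radius-$r$ rooted neighborhoods of a directed edge in $D$-bounded-degree graphs; by bounded degree, $T := \card{\mathcal{T}}$ is a constant depending only on $\epsilon$ and $D$. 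The expected cut value produced by $\mathcal{A}$ factorises as
\[
\mathbb{E}[\val{\mathcal{A}}{G}] \;=\; \sum_{\tau \in \mathcal{T}} q_\tau(G) \cdot c_\tau,
\]
where $q_\tau(G)$ is the fraction of directed edges in $G$ whose edge-neighborhood is of type $\tau$ and $c_\tau \in [0,1]$ is an explicitly computable constant depending only on $\tau$ and $\mathcal{A}$. Hence it suffices to estimate the distribution $(q_\tau)_{\tau \in \mathcal{T}}$ in total-variation distance to within $O(\epsilon)$.

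To do this in $O(\log n)$ space, I would collect $k = O(T^2/\epsilon^2)$ independent sampled edges $e_1,\dots,e_k$ from the stream, each annotated with its full radius-$r$ edge-neighborhood. By bounded degree, each such neighborhood contains at most $O(D^r) = O(1)$ edges and can be stored in $O(\log n)$ bits (dominated by the vertex labels), giving total space $O(\log n)$. To guarantee that the \emph{full} neighborhood of each $e_i$ is actually observable in the suffix of the stream, each $e_i$ is drawn from an \emph{early} window of the stream: using standard adaptive sampling techniques (e.g.\ doubling-guess reservoir sampling to handle unknown $m$), select $e_i$ uniformly at random from within the first $\epsilon m / (10 T D^r)$ stream positions, and then track every subsequent edge that lies in the radius-$r$ ball around~$e_i$.

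The main obstacle is proving that each sample's \emph{observed} radius-$r$ neighborhood coincides with its true radius-$r$ neighborhood with high probability; this is where the random ordering is used crucially. Conditional on $e_i$ appearing at stream position $\tau_i$, the positions of the $O(D^r)$ other edges of $e_i$'s radius-$r$ ball are distributed as a uniformly random subset of the remaining $m - 1$ stream slots, so by a union bound the per-sample failure probability is at most $O(D^r) \cdot \epsilon/(10 T D^r) = O(\epsilon/T)$. Aggregating via Chernoff--Hoeffding across the $k$ samples yields an empirical type distribution $\hat q$ with $\tvdist{\hat q}{q} \leq \epsilon/6$ with probability $\geq 0.99$. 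Outputting $\widehat{\mathsf{val}} := \sum_\tau \hat q_\tau \,c_\tau$ and using $|c_\tau| \leq 1$, we get $|\widehat{\mathsf{val}} - \mathbb{E}[\val{\mathcal{A}}{G}]| \leq \epsilon/3$, hence $\widehat{\mathsf{val}} \geq (1/2 - 2\epsilon/3)\,\maxval{G}$; rescaling $\epsilon$ delivers the claimed $(1/2 - \epsilon)$-approximation in $O(\log n)$ space.
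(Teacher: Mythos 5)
Your high-level framing — reduce to estimating the distribution over edge-neighborhood types, then plug into the \textcite{BFSS15,CLS17} local algorithm — matches the paper's structure. But there is a genuine gap in the core step, the claim that early-window sampling makes each sampled edge's observed radius-$r$ neighborhood equal its true neighborhood with probability $1-O(\epsilon/T)$.

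The problem is that the observed neighborhood must be built \emph{greedily} (you can't ``track every subsequent edge that lies in the radius-$r$ ball around $e_i$'' because the algorithm doesn't know that ball in advance — it has to discover it incrementally as edges reveal new reachable vertices). Greedy exploration can fail to capture the full neighborhood \emph{even when $e_i$ is the very first edge of its ball in the stream}. Concretely, with $E=\{(1,2),(2,3),(3,4)\}$, $e_i=(1,2)$, $r=2$, and the ordering $(1,2),(3,4),(2,3)$, the edge $(3,4)$ is never captured: when it arrives, vertex $3$ is not yet known to be reachable, because $(2,3)$ has not been seen. This failure has probability $\Theta(1)$ over a random ordering, independent of $m$ and of where $e_i$ sits in the stream. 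Your union bound only controls the event that neighbor edges arrive \emph{before} $e_i$; it says nothing about the ordering of neighbor edges arriving \emph{after} $e_i$, which is where the greedy BFS can drop edges. Consequently the empirical distribution of observed types is not $\epsilon$-close to the true type distribution in TV, and the claim that $\tvdist{\hat q}{q}\leq \epsilon/6$ does not follow.

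The paper handles this differently: it accepts that the observed (``visible'') type of a sampled edge is a random subgraph of the true type and makes no attempt to suppress false negatives. Instead it proves (\cref{prop:rand-ord:strm-random}) that the marginal distribution of an edge's visible type depends only on its true type, so the stream produces i.i.d.\ samples from $M\vecv$ where $\vecv$ is the true type-frequency vector and $M(T',T)=(\VisDist{T})(T')$. It then shows $M$ is invertible (upper-triangular with positive diagonal under a subgraph-respecting order, as in \cite{MMPS17}) and recovers $\vecv$ by solving the linear system. This matrix-inversion step is exactly the ingredient your proposal is missing: because the ``success probability'' of observing the full type is itself a type-dependent constant that the algorithm cannot directly determine, one must invert the whole confusion matrix rather than discard or bound away the false negatives. (One minor further point: your doubling-guess reservoir machinery is unnecessary in the random-order model — the first $t$ stream edges are already a uniform without-replacement sample, which is what the paper uses.)
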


\paragraph{Multi-pass algorithms.} Finally, we consider the setting where the streaming algorithm is allowed to make multiples passes over the (adversarially ordered) input stream. The multi-pass setting has also been widely studied~\cite{GM08,DFR10,KMM12,GO16,KT17,Ass17,BC17,BCK+18,ACK19,AKSY20,CKP+21b,CKP+21a,CKP+23,BGL+24} in the streaming literature. In the context of $\mdcut$, the most relevant work is that of \cite{SSSV23-random-ordering} that produces an approximation of around $0.485$ using only two passes and logarithmic space, thereby surpassing the $4/9$ lower bound that holds for single-pass algorithms \cite{CGV20}. We show that a constant number of additional passes can push the approximation factor arbitrarily close to $1/2$. It is believed that, for the related problem of $\mcut$ (which implies it\footnote{An algorithm for solving $\mdcut$ implies an algorithm for $\mcut$, simply replace every edge by two edges, one in each direction.} for $\mdcut$), even algorithms with logarithmically many passes need polynomial space to produce an approximation larger that $1/2$ \cite{CKP+23}.

\begin{theorem}[Multi-pass algorithm]
\label{thm:multi-pass}
For every $\epsilon>0$, there is a streaming algorithm which $(1/2-\epsilon)$-approximates the $\mdcut$ value of an arbitrary $n$-vertex graph in $O(\log n)$ space using $O(1/\epsilon)$ adversarially-ordered passes over the list of edges.
\end{theorem}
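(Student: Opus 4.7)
The plan builds on the local algorithm of \cite{BFSS15,CLS17} (also underlying \cref{thm:single-pass:adversarial-order} and \cref{thm:single-pass:random-order}): for some constant depth $d = d(\epsilon)$, each vertex's (randomized) assignment depends only on its depth-$d$ neighborhood, and the resulting directed cut is at least $(1/2-\epsilon)\maxval{G}$ in expectation. My plan combines two main ideas: using $O(d)$ passes to explore vertex neighborhoods layer by layer against an adversarial stream (playing the role that randomness in the edge order plays in the single-pass \cref{thm:single-pass:random-order}), and subsampling high-degree neighborhoods to handle arbitrary (not only bounded-degree) graphs.

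In the opening pass, I use reservoir sampling to collect $k = k(\epsilon)$ uniformly random edges and simultaneously count the number of edges $m$, all in $O(\log n)$ space. Picking a uniformly random endpoint of each sampled edge yields degree-weighted random ``seed'' vertices at which the local algorithm will be evaluated. In each of the subsequent $d$ passes, I grow the sampled neighborhood of each seed from depth $i$ to depth $i+1$ via another instance of reservoir sampling: for each vertex $u$ currently on the frontier, I keep a random size-$B$ subset of its out-edges and in-edges, where $B = B(\epsilon)$ is a constant. After $d$ such expansion passes, I hold a subsampled depth-$d$ neighborhood of each seed of total size at most $(2B)^d = O(1)$, so the total space remains $O(\log n)$.

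Given these subsampled neighborhoods, I run the local algorithm on each seed vertex to compute its assignment. To convert this into an estimate of $\maxval{G}$, I associate to each seed edge $(u,v)$ the indicator that $(u,v)$ is cut under the computed assignments to $u$ and $v$, average these indicators over the $k$ seed edges, and rescale by $m$. A Chernoff bound ensures that with $k$ chosen large enough in terms of $\epsilon$, the estimator is within an additive $\epsilon \cdot \maxval{G}$ of the directed cut value that the local algorithm would produce on all of $G$, yielding the $(1/2-\epsilon)$-approximation. The total pass count is $1 + d = O(1/\epsilon)$.

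The main obstacle is proving that the local algorithm still achieves a $(1/2-\epsilon)$-approximation when fed \emph{subsampled} depth-$d$ neighborhoods rather than the true ones, since in an unbounded-degree graph the true neighborhood of a seed can be enormous and is never fully observed. Intuitively, the algorithm's output at $v$ depends only on statistical features of $v$'s neighborhood (such as degrees and out/in balances at each depth), and these features concentrate under independent random subsampling when $B$ is large. Formalizing this requires an inductive comparison between the true distribution of depth-$d$ neighborhood types and the distribution of subsampled types, using a continuity property of the local algorithm's output as a function of the type. I expect to reuse and adapt the type-frequency estimation tools developed for the bounded-degree results, and anticipate that the bulk of the technical work lies in this robustness-to-subsampling argument.
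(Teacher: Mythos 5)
Your high-level plan matches the paper's: sample a few random edges in an opening pass, grow constant-size subsampled neighborhoods of their endpoints over $O(1/\epsilon)$ additional passes, evaluate the \cite{BFSS15,CLS17} local algorithm on those subsampled neighborhoods, and average. However, you correctly identify the crux — ``proving that the local algorithm still achieves a $(1/2-\epsilon)$-approximation when fed subsampled depth-$d$ neighborhoods'' — and then stop short of the key observation that this is \emph{false} for the local algorithm as originally formulated. The continuity property you hope to invoke does not hold: in \cref{algo:cls17} (with $\alpha=0$), the output $\aPos(v)$ is a clamped ratio whose denominator is $|E_{\dirIn,\dirHigh}(v)|+|E_{\dirOut,\dirHigh}(v)|$, while the additive error incurred when estimating $\vZ{\dirIn}{v}$ and $\vZ{\dirOut}{v}$ from a constant-size sample of lower-colored neighbors scales with $|E_{\dirIn,\dirLow}(v)|+|E_{\dirOut,\dirLow}(v)|$. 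These two quantities can be wildly incomparable (e.g., a vertex with a huge number of lower-colored edges but very few higher-colored ones), so the Lipschitz constant of the clamp is effectively unbounded and the error in $\aPos(v)$ can be as large as $1$ even with a very large sample size. Your concentration/robustness argument would break at this point. The paper resolves this by \emph{modifying} the local algorithm: replacing $|E_{\dirIn,\dirHigh}(v)|$ by $\vY{\dirIn}{\alpha}{v}=\max(|E_{\dirIn,\dirHigh}(v)|,\alpha|E_{\dirIn,\dirLow}(v)|)$ (and similarly for the out side) with $\alpha=\epsilon^5$, which forces the denominator to be at least $\alpha(|E_{\dirIn,\dirLow}(v)|+|E_{\dirOut,\dirLow}(v)|)$ and makes the clamp genuinely Lipschitz at the error scale. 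Crucially, this modification perturbs the cut value, so the paper must re-prove that the modified local algorithm is still a $(\frac12-\alpha)$-approximation (\cref{lemma:multipass:cls17}); that proof is new and not implied by \cite{BFSS15,CLS17}. Without identifying and doing both pieces of work (the $\alpha$-modification and the new half-approximation proof for $\alpha>0$), your proposal has a genuine gap.

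A second, smaller mismatch: you say you plan to ``reuse and adapt the type-frequency estimation tools developed for the bounded-degree results.'' The paper deliberately abandons that approach in the unbounded-degree multi-pass setting, because the set of neighborhood types is no longer of constant size (indeed, a single vertex's neighborhood can require $\omega(\log n)$ bits to describe). The paper's analysis never touches type distributions; instead it directly bounds, by induction on color class, the probability that the randomized estimate $\aPosEst(v)$ deviates from $\aPos(v)$ by more than $\delta_{\chi(v)}$ (\cref{lemma:multipass:cls17estimate}), with the error parameters $\delta_a$ growing geometrically up the color hierarchy and tamed by the choice $\alpha=\epsilon^5$. Your subsampling strategy (a size-$B$ subset of all incident edges per frontier vertex, without replacement) is also slightly different from the paper's (which samples $D$ lower-colored neighbors with replacement, since only lower-colored neighbors enter the recursion and only their count matters for the higher-colored side); this difference is minor and your version could likely be made to work, but it adds complication without benefit.
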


\section{Our techniques}

Our algorithms in the streaming setting build on a \emph{local computation} algorithm from the work of \cite{BFSS15,CLS17}, and we begin by discussing this algorithm. This algorithm is based on the submodularity of the $\dcut$ predicate and is local in the following precise sense. Suppose $G$ is a $k$-colored graph. Every vertex has a \emph{(colored) $k$-neighborhood}, which contains the set of all vertices reachable from $v$ by undirected paths of length at most $k$, the induced subgraph of $G$ on these vertices, and these vertices' colors. Then each vertex $v$ deterministically receives a fractional assignment $x_v \in [0,1]$ depending only on its $k$-neighborhood, and this assignment is guaranteed to have value at least $\frac12$ of the optimal value, i.e.,
\begin{equation}\label{eq:half-approx}
    \frac1m \sum_{(u,v) \in E} x_u (1-x_v) \geq \frac12 \maxval{G}.
\end{equation}
Moreover, $x_v$ depends only on the \emph{isomorphism class} of the ``colored neighborhood'' of $v$, i.e., it is invariant to relabeling vertices.\footnote{We mention a different approach to local $(1/2-\epsilon)$-approximations for $\mdcut$: Kuhn, Moscibroda, and Wattenhofer~\cite{KMW06} developed a local algorithm for $(1-\epsilon)$-approximating the value of packing-and-covering linear programs (see also the thesis of Kuhn~\cite{Kuh05}), and there is a well-known linear programming relaxation of $\mdcut$ (see e.g. \cite{Tre98-alg}) that is a packing-and-covering LP and is half-integral. Combining them could give a local algorithm for $(1/2-\epsilon)$-approximating $\mdcut$, although we do not investigate this and make no formal claim in this regard.}

The main idea behind our results is to import the \cite{BFSS15,CLS17} algorithm for $\mdcut$ into the streaming paradigm to get $(1/2-\epsilon)$-approximations. (A similar theme of translating local algorithms to sublinear algorithms was explored in the work of \cite{PR07} for problems like minimum vertex cover.) We can essentially assume without loss of generality that our input graphs are $k$-colored for some large constant $k = O(1/\epsilon)$ (using a random $k$-coloring and discarding improperly colored edges). Given this coloring, we aim to \emph{sample a random edge $(u,v)$ in $G$ and output the $k$-neighborhoods of its endpoints}, since these $k$-neighborhoods determine $x_u$ and $x_v$ and therefore $x_u(1-x_v)$; we then use multiple random samples to estimate the quantity on the right-hand side of \Cref{eq:half-approx} and produce a $1/2$-approximation.

We now explain the high-level ideas behind the algorithms in \cref{thm:single-pass:adversarial-order,thm:single-pass:random-order,thm:multi-pass}. Our first two algorithms use the \cite{BFSS15,CLS17} algorithm as a black box, while the third relies on building a certain ``robust'' version of their algorithm.

\paragraph{Bounded degree graphs with adversarially ordered edges (\cref{thm:single-pass:adversarial-order}).} We define the $k$-neighborhood of an \emph{edge} $e$ as the induced subgraph on the set of vertices of distance at most $k$ from either endpoint of $e$ (i.e., as the induced subgraph on the union of the $k$-neighborhoods of its endpoints). The \emph{type} of an edge is the isomorphism class of this neighborhood.

The starting point of this algorithm is the observation that, in the bounded-degree setting, each vertex has only a constant number of other vertices in its constant-radius neighborhood. Thus, an edge has only a constant number of possible types and we can aim to estimate the distribution of the type of a random edge via a streaming algorithm.

Our procedure looks roughly as follows. Before the stream, we sample a large (but $o(n)$-sized) set $S$ of vertices. Then, we use our adversarial-order pass to record two kinds of information about the input graph:
\begin{enumerate}
    \item $G[S]$, the induced subgraph of $G$ on the vertices $S$.
    \item $\{\degin{G}{v},\degout{G}{v}\}_{v \in S}$, the in-degree and out-degree of each vertex in $S$.
\end{enumerate}

At this point, it is crucial to distinguish between two notions of an edge's neighborhood: An edge $(u,v) \in E$ has a \emph{true} neighborhood, i.e., its actual $k$-neighborhood in $G$, as well as what we call its \emph{induced} neighborhood, i.e., its neighborhood in the induced subgraph $G[S]$. This neighborhood should be thought of as the subset of the true neighborhood which the streaming algorithm ``sees''. Importantly, the true neighborhoods are fixed and depend only on $G$, while the induced neighborhoods depend also on the set $S$ and therefore on the streaming algorithm's randomness.

The induced neighborhood of an edge might be a strict subset of the true neighborhood: For instance, if a vertex $v$ is in $S$, only some of its neighbors in $G$ might also in $S$. This is where the second kind of information the algorithm records --- the global in- and out-degrees of every vertex in $S$ --- lets us \emph{reject} cases where the induced type does not match the true type. Indeed, we can tell whether a edge $e = (u,v)$'s induced neighborhood is its true neighborhood (equivalently, whether all vertices in $e$'s $k$-neighborhood were included in $S$) by checking whether $u$'s ``induced degree'' (degree in $G[S]$) matches its ``true degree'' (degree in $G$) and the same for $v$, and so on for their neighbors and their neighbors recursively out to depth $k$. 

To analyze this algorithm, imagine performing a ``diagnostic test'' on the (multi)set of edges in the graph in order to estimate the number of type-$T$ edges. Our test on an edge $e$ is: Is $e$'s induced neighborhood is isomorphic to $T$ \underline{and} does $e$'s induced neighborhood match its true neighborhood? This test always rejects non-type-$T$ edges, but also often rejects type-$T$ edges. To estimate the actual number of type-$T$ edges, we need to normalize by the probability that a type-$T$ edge is accepted. (In statistical parlance, the test has no false positives, but does have false negatives, and we normalize by the false negative rate, which is called the \emph{sensitivity}.) This probability is the same for every type-$T$ edge: If $T$ contains $a$ vertices and $S$ contains each vertex independently with probability $p$, this probability is simply $p^a$. So, we take $p \sim n^{-1/a}$ for this probability to be a constant, which is also enough to get good concentration, and the entire algorithm then takes roughly $\tilde{O}(n^{1-1/a})$ space. Similar (but less general) tools were also present in \cite{SSSV23-random-ordering}.

Before continuing to the other results, we state a subtlety that we overlooked in our description above. While our algorithm assumes that the input graph has bounded maximum degree, and therefore at most linearly many edges, the graph could actually have sublinearly many edges. In this case, the graph could have many isolated vertices, and to compensate for this, $S$ needs to contain each \emph{non}isolated vertex with higher probability than $n^{-1/a}$ (it needs to scale with $m$). If the graph has this issue, we sample $S$ to be larger than $n^{1-1/a}$ initially. A priori this might hurt the space bound, but we observe that we actually expend no space for storing isolated vertices in $S$ (they have no incident edges in the induced subgraph $G[S]$!). (To avoid having to write down the description of $S$ itself, we sample it as an $O(1)$-wise independent set of vertices instead of a fully independent set of vertices.) For simplicity, we continue to ignore this subtlety in this overview.

\paragraph{Bounded degree graphs with randomly ordered edges (\cref{thm:single-pass:random-order}).} This algorithm builds on the foregoing one. Just as before, we still assume that the maximum degree of the vertices is bounded, and we can still partition the vertices into a constant number of types and try to compute the frequency of each type. The main challenge is that we are now trying to do this in logarithmic (instead of simply sublinear) space using the fact that the edges are presented randomly.\footnote{Recall that the \cite{CGV20} lower bound implies that getting logarithmic space is impossible if the edges are presented in an adversarial order.}

This means that we can no longer sample a set of vertices of size $O(n^{1-\Omega(1)})$ and consider the induced subgraph. Instead, we sample logarithmically many vertices and ``build'' their neighborhoods as the stream passes. For example, if we are building the neighborhood of vertex $u$ (say), we start looking for edges incident to vertex $i$ and add them to the neighborhood. As soon as we find such an edge $(u, v)$ (say), we additionally start looking (recursively) for edges with vertex $v$ until we reach our fixed (constant) depth $k$. We hope to explore the entire neighborhood of vertex $u$ and thereby compute its type.

We again distinguish between the \emph{true} neighborhood of a vertex (its neighborhood in $G$) and a subset of the neighborhood, which we call the \emph{visible} neighborhood, which is now the result of this greedy neighborhood-building procedure. As in the induced neighborhood from the adversarial-ordering case, this new notion of the visible neighborhood of a vertex is a random subset of its true neighborhood and should be interpreted as the part of the true neighborhood that the streaming algorithm ``sees''. However, one important difference is that in the adversarial-ordering case, the induced neighborhoods depended on the algorithm's randomness (in choosing the set $S$), while in this new random-ordering case, the visible neighborhoods now depend instead on the randomness of the \emph{stream}; indeed, the algorithm itself is actually deterministic.

Similarly to induced neighborhoods in the adversarial-ordering case, the visible neighborhood of a vertex in our random-ordering algorithm is often only a strict subset of its true neighborhood. For instance, suppose our graph has the edges $(1,2),(1,3),(3,4)$, and we are building the neighborhood of vertex~$1$. If $(3,4)$ occurs before $(1,3)$ in the stream, it will ``slip by'' and we will not include it in the neighborhood. If we're looking for vertices of fixed neighborhood type $T$, we will therefore get ``false negatives'' again: Vertices which do have true type $T$ but whose visible types are strict subsets of $T$. In the adversarial-ordering algorithm described above, we also had global $G$-degree information which allowed us to directly reject these false negatives. The key challenge in the random ordering setting is that the algorithm no longer has information allowing us to perform such rejections.

But it is not necessary to directly reject false negatives. Instead, we can account for their effect indirectly, in the following sense. Each vertex $v$ in $G$ can produce a variety of possible visible types depending on the randomness of the stream. The key insight is that the marginal distribution of $v$'s visible type depends only on the true type of $v$! So, while we do not directly know $v$'s true type, we do get a sample from a distribution depending only on $v$'s true type. Thus, we can use Bayes' rule to estimate the frequencies of true types in $G$ from the frequencies of visible types in our sample. (The astute reader may notice that vertices' visible types might not be independent. However, a set of vertices whose $k$-neighborhoods are all mutually disjoint will indeed have mutually independent visible types, and in a bounded-degree graph, a random set of vertices is likely to be spread out in this sense.) We mention that similar tools for measuring distributions were also used in \cite{MMPS17}. (In fact, \cite{MMPS17} contains a black-box algorithm for estimating the neighborhood-type distribution of random vertices. But as mentioned above, our actual algorithm needs to understand neighborhoods of \emph{edges}, not vertices. Our main technical contribution here is thus a modification of the \cite{MMPS17} algorithm to this setting.)

\paragraph{Remarks on the single-pass algorithms.}
\begin{itemize}
    \item Our single-pass streaming algorithms measure `strong'  information about the input graph: The distribution of the isomorphism type of the induced subgraph on the $k$-neighborhood around a random vertex --- or, rather, since we need information about edge types, the induced subgraph on the union of $k$-neighborhoods of the endpoints of a random edge. However, we emphasize that the local algorithm only uses much `weaker' information about a random edge $(u,v)$; for instance, it does not need to know how the ball around $u$ intersects with the ball on $v$, nor does it need to know about neighbors $w$ of $u$ or $v$ whose color is larger than $u$'s and $v$'s colors.
    \item The single-pass streaming algorithms rely crucially on assuming that the maximum degree of each vertex is bounded. Indeed, the premise of both algorithms is to somehow measure the distribution of ``true types'' in the graphs, meaning the distribution of the isomorphism class of the $k$-neighborhood of a uniformly random vertex (or, actually, of a random edge). In an unbounded-degree graph, this distribution may not even have constant-sized support (and therefore may not be outputtable by a low-memory algorithm)! Note also that the adversarial-ordering algorithm uses $\sim n^{1-1/a}$ space to measure the frequency of a type $T$ with $a$ vertices, and if $a$ were polynomial is $n$, then the algorithm uses linear space.
\end{itemize}

\paragraph{Multi-pass algorithms (\cref{thm:multi-pass}).} If a graph is promised to be bounded degree, then there is a simple and deterministic way to measure the ``true types'' of vertices using logarithmic space and constantly many passes: Given a starting vertex $v$, use the first pass to query the neighbors of $v$; the second pass to query these neighbors' neighbors, and so on. This procedure is guaranteed to exactly compute the true type of $v$ and uses only logarithmic space if the graph has bounded-degree.

But our multi-pass algorithm avoids making a bounded-degree assumption about the input graph. Thus, its flavor is quite different from the previous two single-pass algorithms, because unlike those algorithms, we can no longer afford to store the entire true types of vertices, which may be arbitrarily large. In particular, we can no longer rely on the \cite{BFSS15,CLS17} algorithm as a black box.

Recall that the \cite{BFSS15,CLS17} algorithm uses the entire $k$-neighborhood of a vertex $v$ to produce a fractional assignment $x_v \in [0,1]$ for $v$. Informally, we ``robustify'' the \cite{BFSS15,CLS17} algorithm to produce an estimated fractional assignment for $v$ based on \emph{random subsampling} of its $k$-neighborhood. We show that this estimate is likely close to $x_v$. The subsampled graph is bounded-degree (in fact, $O(1)$-regular) and therefore our algorithm uses only logarithmic space.

We now give a brief overview of the \cite{BFSS15,CLS17} and our modification. To compute the (fractional) assignment $x_v \in [0,1]$ for a vertex $v$ in $G$, the \cite{BFSS15,CLS17} algorithm only uses a few quantities:
\begin{enumerate}
    \item The sum of fractional assignments $x_v$ for \emph{lower}-colored neighbors of $v$.
    \item Simple degree statistics of $v$: how many in- and out-edges in $G$ does $v$ have to lower- and higher-colored neighbors. (In the base case, color-$1$ vertices, the assignment depends only on these statistics.)
\end{enumerate}
Note that only the first item involves recursive applications of the algorithm. Also, the recursion only has depth $k$, since we recurse only on lower-colored neighbors and there are $k$ total colors.

Our robust local version of the \cite{BFSS15,CLS17} algorithm should be interpreted as a random truncation/pruning of that algorithm's recursion tree so that every vertex makes a constant number of recursive calls. Indeed, we \emph{estimate} the sum in the first item above by randomly sampling a constant-sized subset of $v$'s lower-color neighbors and only recursing on these neighbors.

To make this more precise, our robust local algorithm looks like the following. For every vertex $v \in G$, we recursively define a distribution $Y_v$. To sample from $Y_v$:
\begin{enumerate}
    \item Sample $D$ random and independent lower-color neighbors $u_1,\ldots,u_D$ of $v$, where $D$ is a large constant depending on $\epsilon$. (Note: For convenience, the neighbors $u_1,\ldots,u_D$ are sampled with replacement. So even if $v$ has lower degree than $D$, we still sample $D$ neighbors.)
    \item For each $i \in [D]$, sample an estimate $y_i \sim Y_{u_i}$. (If $u_i = u_j$, $y_i$ and $y_j$ are still sampled independently.)
    \item Use the sum $\sum_{i=1}^D y_i$ together with the degree statistics to compute an estimate of $x_v$ via a similar procedure to the \cite{BFSS15,CLS17} algorithm.
\end{enumerate}
Note that each $y_i$ can deviate from $x_{u_i}$; our new estimate for $x_v$ combines these $y_i$'s and may deviate further from $x_v$ if the errors compound in the right way. Further, we must track the error probabilities, since each of the $y_i$'s might deviate too much from $x_{u_i}$, and the sample $u_1,\ldots,u_D$ itself might not be representative.

The heart of our multi-pass algorithm is an analysis which manages these compounding errors. Once this analysis is complete, the resulting robust local algorithm is simple to implement in the streaming setting with $O(k)$ passes: In each pass, we start with a ``layer'' of vertices, measure their degree statistics, and sample $D$ random neighbors for each, which in turn form the next layer.

\subsection{Future directions}

The most immediate future direction is to try to extend our single-pass bounded-degree algorithms (\cref{thm:single-pass:adversarial-order,thm:single-pass:random-order}) to the setting of general graphs (i.e., without the bounded-degree assumption). A natural starting point would be the machinery we develop for the \emph{multi-}pass algorithm (\cref{thm:multi-pass}) which eliminates the bounded-degree assumption there. However, there appear to be significant technical challenges, fundamentally because the number of underlying isomorphism classes of vertices' neighborhoods no longer has constant size. We essentially get around this in the multi-pass case by ``randomly truncating'' the neighborhoods of high-degree vertices, so that we recurse only a random, constant-sized sample of their full neighborhoods. It is not clear how to combine this technique with the mechanisms we develop in the single-pass setting. 

Another interesting question is whether the algorithms developed in this paper could be adapted into \emph{quantum} streaming algorithms, just as recent work of \cite{KPV24} adapted the algorithm of \cite{SSSV23-dicut}. Finally, it would be very interesting to design local algorithms for other CSPs aside from $\mdcut$ (such as $\mkand$) and to generalize our streaming results to these problems. 

\subsection*{Outline}

In \cref{sec:prelim}, we write some notations for and basic facts about multisets, graphs, probability, and total variation distance which we employ in the paper. In \cref{sec:nbr-types}, we develop a notion of the ``neighborhood type'' of an edge and state a key connection (\cref{thm:types-to-dicut}) between the neighborhood type of a uniformly random edge in a graph and approximations of the $\mdcut$ value of the same graph. We use this connection in \cref{sec:adv-ord,sec:rand-ord} to design algorithms for the single-pass unbounded-degree adversarial-ordering and random-ordering settings, thereby proving \cref{thm:single-pass:adversarial-order,thm:single-pass:random-order}, respectively. These algorithms are both based on implementing estimators for the ``edge neighborhood-type distribution'' defined in the previous section. Finally, in \cref{sec:multipass}, we describe and analyze our multipass algorithm (\cref{thm:multi-pass}), which avoids making any random-ordering assumptions.


\section{Preliminaries}\label{sec:prelim}

For $k \in \BN$, $[k]$ denotes the natural numbers between $1$ and $k$ inclusive. Given a function $f : S \to T$ and a subset $U \subseteq S$, $f\vert_U:U \to T$ denotes the restriction of $f$ to $U$.

For a finite set $S$, $\Dist{S}$ denotes the set of all probability distributions over $S$; for $\CD \in \Dist{S}$, $\CD(S)$ denotes $\Pr_{s' \sim \CD}[s'=s]$.

\subsection{Multisets}

A \emph{multiset} may contain multiple (finitely many) copies of elements; for a multiset $S$, the \emph{multiplicity} of $s$, denoted $\mult{S}{s}$, is the number of copies of $s$ in $S$; $\toset{S}$ is the conversion of $S$ to a set (by forgetting the multiplicity information); and $|S| = \sum_{s \in \toset{S}} \mult{S}{s}$ is the total number of elements of $S$.\footnote{A multiset $S$ is formally a pair $(\toset{S},\mult{S}{\cdot} : \toset{S} \to \BN)$.} If $S$ and $T$ are multisets, we write $S \subseteq T$ to denote that with $\toset{S} \subseteq \toset{T}$ and $\mult{S}{s} \leq \mult{T}{s}$ for all $s \in \toset{S}$. If $S$ is a multiset, then $\Unif{S} \in \Dist{\toset{S}}$ is the distribution over $\toset{S}$ which takes value $s \in \toset{S}$ with $\frac1{|S|} \mult{S}{s}$.

For a multiset $S$, $\Orderings{S}$ denotes the set of \emph{orderings} on $S$, i.e., functions $\sigma : [|S|] \to \toset{S}$ such that $|\{i \in [n] : \sigma(i) = s\}| = \mult{S}{s}$ for every $s \in \toset{S}$.

\subsection{Graphs}

In this paper, a \emph{graph} is a \emph{directed} \emph{multi}graph, i.e., $G = (V,E)$ for a finite set of \emph{vertices} $V$ and a \emph{multiset} $E \subset V \times V \setminus \{(v,v) : v \in V\}$ of \emph{edges}. For an edge $e = (u,v)$, we let $\verts{e} := \{u,v\}$ denote the set of $e$'s endpoints.

Let $G = (V,E)$ be a graph. The \emph{in-degree} and \emph{out-degree} of $v$ are \[ \degin{G}{v} := \sum_{u \in V} \mult{G}{u,v} \hspace{1.5cm} \degout{G}{v} := \sum_{u \in V} \mult{G}{v,u}, \] respectively, and the \emph{total degree} (or just \emph{degree}) of $v$ is \[ \deg{G}{v} := \sum_{u \in V} (\mult{G}{u,v}+\mult{G}{v,u}) = \degout{G}{v} + \degin{G}{v}. \] The \emph{maximum degree} of $G$ is the maximum degree of any vertex. $G$ is \emph{$D$-bounded} if its maximum degree is at most $D$.

Let $G = (V,E)$ be a graph and let $x : V \to \{0,1\}$ be a labeling of $G$'s vertices by Boolean values. The \emph{$\mdcut$ value} of $x$ on $G$ is 
\begin{equation}
\label{eq:val}
\val{G}{x} := \frac1{|E|} \sum_{(u,v) \in E} x(u) (1-x(v)) ,
\end{equation}
(the sum is counted with multiplicity). (An edge $(u,v)$ is \emph{satisfied} by $x$ if $x(u) (1-x(v)) = 1$, i.e., if $x(u) = 1$ and $x(v) = 0$. In this sense, the $\mathsf{DICUT}$ value is the fraction of satisfied edges.)

Further, we use \cref{eq:val} to define $\val{G}{x}$ for ``fractional'' assignments $x : V \to [0,1]$ in the natural way. Observe that $\val{G}{x}$ equals the expected value of the Boolean assignment that assigns $v$ to $1$ w.p. $x(v)$ and $0$ w.p. $1-x(v)$. Hence by averaging:

\begin{proposition}\label{eq:prelim:rounding}
    Let $G = (V,E)$ be any graph and $x : V \to [0,1]$ any fractional assignment. Then there exist Boolean assignments $y, z : V \to \{0,1\}$ such that \[ \val{G}{y} \leq \val{G}{x} \leq \val{G}{z}. \]
\end{proposition}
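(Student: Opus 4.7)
The plan is to prove the proposition by a direct probabilistic (first-moment / averaging) argument, exactly as the remark preceding the statement suggests. I will define a random Boolean assignment whose expected $\mdcut$ value equals $\val{G}{x}$, and then extract deterministic rounded assignments on either side of this expectation.

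Concretely, I would introduce a random function $X : V \to \{0,1\}$ whose values $\{X(v)\}_{v \in V}$ are mutually independent with $X(v) \sim \Bern(x(v))$ (so $\Pr[X(v) = 1] = x(v)$ and $\Pr[X(v) = 0] = 1 - x(v)$). Since $(u,v) \in E$ is never a self-loop, $X(u)$ and $X(v)$ are independent, so linearity of expectation (applied to the definition of $\val{G}{\cdot}$ in \cref{eq:val}) gives
\[
\E[\val{G}{X}] \;=\; \frac{1}{|E|} \sum_{(u,v) \in E} \E[X(u)]\,\E[1 - X(v)] \;=\; \frac{1}{|E|} \sum_{(u,v) \in E} x(u)(1-x(v)) \;=\; \val{G}{x}.
\]

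The averaging step is then immediate: since the random variable $\val{G}{X}$ has expectation equal to $\val{G}{x}$, there must exist at least one outcome of $X$ (call it $y$) with $\val{G}{y} \leq \val{G}{x}$ and at least one outcome (call it $z$) with $\val{G}{z} \geq \val{G}{x}$. These $y, z \in \{0,1\}^V$ are the Boolean assignments claimed by the proposition. There is no obstacle here worth flagging — the only mildly subtle point is to observe that the sum ranges over pairs $(u,v)$ with $u \neq v$ (guaranteed by the convention that graphs have no self-loops, so that $X(u)$ and $X(v)$ are genuinely independent), which makes the factorization of the expectation valid.
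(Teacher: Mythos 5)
Your proof is correct and follows exactly the paper's approach: the paper's remark immediately before the proposition states that $\val{G}{x}$ equals the expected $\mdcut$ value of the independent Bernoulli rounding of $x$, and then derives $y$ and $z$ by averaging. Your write-up fills in the same computation (including the observation about no self-loops making $X(u),X(v)$ independent) and draws the same conclusion.
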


The \emph{$\mdcut$ value} of $G$, without respect to a specific assignment $x$, is 
\begin{equation}
\label{eq:maxval}
\maxval{G} := \max_{x : V \to \{0,1\}} \val{G}{x} .
\end{equation}

Let $G = (V,E)$. The \emph{induced subgraph} of $G$ on a subset of vertices $U \subseteq V$ is the graph $G[U] = (U,E[U])$ where $E[U]$ is the subset of edges in $E$ with both endpoints in $U$ (with multiplicity). We'll need the following simple fact about induced subgraphs:

\begin{proposition}\label{prop:prelim:doubly-induced-subgph}
    Let $G = (V,E)$, $U \subseteq V$, and $W \subseteq U$. Then $(G[U])[W] = G[W]$.
\end{proposition}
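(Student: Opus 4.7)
The plan is to prove this by simply unfolding the definition of induced subgraph twice and then using $W \subseteq U$ to collapse the conditions. Both $(G[U])[W]$ and $G[W]$ are clearly graphs on the vertex set $W$ (since $W \subseteq U \subseteq V$), so the only thing to check is that the edge multisets agree.

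First I would note that by the definition of induced subgraph given just above the proposition, $G[W] = (W, E[W])$ where $E[W]$ is the sub-multiset of $E$ consisting of edges $(u,v) \in E$ with $u,v \in W$ (counted with their multiplicity in $E$). Similarly, $G[U] = (U, E[U])$ where $E[U]$ is the sub-multiset of $E$ of edges with both endpoints in $U$, and then $(G[U])[W] = (W, (E[U])[W])$, where $(E[U])[W]$ consists of edges in the multiset $E[U]$ whose endpoints both lie in $W$, with appropriate multiplicities.

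The key step is to observe that for any edge $(u,v) \in E$, the conditions ``$(u,v) \in E[U]$ and $u,v \in W$'' and ``$(u,v) \in E[W]$'' are equivalent: the former says $u,v \in U$ and $u,v \in W$, which is equivalent to $u,v \in W$ alone since $W \subseteq U$. Moreover, this equivalence respects multiplicities, since the multiplicity of $(u,v)$ in both $E[W]$ and in $(E[U])[W]$ equals its multiplicity in $E$ whenever $u,v \in W$, and is zero otherwise. Hence $(E[U])[W] = E[W]$ as multisets, and therefore $(G[U])[W] = G[W]$.

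There is no real obstacle here; the statement is essentially a bookkeeping check to confirm that the multiset-based definition of induced subgraph behaves as expected under iteration, which will be used later when relating neighborhoods of edges in subsampled graphs to neighborhoods in the full graph.
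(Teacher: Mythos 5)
Your proof is correct. The paper states this proposition without proof (it is treated as an immediate consequence of the definition of induced subgraph), and your argument -- unfolding $(G[U])[W]$ to $(W, (E[U])[W])$, observing that for $u,v \in W \subseteq U$ the condition ``in $E[U]$ with both endpoints in $W$'' collapses to ``both endpoints in $W$,'' and checking that multiplicities are preserved -- is exactly the bookkeeping the paper leaves implicit.
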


Let $G = (V,E)$ and $u,v \in V$. An \emph{path} from $u$ to $v$ is a sequence of vertices $u = w_0,w_1,\ldots,w_{\ell-1},w_\ell=v \in V$ such that for each $i \in [\ell]$, $(w_{i-1},w_i) \in E$ or $(w_i,w_{i-1}) \in E$.\footnote{Note that in this notion of ``paths'', the directions of edges are ignored.} For $v \in V$, we let the ``radius-$\ell$ ball'' around $v$ be \[ \ball{G}{\ell}{v} := \{u \in V : \exists \text{ a path of length}\leq \ell \text{ between }u\text{ and } v\text{ in }G\}. \] We use the following loose bound on the size of these balls:

\begin{proposition}\label{prop:prelim:ball-size}
    For every $\ell,D \in \BN$, $D \geq 2$, if $G=(V,E)$ has maximum degree $D$, then for every $v \in V$, $|\ball{G}{\ell}{v}| \leq 2D^\ell$.
\end{proposition}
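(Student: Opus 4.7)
The plan is a standard BFS-style expansion bound on the ball size, followed by summing a geometric series and simplifying. First I would note that the definition of \verts{} and paths in the excerpt treats edges as undirected for the purposes of reachability, and that the quantity $\deg{G}{v} = \degout{G}{v} + \degin{G}{v}$ is precisely the number of edge-endpoint incidences at $v$ in the underlying undirected multigraph (counted with multiplicity of parallel edges). Hence the maximum-degree-$D$ hypothesis means that in the undirected multigraph underlying $G$, every vertex has at most $D$ edge-incidences, and in particular at most $D$ (undirected) neighbors.

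Next I would prove by induction on $i$ that the number of vertices $u$ such that the shortest $u$-$v$ path in the undirected sense has length exactly $i$ is at most $D^i$. The base case $i=0$ gives the singleton $\{v\}$. For the inductive step, every vertex at distance exactly $i$ is an undirected neighbor of a vertex at distance exactly $i-1$, and each such predecessor has at most $D$ neighbors in the undirected multigraph, giving a crude union bound of $D \cdot D^{i-1} = D^i$.

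Summing over $0 \leq i \leq \ell$ yields $|\ball{G}{\ell}{v}| \leq \sum_{i=0}^{\ell} D^i = \frac{D^{\ell+1}-1}{D-1}$. The last step is the elementary simplification: for $D \geq 2$ we have $\frac{D}{D-1} \leq 2$, so $\frac{D^{\ell+1}-1}{D-1} \leq \frac{D^{\ell+1}}{D-1} = \frac{D}{D-1} \cdot D^\ell \leq 2D^\ell$, as desired.

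There is no real obstacle here; the proposition is a loose, off-the-shelf bound intended only to certify that constant-radius balls in bounded-degree graphs have constant size. The only mild subtlety is making sure the ``degree'' convention used in the paper (which sums in- and out-degree and counts multiplicities) actually upper-bounds the number of undirected neighbors, which it plainly does.
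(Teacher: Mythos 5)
Your proof is correct and takes essentially the same approach as the paper's: both bound the number of vertices reachable within $\ell$ steps by a layer-by-layer argument yielding $\sum_{i=0}^\ell D^i = \frac{D^{\ell+1}-1}{D-1} \leq 2D^\ell$ for $D \geq 2$. The paper phrases the count as "number of paths originating at $v$" rather than an induction on distance layers, but the geometric-series bound and final simplification are identical.
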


\begin{proof}
    The number of paths originating at $v$ is at most $1 + D + \cdots + D^\ell$. Using the geometric sum formula, this quantity equals $\frac{D^{\ell+1}-1}{D-1} \leq \frac{D^{\ell+1}}{D-1} \leq 2D^\ell$ (since $D-1 \geq D/2$).
\end{proof}

A \emph{(proper) $k$-coloring} of $G = (V,E)$ is a function $\chi : V \to [k]$ such that for all $e = (u,v) \in E$, $\chi(u) \neq \chi(v)$. (A coloring is any general function $\chi : V \to [k]$.) 

\begin{definition}[Colored graph]
    A \emph{(properly) $k$-colored graph} is a pair $(G=(V,E),\chi : V \to [k])$, where $G$ is a graph and $\chi$ is a (proper) $k$-coloring of $G$.
\end{definition}

\subsection{Probability}

For two probability distributions $\CX,\CY \in \Dist{S}$, the \emph{total variation distance} between the distributions is
\begin{equation}
\label{def:prelim:tv-dist}
    \tvdist{\CX}{\CY} := \frac12 \sum_{s \in S} |\CX(s) - \CY(s)| \, .
\end{equation} 

We use some standard facts about the distance:

\begin{proposition}\label{prop:prelim:tv-diff}
    Let $\CX,\CY$ be two probability distributions with $\tvdist{\CX}{\CY}\le \epsilon$ supported on a finite set $S$ and let $f : S \to [0,1]$ be any function. Then
    \[
    \left\lvert \Exp_{s \sim \CX}[f(s)] - \Exp_{s \sim \CY}[f(s)]\right\rvert \leq \epsilon.
    \]
\end{proposition}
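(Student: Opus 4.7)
The plan is to give a direct manipulation of the difference $\Exp_{\CX}[f] - \Exp_{\CY}[f]$, using the key observation that both $\CX$ and $\CY$ sum to $1$, so we may shift $f$ by a constant without changing the difference of expectations. This is the standard trick that makes the constant in the bound exactly $1$ rather than $2$.

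Concretely, I would first expand
\[
\Exp_{s\sim\CX}[f(s)] - \Exp_{s\sim\CY}[f(s)] = \sum_{s \in S} f(s)\,(\CX(s) - \CY(s)).
\]
Because $\sum_{s \in S}(\CX(s) - \CY(s)) = 1 - 1 = 0$, I can subtract $\tfrac12 \sum_{s}(\CX(s)-\CY(s)) = 0$ from the right-hand side, rewriting it as
\[
\sum_{s \in S} \left(f(s)-\tfrac12\right)(\CX(s) - \CY(s)).
\]
Now I would apply the triangle inequality and the bound $|f(s)-\tfrac12| \le \tfrac12$ (which holds because $f(s) \in [0,1]$) to obtain
\[
\left|\Exp_{\CX}[f] - \Exp_{\CY}[f]\right| \le \tfrac12 \sum_{s \in S} |\CX(s) - \CY(s)| = \tvdist{\CX}{\CY} \le \epsilon,
\]
where the equality is just the definition \eqref{def:prelim:tv-dist}. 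This completes the proof.

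There is no real obstacle: the only subtle point is remembering to center $f$ at $\tfrac12$ to save the factor of $2$ that a naive triangle inequality on $\sum f(s)(\CX(s)-\CY(s))$ would produce. An alternative I could present instead is the ``positive/negative part'' decomposition: let $A = \{s : \CX(s) \geq \CY(s)\}$ and $B = S \setminus A$, note that $\sum_{s\in A}(\CX(s)-\CY(s)) = \sum_{s\in B}(\CY(s)-\CX(s)) = \tvdist{\CX}{\CY}$, and then bound the two halves of $\sum_s f(s)(\CX(s)-\CY(s))$ separately using $f(s)\in[0,1]$. Both approaches give the same bound, and I would pick the shifting argument for brevity.
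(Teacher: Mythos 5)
Your proof is correct. The paper states \cref{prop:prelim:tv-diff} without proof (it is labeled as one of the ``standard facts about the distance''), so there is no in-paper argument to compare against; your centering-at-$\tfrac12$ argument is exactly the standard way to obtain the bound with constant $1$ rather than $2$, and every step checks out. Your alternative positive/negative-part decomposition would also work and is equivalent in content; the centering version is indeed shorter and the natural choice here.
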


Given a set $s_1,\ldots,s_t \in S$, we define the \emph{empirical distribution} $\EmpDist{S}{s_1,\ldots,s_t} \in \Dist{S}$ via $\EmpDist{S}{s_1,\ldots,s_t}(s) := \frac1t |\{i \in [t] : s_i = s\}|$.

\begin{proposition}\label{prop:prelim:tv-samples}
    For every finite set $S$ and $\epsilon, \delta > 0$, there exists $t \in \BN$ such that the following holds. Let $\CD$ be any distribution over $S$. Then w.p. $1-\delta$ over $t$ independent samples $s_1,\ldots,s_t \sim \CD$,  $\tvdist{\CD}{\EmpDist{S}{s_1,\ldots,s_t}} \leq \epsilon$.
\end{proposition}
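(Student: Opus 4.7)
The plan is to control each element's empirical frequency individually via a concentration inequality and then union bound, which converts element-wise deviations into a bound on total variation distance.

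First, I would fix $\epsilon, \delta > 0$ and a distribution $\CD$ on the finite set $S$. For each $s \in S$, let $N_s := |\{i \in [t] : s_i = s\}|$, so that $\EmpDist{S}{s_1,\ldots,s_t}(s) = N_s/t$. Since the samples are independent, $N_s$ is a sum of $t$ i.i.d. Bernoulli$(\CD(s))$ indicators, and in particular $\Exp[N_s/t] = \CD(s)$. A standard Hoeffding bound then yields
\[ \Pr\!\left[\left\lvert \tfrac{N_s}{t} - \CD(s) \right\rvert > \tfrac{\epsilon}{|S|}\right] \leq 2\exp\!\left(-\tfrac{2t\epsilon^2}{|S|^2}\right). \]

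Next, I would take a union bound over all $s \in S$: the probability that \emph{some} element $s$ has $|N_s/t - \CD(s)| > \epsilon/|S|$ is at most $2|S|\exp(-2t\epsilon^2/|S|^2)$. Choosing $t$ large enough to make this quantity at most $\delta$, e.g. $t \geq \tfrac{|S|^2}{2\epsilon^2}\ln(2|S|/\delta)$, ensures that with probability at least $1-\delta$, every element's empirical frequency is within $\epsilon/|S|$ of its true probability.

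Finally, when this good event holds, the definition of total variation distance gives
\[ \tvdist{\CD}{\EmpDist{S}{s_1,\ldots,s_t}} = \frac12 \sum_{s \in S} \left\lvert \CD(s) - \tfrac{N_s}{t}\right\rvert \leq \frac12 \cdot |S| \cdot \tfrac{\epsilon}{|S|} = \tfrac{\epsilon}{2} \leq \epsilon, \]
which completes the proof. There is no significant obstacle here; the only minor choice is how to split the total budget $\epsilon$ across the $|S|$ elements (I use $\epsilon/|S|$ per element, which is wasteful by a logarithmic factor compared to bounds tailored to $\ell_1$ distance, but perfectly adequate for a qualitative existence statement since $|S|$, $\epsilon$, and $\delta$ are all fixed constants).
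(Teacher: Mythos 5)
Your proof is correct. The paper states this proposition without proof (it is treated as a standard fact), so there is no argument to compare against; your approach—bounding each coordinate's empirical deviation via Hoeffding, union-bounding over the $|S|$ coordinates, and then summing into the total variation distance—is the natural one and every step checks out, including the observation that the resulting choice of $t$ depends only on $|S|$, $\epsilon$, and $\delta$ and not on $\CD$, as the quantifier order in the statement requires.
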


Also, for finite sets $S,T$, a function $\CF : S \to \Dist{T}$, and a distribution $\CD \in \Dist{S}$, $\CF\circ\CD$ denotes the ``composite'' random variable which samples $s \sim \CD$ then outputs a sample from $\CF(s)$.

\begin{proposition}[``Data processing inequality'']\label{prop:prelim:data-proc}
    Let $S, T$ be finite sets and $\CF : S \to \Dist{T}$ any function. Further, let $\CD_1,\CD_2 \in \Dist{S}$. Then \[ \tvdist{\CF\circ\CD_1}{\CF\circ\CD_2} \leq \tvdist{\CD_1}{\CD_2}. \]
\end{proposition}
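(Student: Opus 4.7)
The plan is to unfold the definition of total variation distance and apply the triangle inequality to push the absolute value inside the sum, then use Fubini (swapping the order of summation) together with the fact that each $\CF(s)$ is a probability distribution on $T$.

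Concretely, I would first write
\[
\tvdist{\CF\circ\CD_1}{\CF\circ\CD_2} = \frac12 \sum_{t \in T} \bigl| (\CF\circ\CD_1)(t) - (\CF\circ\CD_2)(t) \bigr|,
\]
and expand each marginal as $(\CF\circ\CD_i)(t) = \sum_{s \in S} \CD_i(s)\,\CF(s)(t)$, which follows directly from the definition of the composite distribution. This gives the common factor $\CF(s)(t)$ that enables the next step.

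Next I would apply the triangle inequality to move the absolute value inside the inner sum over $s$, yielding
\[
\tvdist{\CF\circ\CD_1}{\CF\circ\CD_2} \leq \frac12 \sum_{t \in T} \sum_{s \in S} \bigl|\CD_1(s) - \CD_2(s)\bigr| \, \CF(s)(t).
\]
Then I would swap the order of summation (legal because the sums are finite and nonnegative) and use that $\sum_{t \in T} \CF(s)(t) = 1$ for every $s \in S$, giving exactly $\frac12 \sum_{s \in S} |\CD_1(s) - \CD_2(s)| = \tvdist{\CD_1}{\CD_2}$, which completes the proof.

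There is no real obstacle here: this is the standard ``data processing inequality'' for total variation distance under a Markov kernel, and the only steps are the triangle inequality and Fubini. The one small thing to be mindful of is keeping the bookkeeping between the two marginals straight when pulling out $\CF(s)(t)$; everything else is routine.
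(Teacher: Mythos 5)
Your proof is correct and is the standard argument; the paper itself states this proposition without proof, treating it as a well-known fact. Every step you outline — expanding the composite marginals as $(\CF\circ\CD_i)(t) = \sum_{s}\CD_i(s)\,\CF(s)(t)$, applying the triangle inequality, swapping the finite sums, and using $\sum_{t}\CF(s)(t)=1$ — is valid and gives exactly the claimed bound.
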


\begin{proposition}\label{prop:prelim:sample-diff}
    Let $S, T$ be finite sets and $\CF, \CG : S \to \Dist{T}$ any functions. Let $\CD \in \Dist{S}$. Then \[ \tvdist{\CF\circ\CD}{\CG\circ\CD} \leq \Pr_{s \sim \CD}[\CF(s) \neq \CG(s)]. \]
\end{proposition}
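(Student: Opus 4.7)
The plan is to unpack the definition of total variation distance directly and reduce to a pointwise bound on $\tvdist{\CF(s)}{\CG(s)}$. I would begin by expanding
\[ \tvdist{\CF\circ\CD}{\CG\circ\CD} = \frac12 \sum_{t \in T} \left| \sum_{s \in S} \CD(s)\bigl( \CF(s)(t) - \CG(s)(t) \bigr) \right|, \]
then applying the triangle inequality to pull the absolute value inside the sum over $s$, swapping the order of summation, and factoring out $\CD(s)$. This yields
\[ \tvdist{\CF\circ\CD}{\CG\circ\CD} \leq \sum_{s \in S} \CD(s)\cdot \tvdist{\CF(s)}{\CG(s)}. \]

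Next, I would use the elementary observation that $\tvdist{\CF(s)}{\CG(s)} = 0$ whenever $\CF(s) = \CG(s)$ and $\tvdist{\CF(s)}{\CG(s)} \leq 1$ always; together these give $\tvdist{\CF(s)}{\CG(s)} \leq \mathbbm{1}[\CF(s) \neq \CG(s)]$ pointwise in $s$. Summing this bound against $\CD$ produces exactly $\Pr_{s \sim \CD}[\CF(s) \neq \CG(s)]$, finishing the proof.

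There is no real obstacle here; the only substantive step is recognizing that the triangle inequality combined with the trivial bound $\tvdist{\cdot}{\cdot} \leq 1$ is tight enough. A more conceptual alternative, which I would mention if space allowed, is a coupling argument: draw $s \sim \CD$, and if $\CF(s) = \CG(s)$ use the identity coupling to produce a single sample $t_F = t_G \sim \CF(s)$, while if $\CF(s) \neq \CG(s)$ draw $t_F \sim \CF(s)$ and $t_G \sim \CG(s)$ in any way (e.g.\ independently). The marginals of $(t_F, t_G)$ are then $\CF\circ\CD$ and $\CG\circ\CD$ respectively, and $t_F \neq t_G$ only if $\CF(s) \neq \CG(s)$, so the coupling inequality gives the same bound.
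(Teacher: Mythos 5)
Your proof is correct. The paper states this proposition as a standard fact without giving a proof, so there is no paper argument to compare against; your direct TV-expansion argument (triangle inequality, swap sums, then bound $\tvdist{\CF(s)}{\CG(s)} \leq \mathbbm{1}[\CF(s)\neq\CG(s)]$) and the alternative coupling argument are both valid and standard ways to establish it.
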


Also, for $t \leq |S|$, we define $\NoReplace{t}{S}$ as the distribution over $(\toset{S})^t$ which iteratively samples $s_1,\ldots,s_t \in \toset{S}$ via $s_1 \sim \Unif{S}$ and $s_i \sim \Unif{S \setminus \{s_1,\ldots,s_{i-1}\}}$.\footnote{For multisets $S$ and $T$, we write $S \supseteq T$ iff for every $x \in T$, $\mult{S}{x} \geq \mult{T}{x}$. In this case, $S \setminus T$ denotes the new multiset where for every $x \in S$, $\mult{S \setminus T}{x} = \begin{cases} \mult{S}{x} - \mult{T}{x} & x \in T \\ \mult{S}{x} & x \not\in T \end{cases}$. Thus, in particular, $|S \setminus T| = |S| - |T|$ where $|\cdot|$ denotes cardinality, i.e., the sum of multiplicities.} We contrast this with the standard ``with replacement'' product distribution $(\Unif{S})^t$. We have some more standard facts:

\begin{proposition}[``With replacement'' vs. ``without replacement'' sampling]\label{prop:prelim:with-vs-without-replacement}
    For every $t \in \BN$ and $\epsilon > 0$, there exists $m \in \BN$ such that for every multiset $S$ with $|S| \geq m$, \[
    \tvdist{\NoReplace{t}{S}}{(\Unif{S})^t} \leq \epsilon.
    \]
\end{proposition}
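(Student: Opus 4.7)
The plan is to reduce to the classical birthday problem via an enumeration of $S$. Fix any map $\pi : [|S|] \to \toset{S}$ with $|\pi^{-1}(s)| = \mult{S}{s}$ for every $s \in \toset{S}$ (such a $\pi$ exists because $|S| = \sum_s \mult{S}{s}$). Then $\Unif{S}$ is the pushforward of $\Unif{[|S|]}$ via $\pi$, so $(\Unif{S})^t$ is the pushforward of $P := (\Unif{[|S|]})^t$ under the componentwise application of $\pi$. The technical heart of the argument is verifying that $\NoReplace{t}{S}$ is the pushforward, under the same componentwise map, of the distribution $Q$ that is uniform on ordered $t$-tuples of distinct indices in $[|S|]$. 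I would prove this by induction on $t$, using the symmetry that, conditional on $s = \pi(i_1)$, each of the $\mult{S}{s}$ preimages of $s$ in $[|S|]$ was equally likely to have been $i_1$, so removing that preimage correctly implements the multiset decrement $S \mapsto S \setminus \{s\}$ on $\toset{S}$.

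Given this reduction, by the data processing inequality (\cref{prop:prelim:data-proc}) it suffices to bound $\tvdist{P}{Q}$. A short direct computation (splitting the $\ell_1$ difference into the contributions from collision tuples, on which $Q$ vanishes, and distinct tuples, on which $P$ and $Q$ differ only in normalization) shows that $\tvdist{P}{Q}$ equals exactly the probability that $t$ i.i.d.\ uniform samples from $[|S|]$ have a collision. By a union bound over the $\binom{t}{2}$ pairs of sample positions, this collision probability is at most $t^2/(2|S|)$, so taking $m := \lceil t^2/(2\epsilon) \rceil$ suffices.

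The main obstacle is simply careful bookkeeping in the pushforward identification of $\NoReplace{t}{S}$, which requires handling the multiset case where $\pi$ is not injective on $\toset{S}$ and checking that the inductive step correctly matches up $\Unif{S \setminus \{s_1,\ldots,s_{i-1}\}}$ with the uniform distribution on the remaining indices $[|S|] \setminus \{i_1,\ldots,i_{i-1}\}$. Once that induction is in place, the remaining ingredients --- the data processing inequality and the birthday collision bound --- are entirely standard.
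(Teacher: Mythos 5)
Your proposal is correct and follows essentially the same route as the paper: both reduce the multiset case to the set case by fixing an enumeration (your $\pi$ is the paper's ordering $\sigma \in \Orderings{S}$), apply the data processing inequality (\cref{prop:prelim:data-proc}), and then invoke the standard birthday bound for the set case. The paper leaves the set case implicit as folklore, whereas you fill it in with the exact identity $\tvdist{P}{Q} = \Pr[\text{collision}]$ and the $\binom{t}{2}/|S|$ union bound, which is a harmless and correct elaboration.
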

{Note that \Cref{prop:prelim:with-vs-without-replacement} is stated for multisets. However, the statement for multisets reduces immediately to the statement for sets: Given a multiset $S$ with $|S| = n$, consider an arbitrary ordering $\sigma \in \Orderings{S}$. Then observe that $\sigma^t \circ \NoReplace{t}{[n]} = \NoReplace{t}{S}$ and $\sigma^t \circ (\Unif{[n]})^t = (\Unif{S})^t$ (where $\sigma^t \circ \CD$ means to sample $(i_1,\ldots,i_t) \sim \CD$ and then output $(\sigma(i_1),\ldots,\sigma(i_t))$), and apply the data processing inequality.}

\begin{proposition}[Hoeffding's inequality]
\label{prop:hoeffding}
Let $X_1, \dots, X_n$ be independent random variables such that $a_i \leq X_i \leq b_i$ for all $i \in [n]$. For all $t > 0$, we have
\[
\Pr\paren*{ \abs*{ \sum_{i = 1}^n X_i - \sum_{i = 1}^n \Exp\bracket*{ X_i } } \geq t } \leq 2 \cdot \mathrm{e}^{ - \frac{ 2t^2 }{ \sum_{i = 1}^n \paren*{ b_i - a_i }^2 } } .
\]
\end{proposition}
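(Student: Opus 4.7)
\textbf{Proof plan for \Cref{prop:hoeffding}.} The plan is to follow the standard Chernoff/Cramér method: control the moment generating function of each summand, use independence to multiply them, then optimize a parameter $s>0$ to get a sub-Gaussian tail. First I would reduce to the centered case by setting $Y_i := X_i - \Exp[X_i]$, so that $\Exp[Y_i] = 0$ and $Y_i \in [a_i', b_i']$ where $a_i' := a_i - \Exp[X_i]$ and $b_i' := b_i - \Exp[X_i]$ (note that $b_i' - a_i' = b_i - a_i$). Letting $S := \sum_{i=1}^n Y_i$, it suffices to bound $\Pr[S \geq t]$ by $\exp\bigl(-2t^2/\sum_i (b_i-a_i)^2\bigr)$; the matching lower-tail bound $\Pr[S \leq -t]$ follows by applying the same argument to $-Y_i$ (which satisfies the symmetric interval constraint), and a union bound yields the factor of $2$.

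Next I would apply the exponential Markov inequality: for any $s>0$,
\[
\Pr[S \geq t] \;=\; \Pr[e^{sS} \geq e^{st}] \;\leq\; e^{-st}\,\Exp[e^{sS}] \;=\; e^{-st} \prod_{i=1}^n \Exp[e^{sY_i}],
\]
where the last equality uses independence of the $Y_i$'s (and hence of $e^{sY_i}$). The key step is then \emph{Hoeffding's lemma}, which states that for any mean-zero random variable $Y$ supported on $[a',b']$,
\[
\Exp[e^{sY}] \;\leq\; \exp\!\bigl(s^2 (b'-a')^2/8\bigr).
\]
I would prove this by writing $Y = \lambda b' + (1-\lambda) a'$ for the random convex combination $\lambda = (Y-a')/(b'-a') \in [0,1]$, invoking convexity of $x \mapsto e^{sx}$ to get $e^{sY} \leq \lambda e^{sb'} + (1-\lambda) e^{sa'}$, and taking expectations. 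Using $\Exp[Y] = 0$, this yields $\Exp[e^{sY}] \leq \phi(s(b'-a'))$ for a certain function $\phi$ whose logarithm is easy to Taylor-expand and bound by $s^2(b'-a')^2/8$ (the cleanest way is to note that $\psi(u) := \log \phi(u)$ satisfies $\psi(0) = \psi'(0) = 0$ and $\psi''(u) \leq 1/4$ uniformly in $u$, so $\psi(u) \leq u^2/8$).

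Plugging Hoeffding's lemma into the product bound gives
\[
\Pr[S \geq t] \;\leq\; \exp\!\Bigl(-st + \tfrac{s^2}{8} \sum_{i=1}^n (b_i - a_i)^2\Bigr),
\]
and optimizing over $s>0$ by choosing $s = 4t / \sum_{i=1}^n (b_i-a_i)^2$ yields the desired bound $\exp\bigl(-2t^2/\sum_{i=1}^n(b_i-a_i)^2\bigr)$. Combining with the symmetric lower-tail argument completes the proof. The main (and really only) technical obstacle is Hoeffding's lemma, specifically the calculus estimate $\psi''(u) \leq 1/4$; everything else is bookkeeping.
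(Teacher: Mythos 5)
Your proof is correct: it is the classical Chernoff–Cramér argument, reducing to the centered case, invoking Hoeffding's lemma on the moment generating function of each summand, multiplying via independence, and optimizing the exponential parameter $s$ (your choice $s = 4t/\sum_i (b_i-a_i)^2$ does give the stated exponent). However, there is nothing to compare against: the paper states \Cref{prop:hoeffding} as a standard fact and provides no proof (note the absence of any \texttt{proof} environment following the proposition), so your write-up supplies a proof where the paper simply cites folklore. The only place worth a caveat is the claim that $\psi''(u) \leq 1/4$ uniformly — that calculus bound is the real content of Hoeffding's lemma and deserves a line of justification in a full write-up (e.g., observe that $\psi''(u)$ is a variance of a $\{0,1\}$-valued random variable, hence at most $1/4$), but you correctly flag it as the only technical step.
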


\begin{proposition}[Fixing normalization]\label{prop:prelim:fix-norm}
    Let $S$ be a finite set and $\CD \in \Dist{S}$ a distribution. Let $Y : S \to \BR_{\geq 0}$ be a function such that
    \[
    \sum_{s \in S} \abs*{Y(s) - \CD(s)} \leq \epsilon.
    \]
    Let $\gamma = \sum_{s \in S} Y(s)$. Then $\hat{\CD}(s) = \frac{Y(s)}\gamma$ is a distribution with $\tvdist{\CD}{\hat{\CD}} \leq \epsilon$.
\end{proposition}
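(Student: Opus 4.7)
The plan is to proceed in two steps: first establish that $\gamma$ is close to $1$ (so that $\hat\CD$ is a well-defined probability distribution), then use a triangle-inequality decomposition to bound the sum $\sum_s |\CD(s) - \hat\CD(s)|$.

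For the first step, since $\sum_s \CD(s) = 1$, the triangle inequality gives
\[
|\gamma - 1| = \left|\sum_{s \in S} Y(s) - \sum_{s \in S} \CD(s)\right| \leq \sum_{s \in S} |Y(s) - \CD(s)| \leq \epsilon.
\]
If $\epsilon \geq 1$ the conclusion is trivial (TV distance is always at most $1$), so we may assume $\epsilon < 1$, which gives $\gamma \geq 1 - \epsilon > 0$. Hence $\hat\CD(s) = Y(s)/\gamma$ is nonnegative and sums to $\gamma/\gamma = 1$, so it is indeed a probability distribution.

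For the second step, I would decompose via the intermediate term $Y(s)$:
\[
\left|\CD(s) - \frac{Y(s)}{\gamma}\right| \leq |\CD(s) - Y(s)| + \left|Y(s) - \frac{Y(s)}{\gamma}\right| = |\CD(s) - Y(s)| + Y(s) \cdot \left|1 - \frac{1}{\gamma}\right|.
\]
Summing over $s \in S$, the first term contributes at most $\epsilon$ by hypothesis, and the second term contributes $\gamma \cdot |1 - 1/\gamma| = |\gamma - 1| \leq \epsilon$ by the first step. So $\sum_{s \in S} |\CD(s) - \hat\CD(s)| \leq 2\epsilon$, and dividing by $2$ yields $\tvdist{\CD}{\hat\CD} \leq \epsilon$ as required.

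There is no real obstacle here; the result is a routine ``normalization is cheap'' lemma, and the only mild subtlety is handling the degenerate case $\epsilon \geq 1$ (equivalently, ensuring $\gamma > 0$ before dividing). The decomposition through $Y(s)$ is the natural trick that makes the two error sources --- the pointwise deviation of $Y$ from $\CD$, and the multiplicative rescaling by $1/\gamma$ --- each contribute at most $\epsilon$ to the $\ell_1$ distance.
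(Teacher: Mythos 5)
Your proof is correct and follows essentially the same approach as the paper's: bound $|\gamma - 1| \leq \epsilon$ by the triangle inequality, decompose $|\CD(s) - \hat\CD(s)|$ through the intermediate term $Y(s)$, and observe that the rescaling error $\sum_s Y(s)|1 - 1/\gamma| = |\gamma - 1| \leq \epsilon$. The only difference is that you explicitly handle the degenerate case $\epsilon \geq 1$ to ensure $\gamma > 0$, a small point the paper leaves implicit.
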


\begin{proof}
    First, by the triangle inequality $\abs*{\gamma - 1} = \abs*{\sum_{s \in S} Y(s) - 1} = \abs*{\sum_{s \in S} (Y(s) - \CD(s))} \leq \sum_{s \in S} \abs*{Y(s) - \CD(s)} \leq \epsilon.$ Further,
    $\sum_{s \in S} \abs*{\hat{\CD}(s) - Y(s)} = \sum_{s \in S} \abs*{\frac1{\gamma} Y(s) - Y(s)} = \sum_{s \in S} Y(s) \abs*{\frac1\gamma-1} = \gamma \abs*{\frac1\gamma-1} = \abs*{1-\gamma} \leq \epsilon.$ Hence finally, again using the triangle inequality:
    $\sum_{s \in S} \abs*{\hat{\CD}(s) - \CD(s)} \leq \sum_{s \in S} \paren*{ \abs*{\hat{\CD}(s) - Y(s)} + \abs*{Y(s) - \CD(s)}} \leq 2\epsilon$.
\end{proof}

\begin{proposition}
\label{claim:cls17exp-helper}
Let $\mathsf{X}$ be a random variable that takes values in the interval $[0, 1]$. Let $0 \leq \mu, \delta \leq 1$ be such that $\Pr\paren*{ \abs*{ \mathsf{X} - \mu } \geq \delta } \leq \delta$. Then:
\[
\abs*{ \Exp\bracket*{ \mathsf{X} } - \mu } \leq 2 \delta .
\]
\end{proposition}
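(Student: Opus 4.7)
The plan is to bound $|\mathbb{E}[\mathsf{X}] - \mu|$ by $\mathbb{E}[|\mathsf{X} - \mu|]$ via Jensen (or just the triangle inequality, pulling $\mu$ inside as a constant so $\mathbb{E}[\mathsf{X}] - \mu = \mathbb{E}[\mathsf{X} - \mu]$), and then split the expectation of $|\mathsf{X}-\mu|$ into two pieces according to whether the ``good event'' $|\mathsf{X}-\mu| < \delta$ holds or not.

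More precisely, let $A$ denote the event $\{|\mathsf{X}-\mu| < \delta\}$, so by hypothesis $\Pr(\bar A) \leq \delta$. On $A$, the integrand $|\mathsf{X}-\mu|$ is bounded by $\delta$, contributing at most $\delta \cdot \Pr(A) \leq \delta$ to $\mathbb{E}[|\mathsf{X}-\mu|]$. On $\bar A$, we use the crude bound $|\mathsf{X}-\mu| \leq 1$, which holds because both $\mathsf{X}$ and $\mu$ lie in $[0,1]$; this contributes at most $1 \cdot \Pr(\bar A) \leq \delta$. Adding these gives $\mathbb{E}[|\mathsf{X}-\mu|] \leq 2\delta$, whence $|\mathbb{E}[\mathsf{X}] - \mu| = |\mathbb{E}[\mathsf{X}-\mu]| \leq \mathbb{E}[|\mathsf{X}-\mu|] \leq 2\delta$.

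There is no real obstacle here; the only thing to be careful about is using $|\mathsf{X}-\mu|\leq 1$ (valid precisely because $\mathsf{X},\mu\in[0,1]$) rather than some larger trivial bound, since otherwise the constant in the conclusion would degrade. The proof is a standard conditioning/tail-truncation argument and can be written out in a handful of lines.
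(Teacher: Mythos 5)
Your proof is correct. It follows the same basic tail-truncation idea as the paper, but with a slightly cleaner decomposition: you bound $\Exp\bracket*{\abs*{\mathsf{X}-\mu}}$ once (via the split into the good event $\abs*{\mathsf{X}-\mu}<\delta$ and its complement, using $\abs*{\mathsf{X}-\mu}\leq 1$ on the complement since $\mathsf{X},\mu\in[0,1]$) and then push the absolute value inside the expectation via Jensen/triangle. The paper instead writes $\Exp\bracket*{\mathsf{X}}$ via the chain rule conditioned on the bad event $\mathcal{E}$, and separately derives the upper bound $\Exp\bracket*{\mathsf{X}}\leq\mu+2\delta$ (using $\Exp\bracket*{\mathsf{X}\mid\mathcal{E}}\leq 1$) and the lower bound $\Exp\bracket*{\mathsf{X}}\geq(1-\delta)(\mu-\delta)\geq\mu-2\delta$ (using $\Exp\bracket*{\mathsf{X}\mid\mathcal{E}}\geq 0$). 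The content is the same; your route produces a single unified estimate rather than two one-sided bounds, which is marginally shorter. In both cases the requirement $\mu\in[0,1]$ is doing real work: in yours it guarantees $\abs*{\mathsf{X}-\mu}\leq 1$ on the bad event, and in the paper's it makes the lower bound $(1-\delta)(\mu-\delta)\geq\mu-2\delta$ come out with the right constant.
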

\begin{proof}
Define an event $\mathcal{E}$ that occurs if and only if we have $\abs*{ \mathsf{X} - \mu } \geq \delta$. By our assumption, we have $\Pr\paren*{ \mathcal{E} } \leq \delta$. By the chain rule, we have:
\[
\Exp\bracket*{ \mathsf{X} } = \Pr\paren*{ \mathcal{E} } \cdot \Exp\bracket*{ \mathsf{X} \mid \mathcal{E} } + \Pr\paren*{ \overline{ \mathcal{E} } } \cdot \Exp\bracket*{ \mathsf{X} \mid \overline{ \mathcal{E} } } .
\]
By definition of $\mathcal{E}$, we have that, conditioned on $\overline{ \mathcal{E} }$, it holds that $\mathsf{X} \leq \mu + \delta$ with probability $1$. As $\mathsf{X}$ takes values in $[0, 1]$, $\Exp \bracket*{\mathsf{X} \mid \mathcal{E}} \leq 1$ and so we have:
\[
\Exp\bracket*{ \mathsf{X} } \leq \Pr\paren*{ \mathcal{E} } + \Exp\bracket*{ \mathsf{X} \mid \overline{\mathcal{E}}} \leq \mu + 2 \delta .
\]
Similarly, $\Exp \bracket*{\mathsf{X} \mid \mathcal{E}} \geq 0$ and so we have:
\[
\Exp\bracket*{ \mathsf{X} } \geq \Pr\paren*{ \overline{ \mathcal{E} } } \cdot \Exp\bracket*{ \mathsf{X} \mid \overline{ \mathcal{E} } } \geq \paren*{ 1 - \delta } \cdot \paren*{ \mu - \delta } \geq \mu - 2 \delta .
\]
\end{proof}

\section{Algorithms from neighborhood type sampling}\label{sec:nbr-types}

In this section, we develop some general techniques for reducing $(1/2-\epsilon)$-approximating the $\mdcut$ value of graphs to estimating certain neighborhood-type distributions in graphs.

\subsection{Preprocessing colors}

\begin{proposition}
\label{prop:coloring}
    Let $G = (V,E)$ be a graph and $k \geq 2 \in \BN$, and let $\delta > 0$. Let $\CK$ be any two-wise independent distribution over functions $\chi : V \to [k]$. (I.e., for all $u \neq v \in V$, and $i, j \in [k]$, $\Pr_{\chi \sim \CK}[\chi(u)=i \wedge \chi(v)=j] = 1/k^2$.) Then, \[ \Pr_{\chi \sim \CK}\left[\Pr_{e=(u,v) \sim \Unif{E}}[\chi(u) = \chi(v)] \geq \delta\right] \leq \frac1{\delta k}. \] In particular, the RHS is less than $1\%$ if $k\geq 100/\delta$.
\end{proposition}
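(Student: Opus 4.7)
The plan is to apply Markov's inequality to the random variable
\[ X(\chi) := \Pr_{e=(u,v) \sim \Unif{E}}[\chi(u) = \chi(v)] = \frac{1}{|E|} \sum_{(u,v) \in E} \1[\chi(u)=\chi(v)], \]
viewed as a function of $\chi \sim \CK$. Since the desired conclusion is an upper bound on $\Pr_\chi[X \geq \delta]$, by Markov it suffices to show $\Exp_\chi[X] \leq 1/k$.

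First I would swap the order of expectation and the sum/average over edges: by linearity,
\[ \Exp_{\chi \sim \CK}[X(\chi)] = \frac{1}{|E|} \sum_{(u,v) \in E} \Pr_{\chi \sim \CK}[\chi(u)=\chi(v)]. \]
For every edge $(u,v) \in E$ we have $u \neq v$ (since the graph has no self-loops in the paper's convention, $E \subseteq V \times V \setminus \{(v,v)\}$), so by $2$-wise independence of $\CK$,
\[ \Pr_{\chi \sim \CK}[\chi(u) = \chi(v)] = \sum_{i \in [k]} \Pr_{\chi \sim \CK}[\chi(u) = i \wedge \chi(v) = i] = \sum_{i \in [k]} \frac{1}{k^2} = \frac{1}{k}. \]
Averaging over the $|E|$ edges gives $\Exp_\chi[X] = 1/k$ exactly.

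Finally, since $X \geq 0$ always, Markov's inequality yields
\[ \Pr_{\chi \sim \CK}[X \geq \delta] \leq \frac{\Exp_\chi[X]}{\delta} = \frac{1}{\delta k}, \]
which is the desired bound. Plugging in $k \geq 100/\delta$ gives $1/(\delta k) \leq 1/100 < 1\%$.

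There is no real obstacle here: the proof is a direct one-line application of linearity of expectation (to reduce to a per-edge calculation that uses only pairwise independence) followed by Markov's inequality. The only thing worth being careful about is confirming that $u \neq v$ on every edge so that the pairwise-independence hypothesis applies, which is guaranteed by the ambient convention on graphs in \cref{sec:prelim}.
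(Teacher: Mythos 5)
Your proof is correct and follows essentially the same route as the paper's: linearity of expectation reduces to a per-edge probability computed via pairwise independence, and then Markov's inequality finishes. The only cosmetic difference is that the paper enumerates the edges explicitly and names the per-edge indicators, but the substance is identical.
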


\begin{proof}
    Enumerate $E$'s edges as $\{e_1,\ldots,e_m\}$, and let $X_j$ be the indicator for the event that $\chi(u_j) = \chi(v_j)$ (where $e_j = (u_j,v_j)$). Then by $2$-wise independence, $\Pr_\chi[X_j] = \sum_{c=1}^k \Pr_\chi[\chi(u_j) = \chi(v_j) = c] = k \cdot 1/k^2 = 1/k$. Then $\Pr_{e=(u,v) \sim \Unif{E}}[\chi(u) = \chi(v)] = \frac1m \sum_{j=1}^m X_j$, and therefore $\Exp_\chi[\Pr_{e=(u,v) \sim \Unif{E}}[\chi(u) = \chi(v)]] = 1/k$ by linearity of expectation. Finally, we apply Markov's inequality.
\end{proof}

We use this proposition to add a ``preprocessing'' step to all of our algorithms: Before we start running a streaming algorithm $\CA$, we sample a $2$-wise independent function $\chi : V \to [k]$; we give $\CA$ black-box access to $\chi$ (encoded as a polynomial's coefficients, which requires polylogarithmic bits); then immediately before each edge $(u,v)$ is processed by $\CA$, we discard the edge (i.e., do not pass it to $\CA$) if $\chi(u) = \chi(v)$. The input from $\CA$'s point of view is then a stream of graph edges together with black-box access to a proper coloring. Moreover, in the case of random-ordering algorithms, note that conditioned on $\chi$, we still provide a uniformly random ordering over the graph's remaining edges. In other words, the distributions ``sample a uniform random ordering of all edges, then sample a coloring and throw away the improperly-colored edges and output the remaining edges in order'' and ``sample a coloring and throw away the improperly-colored edges, then sample a uniformly random ordering of the remaining edges'' are identical.

\subsection{Induced subgraphs and neighborhoods}

We will require the following useful proposition about neighborhoods inside of induced subgraphs:

\begin{proposition}\label{prop:prelim:induced-nbhd}
    Let $k,\ell \in \BN$ and let $(G = (V,E), \chi : V \to[k])$ be a $k$-colored graph. Let $S \subseteq V$ with $v \in S$. The following are equivalent:
    \begin{enumerate}
        \item Every vertex $w \in \ball{G[S]}{\ell-1}{v}$ has $\deg{G[S]}{w} = \deg{G}{w}$.
        \item $\ball{G}{\ell}{v} \subseteq S$.
        \item $\ball{G[S]}{\ell}{v} = \ball{G}{\ell}{v}$.
    \end{enumerate}
\end{proposition}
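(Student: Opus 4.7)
The plan is to prove the cycle of implications $3 \Rightarrow 2 \Rightarrow 1 \Rightarrow 3$, since each step is short. Note that the coloring $\chi$ plays no role in the statement, so it can be ignored throughout. The nontrivial direction will be $1 \Rightarrow 3$, which requires an induction along paths; the other two steps are essentially unpacking definitions.

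For $3 \Rightarrow 2$, I would observe that $\ball{G[S]}{\ell}{v}$ is by definition a subset of the vertex set of $G[S]$, which is $S$. So if statement 3 holds, then $\ball{G}{\ell}{v} = \ball{G[S]}{\ell}{v} \subseteq S$.

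For $2 \Rightarrow 1$, let $w \in \ball{G[S]}{\ell-1}{v}$. Since any path in $G[S]$ is a path in $G$, we have $\ball{G[S]}{\ell-1}{v} \subseteq \ball{G}{\ell-1}{v}$, so $w \in \ball{G}{\ell-1}{v}$. Let $w'$ be any $G$-neighbor of $w$. Then there is a $v$-$w$ path of length $\leq \ell-1$ in $G$ which, extended by the edge between $w$ and $w'$, yields a $v$-$w'$ path of length $\leq \ell$, so $w' \in \ball{G}{\ell}{v} \subseteq S$ by statement 2. Hence every edge incident to $w$ in $G$ also has its other endpoint in $S$ and lies in $G[S]$, giving $\deg{G[S]}{w} = \deg{G}{w}$.

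For $1 \Rightarrow 3$, the inclusion $\ball{G[S]}{\ell}{v} \subseteq \ball{G}{\ell}{v}$ is immediate (again, $G[S]$-paths are $G$-paths). For the reverse inclusion, take $u \in \ball{G}{\ell}{v}$ and a witnessing path $v = w_0, w_1, \dots, w_{\ell'} = u$ of length $\ell' \leq \ell$ in $G$. I would show by induction on $i \in \{0,1,\dots,\ell'\}$ that $w_i \in S$ and the prefix $w_0, \dots, w_i$ is a path in $G[S]$. The base case $i = 0$ is just $v \in S$. For the inductive step with $i \leq \ell'-1 \leq \ell-1$, the hypothesis gives $w_i \in \ball{G[S]}{i}{v} \subseteq \ball{G[S]}{\ell-1}{v}$, so statement 1 applies to $w_i$; together with the fact that $(w_i, w_{i+1})$ or $(w_{i+1}, w_i)$ is an edge of $G$, this forces $w_{i+1} \in S$ and the relevant edge into $G[S]$, extending the path. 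At $i = \ell'$ this shows $u \in \ball{G[S]}{\ell}{v}$, completing the proof. The only mild subtlety is to apply statement 1 at indices $i \leq \ell-1$; this is exactly why statement 1 is phrased with radius $\ell - 1$ rather than $\ell$.
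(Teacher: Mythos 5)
Your proposal is correct and follows essentially the same approach as the paper: it proves the same cycle $3 \Rightarrow 2 \Rightarrow 1 \Rightarrow 3$, with $3 \Rightarrow 2$ by unpacking definitions, $2 \Rightarrow 1$ by checking neighbors stay in $S$, and $1 \Rightarrow 3$ by extending a witnessing path one vertex at a time (you do this as a forward induction; the paper phrases it contrapositively via a ``first vertex that leaves $S$'' argument, but the content is identical).
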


\begin{proof}
    Observe that by the definition of induced subgraph, for every $w \in S$, $\deg{G[S]}{w} \leq \deg{G}{w}$. Further, $\ball{G[S]}{\ell}{v} \subseteq \ball{G}{\ell}{v}$.

    ($2 \implies 1$) Suppose there exists a vertex $w \in \ball{G[S]}{\ell-1}{v}$ with $\deg{G[S]}{w} < \deg{G}{w}$. Then there exists some vertex $z \in V$ incident to $w$ but not in $S$. But $w \in \ball{G}{\ell-1}{v}$ so $z \in \ball{G}{\ell}{v}$ (i.e., $w$ is within distance $\ell-1$ of $v$ so $z$ must be within distance $\ell$ of $v$).

    ($3 \implies 2$) By definition, $\ball{G[S]}{\ell}{v} \subseteq S$, hence by assumption $\ball{G}{\ell}{v} \subseteq S$.

    ($1 \implies 3$) Suppose there exists a vertex $z \in \ball{G}{\ell}{v} \setminus \ball{G[S]}{\ell}{v}$. Thus, there exists some path $v = w_0, \ldots, z = w_L$ in $G$ between $v$ and $z$ of length $L \leq \ell$ (each $w_i$ is incident to $w_{i-1}$). Note that $v \in \ball{G[S]}{\ell}{v}$ while $z \not\in \ball{G[S]}{\ell}{v}$. Thus, there exists some $i \in [L]$ such that $w_0,\ldots,w_{i-1} \in \ball{G[S]}{\ell}{v}$ but $w_i \not\in \ball{G[S]}{\ell}{v}$. By the former, $w_{i-1} \in \ball{G[S]}{i-1}{v}$, so we must have $w_i \not\in S$, else we would have $w_i \in \ball{G[S]}{i}{v}$, contradicting the latter. Since $w_{i-1} \in S$ but $w_i \not\in S$, we deduce that $\deg{G[S]}{w_{i-1}} < \deg{G}{w_{i-1}}$.
\end{proof}

\subsection{Edge-type distributions and the multi-pass algorithm}

Recall that the local $1/2$-approximation for $\mdcut$ of \textcite{BFSS15,CLS17} builds a random assignment to a $k$-colored graph $(G=(V,E),\chi : V\to[k])$'s vertices, where each vertex $v$ is assigned independently with a probability depending only on its local neighborhood. To design streaming $(1/2-\epsilon)$-approximation algorithms for $\mdcut$, we will be interested in using a streaming algorithm to simulate the \cite{BFSS15,CLS17} algorithm, or more precisely, to estimate the value of the cut it produces. This is equivalent to estimating the probability that a random edge in the graph is satisfied by the cut. But the probability that the algorithm's assignment satisfies a particular edge depends only on the neighborhood-types of its two endpoints. So we arrive at the goal of estimating the distribution of the endpoints' $(k+1)$-neighborhood types of a random edge.

\begin{definition}[Doubly-rooted colored graph]
    A \emph{doubly rooted (properly) $k$-colored graph} is a triple $(G=(V,E),\chi : V \to [k], e \in E)$, where $G$ is a graph, $\chi$ is a proper $k$-coloring of $G$, and $e$ is a designated \emph{root} edge.
\end{definition}

For a bijection $\phi : V \to V'$ and an edge $e = (u,v)$, we use $\phi(e)$ to denote the pair $(\phi(u),\phi(v))$.

\begin{definition}[Isomorphism of doubly rooted colored graphs]
    Two doubly rooted $k$-colored graphs $(G= (V,E),\chi,e)$ and $(G'=(V',E'),\chi', e')$ are \emph{isomorphic} if there exists a bijection $\phi : V \to V'$ such that (i) for all $u \neq v \in V$, $\mult{E}{u,v} = \mult{E'}{\phi(u,v)}$, (ii) for all $v \in V$, $\chi(v) = \chi'(\phi(v))$, and (iii) $\phi(e) = e'$.
\end{definition}

Note that, importantly, this notion of isomorphism does \emph{not} allow exchanging colors.\footnote{For instance, an isolated vertex colored $1$ is \emph{not} isomorphic to an isolated vertex colored $2$.} Also, isomorphism preserves the degrees of vertices, the distances between pairs of vertices, and the $\mdcut$ values of assignments.

Given a doubly rooted $k$-colored graph $(G,\chi,e)$, let $\type{G}{\chi}{e}$ denote the corresponding isomorphism class (``type'') of doubly rooted $k$-colored graphs. For $k,\ell,D \in \BN$, we let $\DbAllTypesDeg{k}{\ell}{D}$ denote the set of all isomorphism classes of $D$-bounded doubly rooted $k$-colored graphs where every vertex is distance $\leq \ell$ from (at least) one of the roots. $\DbAllTypesDeg{k}{\ell}{D}$ is a finite set by \cref{prop:prelim:ball-size}; we let $\NDbTypes{k}{\ell}{D} := |\DbAllTypesDeg{k}{\ell}{D}|$ denote its size.

Given $G = (V,E)$ and an edge $e=(u,v) \in E$, we let $\ball{G}{\ell}{e} := \ball{G}{\ell}{u}\cup\ball{G}{\ell}{v}$.

\begin{definition}[radius-$\ell$ neighborhood type of edge]
    Let $k,\ell \in \BN$, $G = (V,E)$, $\chi : V \to [k]$, and $e \in E$. The \emph{radius-$\ell$ neighborhood type} of $e$, denoted $\nbrtype{G}{\chi}{\ell}{e}$, is $\nbrtype{G}{\chi}{\ell}{e} := \type{G[N]}{\chi|_N}{e}$ where $N:=\ball{G}{\ell}{e}$.
\end{definition}

If $G$ is $D$-bounded then $\nbrtype{G}{\chi}{\ell}{e} \in \DbAllTypesDeg{k}{\ell}{D}$.

We also require a variant of \cref{prop:prelim:induced-nbhd} for doubly-rooted graphs:

\begin{proposition}\label{prop:prelim:induced-db-nbhd}
    Let $k,\ell \in \BN$ and let $(G = (V,E), \chi : V \to[k])$ be a $k$-colored graph. Let $S \subseteq V$ with $e = (u,v) \in E$ and $u,v \in S$. The following are equivalent:
    \begin{enumerate}
        \item Every vertex $w \in \ball{G[S]}{\ell-1}{e}$ has $\deg{G[S]}{w} = \deg{G}{w}$.
        \item $\ball{G}{\ell}{e} \subseteq S$.
        \item $\ball{G[S]}{\ell}{u} = \ball{G}{\ell}{v}$ and $\ball{G[S]}{\ell}{v} = \ball{G}{\ell}{v}$.
    \end{enumerate}
    Further, if any of these equivalent conditions holds, then $\nbrtype{G[S]}{\chi|_S}{\ell}{u,v} = \nbrtype{G}{\chi}{\ell}{u,v}$.
\end{proposition}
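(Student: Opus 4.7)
The plan is to reduce this doubly-rooted variant to the single-rooted \cref{prop:prelim:induced-nbhd} by applying it separately to each endpoint $u$ and $v$ (both of which lie in $S$ by hypothesis). The key bookkeeping fact is that the radius-$\ell$ ball around an edge decomposes as a union of the radius-$\ell$ balls around its endpoints; this identity holds both in $G$ (by the definition $\ball{G}{\ell}{e} = \ball{G}{\ell}{u} \cup \ball{G}{\ell}{v}$) and, by the same definition applied to the induced subgraph, in $G[S]$. Once this is in hand, each of the three conditions in the statement can be rewritten as a conjunction of the corresponding single-rooted condition for $u$ and for $v$.

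More concretely, first I would verify that condition (1) is the conjunction of the single-rooted condition (1) applied at $u$ and at $v$: it quantifies over $w \in \ball{G[S]}{\ell-1}{e} = \ball{G[S]}{\ell-1}{u} \cup \ball{G[S]}{\ell-1}{v}$, so the two sides of the logical equivalence are identical. Likewise, $\ball{G}{\ell}{e} \subseteq S$ is equivalent to $\ball{G}{\ell}{u} \subseteq S$ and $\ball{G}{\ell}{v} \subseteq S$, and condition (3) is by inspection the conjunction of the single-rooted condition (3) at $u$ and at $v$. All three equivalences then follow by invoking \cref{prop:prelim:induced-nbhd} twice, once for each endpoint.

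For the final assertion about neighborhood types, I would set $N := \ball{G}{\ell}{e}$, so that by definition $\nbrtype{G}{\chi}{\ell}{u,v} = \type{G[N]}{\chi|_N}{e}$. On the induced side, condition (3) together with the edge-ball identity in $G[S]$ gives $\ball{G[S]}{\ell}{e} = \ball{G}{\ell}{e} = N$, so $\nbrtype{G[S]}{\chi|_S}{\ell}{u,v} = \type{(G[S])[N]}{(\chi|_S)|_N}{e}$. Condition (2) provides $N \subseteq S$, so \cref{prop:prelim:doubly-induced-subgph} yields $(G[S])[N] = G[N]$, and trivially $(\chi|_S)|_N = \chi|_N$; thus the two doubly rooted colored graphs are literally equal, and hence have the same type.

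I do not expect any real obstacle here: the statement is essentially the ``union'' of two instances of \cref{prop:prelim:induced-nbhd}, and the only subtlety is making sure the edge-ball identity is applied consistently in both the original graph and the induced subgraph so that the same $N$ appears on both sides in the last step. This is what makes the final equality of types collapse to a direct application of \cref{prop:prelim:doubly-induced-subgph}.
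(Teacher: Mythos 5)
Your proof is correct and follows essentially the same route as the paper's: reduce the three-way equivalence to two invocations of \cref{prop:prelim:induced-nbhd} (one per endpoint) via the observation that each doubly-rooted condition is the conjunction of the single-rooted conditions at $u$ and at $v$, and then establish the type equality by noting that the relevant balls coincide and applying \cref{prop:prelim:doubly-induced-subgph}.
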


\begin{proof}
    The equivalence of the first three conditions follows immediately from \cref{prop:prelim:induced-nbhd}: Each condition is the conjunction of the two corresponding conditions from \cref{prop:prelim:induced-nbhd} for $u$ and $v$ (e.g., $\ball{G}{\ell}{u,v} \subseteq S$ iff $\ball{G}{\ell}{u} \subseteq S$ and $\ball{G}{\ell}{v} \subseteq S$). For the final implication, we reproduce the proof from \cref{prop:prelim:induced-nbhd}: By definition, $\nbrtype{G[S]}{\chi|_S}{\ell}{u,v} = \type{(G[S])[N]}{\chi|_S|_N}{u,v}$ where $N = \ball{G[S]}{\ell}{u,v}$, while $\nbrtype{G}{\chi}{\ell}{u,v} = \type{G[M]}{\chi|_M}{u,v}$ where $M = \ball{G}{\ell}{u,v}$. By assumption, $N=M$, so $G[M] = G[N] = (G[S])[N]$ by \cref{prop:prelim:doubly-induced-subgph} (and trivially $\chi|_S|_N = \chi|_N$).
\end{proof}

\begin{proposition}\label{prop:prelim:edge-ball-indep}
    For all $D, \ell \in \BN$, there exists $\Delta \in \BN$ with the following property. For every graph $G = (V,E)$ with maximum-degree $D$ and every $e \in E$,
    \[ |\{e' \in E : \ball{G}{\ell}{e} \cap \ball{G}{\ell}{e'} \neq \emptyset\}| \leq \Delta. \]
\end{proposition}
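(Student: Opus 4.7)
The plan is to show that any edge $e'$ whose $\ell$-ball meets $\ball{G}{\ell}{e}$ must have at least one endpoint lying in the larger ball $\ball{G}{2\ell}{e}$, and then to bound the number of such edges using the maximum-degree hypothesis.

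First, I would unpack the intersection hypothesis. Suppose $\ball{G}{\ell}{e} \cap \ball{G}{\ell}{e'}$ contains a vertex $w$. By definition of these balls, there is an endpoint $x$ of $e$ and a path of length at most $\ell$ from $x$ to $w$, and similarly an endpoint $x'$ of $e'$ together with a path of length at most $\ell$ from $w$ to $x'$. Concatenating these paths produces a path of length at most $2\ell$ from $x$ to $x'$, so $x' \in \ball{G}{2\ell}{x} \subseteq \ball{G}{2\ell}{e}$. In particular, $e'$ has an endpoint in $\ball{G}{2\ell}{e}$.

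Next, I would bound the size of $\ball{G}{2\ell}{e}$. Writing $e = (u,v)$, we have $\ball{G}{2\ell}{e} = \ball{G}{2\ell}{u} \cup \ball{G}{2\ell}{v}$. Since $G$ has maximum degree at most $D$, Proposition~\ref{prop:prelim:ball-size} (applied when $D \geq 2$; the $D \leq 1$ case is trivial since then $G$ is a matching) gives $|\ball{G}{2\ell}{u}|, |\ball{G}{2\ell}{v}| \leq 2 D^{2\ell}$, hence $|\ball{G}{2\ell}{e}| \leq 4 D^{2\ell}$.

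Finally, since each vertex is incident to at most $D$ edges, the total number of edges in $E$ with at least one endpoint in $\ball{G}{2\ell}{e}$ is at most $D \cdot |\ball{G}{2\ell}{e}| \leq 4 D^{2\ell+1}$. Thus setting $\Delta := 4 D^{2\ell+1}$ suffices. There is no real obstacle here; the only thing to be slightly careful about is that $\ball{G}{\ell}{e}$ is defined as the union of the endpoints' balls and not as a ball around a single vertex, but this is handled cleanly by the union bound above and by the triangle-inequality-style concatenation of paths.
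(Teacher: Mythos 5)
Your proof is correct and takes essentially the same route as the paper's: show that any $e'$ whose radius-$\ell$ ball meets $\ball{G}{\ell}{e}$ must have an endpoint at bounded distance from $e$, then count using \cref{prop:prelim:ball-size} and the degree bound. The only cosmetic difference is that you keep the bound in terms of $\ball{G}{2\ell}{e}$ (the union over both endpoints of $e$), whereas the paper fixes a single endpoint $v$ and passes to $\ball{G}{2\ell+2}{v}$; your bookkeeping is slightly tighter ($\Delta = 4D^{2\ell+1}$ versus the paper's $2D^{2\ell+4}$), but the argument is the same.
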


\begin{proof}
    If $\ball{G}{\ell}{e} \cap \ball{G}{\ell}{e'} \neq \emptyset$, there must be $w \in \verts{e}$ and $w' \in \verts{e'}$ such that $w' \in \ball{G}{2\ell}{w}$; hence, $v' \in \ball{G}{2\ell+2} v$ where $v,v'$ are arbitrary vertices in $\verts{e}$ and $\verts{e'}$, respectively. By \cref{prop:prelim:ball-size}, $|\ball{G}{2\ell+2}{v}| \leq 2D^{2\ell+3}$. Further, for any such $v'$, there can be at most $D$ neighbors in $E$ (by the maximum-degree assumption). This gives the bound for $\Delta := 2D^{2\ell+4}$.
\end{proof}

\subsection{Edge-type distribution}

Now, we define a notion of the \emph{edge-type distribution} in a graph, and describe how estimating this distribution suffices for approximating the $\mdcut$ value of a graph.

\begin{definition}[Edge-type distribution]
    Let $k,\ell,D \in \BN$ and $(G = (V,E, \chi : V \to [k])$ be a $D$-bounded $k$-colored graph. The \emph{radius-$\ell$ neighborhood type distribution} of $(G,\chi)$, denoted $\EdgeDist{G}{\chi}{\ell}$, is the distribution over $\DbAllTypesDeg{k}{\ell}{D}$ given by sampling a random $e \sim \Unif{E}$ and outputting $\nbrtype{G}{\chi}{\ell}{e}$.
\end{definition}

The fact that the edge-type distribution suffices for approximating the $\mdcut$ value of a graph is captured by the following theorem:

\begin{theorem}[Implied by \cite{BFSS15,CLS17}]\label{thm:types-to-dicut}
    Let $k,D \in \BN$. There exists a function $\Local : \DbAllTypesDeg{k}{k}{D} \to [0,1]$ such that the following holds. Let $(G=(V,E),\chi : V \to [k])$ be a $D$-bounded $k$-colored graph. Then \[ \frac12 \maxval{G} \leq \Exp_{T \sim \EdgeDist{G}{\chi}{k}}[\Local(T)] \leq \maxval{G}. \]
\end{theorem}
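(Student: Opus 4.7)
\emph{Proof proposal for \Cref{thm:types-to-dicut}.} The plan is to reduce the theorem directly to the local approximation guarantee of \cite{BFSS15,CLS17} quoted in \Cref{eq:half-approx}. That algorithm deterministically produces, for each vertex $v \in V$, a fractional value $x_v \in [0,1]$ depending only on the isomorphism class of the singly-rooted $k$-colored subgraph $(G[\ball{G}{k}{v}],\chi|_{\ball{G}{k}{v}},v)$, and the resulting fractional assignment $x$ satisfies $\val{G}{x} \ge \tfrac12 \maxval{G}$. I would first abstract this into a function $\mathsf{xval}$ from the (finite) set of singly-rooted $k$-colored types into $[0,1]$, so that $x_v = \mathsf{xval}(\tau_v)$ where $\tau_v$ denotes the singly-rooted type of $v$'s radius-$k$ neighborhood in $(G,\chi)$.

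Next I would lift $\mathsf{xval}$ to a function $\Local$ on doubly-rooted edge types. Given $T \in \DbAllTypesDeg{k}{k}{D}$, pick any representative $(H=(W,F),\psi,(a,b))$, form the two singly-rooted $k$-colored types $T_a := \type{H[\ball{H}{k}{a}]}{\psi|_{\ball{H}{k}{a}}}{a}$ and $T_b$ analogously, and set
\[
\Local(T) \;:=\; \mathsf{xval}(T_a)\cdot\bigl(1 - \mathsf{xval}(T_b)\bigr).
\]
Well-definedness (independence of representative) follows because any isomorphism of doubly-rooted $k$-colored graphs preserves distances, colors, and the ordered pair of roots, hence restricts to an isomorphism of the $k$-ball around each root. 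The key structural identity is then that for every edge $e=(u,v) \in E$,
\[
\Local\bigl(\nbrtype{G}{\chi}{k}{e}\bigr) \;=\; x_u\,(1-x_v).
\]
To verify this, note that $\nbrtype{G}{\chi}{k}{e}$ is represented by $(G[N],\chi|_N,(u,v))$ with $N=\ball{G}{k}{e}$. Since $\ball{G}{k}{u} \subseteq N$, paths of length at most $k$ from $u$ in $G[N]$ coincide with those in $G$, so $\ball{G[N]}{k}{u} = \ball{G}{k}{u}$; applying \cref{prop:prelim:doubly-induced-subgph} gives $(G[N])[\ball{G[N]}{k}{u}] = G[\ball{G}{k}{u}]$ with matching restricted coloring, so the singly-rooted type extracted at $u$ equals $\tau_u$, and similarly at $v$. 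Consequently $\mathsf{xval}(T_u)=x_u$ and $\mathsf{xval}(T_v)=x_v$.

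Averaging the identity over $e \sim \Unif{E}$ yields $\Exp_{T \sim \EdgeDist{G}{\chi}{k}}[\Local(T)] = \val{G}{x}$. The upper bound $\val{G}{x} \le \maxval{G}$ then follows from \cref{eq:prelim:rounding} (apply it with $z$ the Boolean rounding at least as good as $x$, and observe that $\val{G}{z} \le \maxval{G}$ by definition). The lower bound $\val{G}{x} \ge \tfrac12 \maxval{G}$ is precisely \Cref{eq:half-approx}. The main technical wrinkle—essentially the only non-bookkeeping step—is verifying well-definedness of $T_a,T_b$ and the equality $\ball{G[N]}{k}{u} = \ball{G}{k}{u}$, so that the singly-rooted type extracted from $T$ coincides with the intrinsic type $\tau_u$ that $\mathsf{xval}$ was trained on; once this alignment is in place, the theorem reduces cleanly to the quoted guarantee.
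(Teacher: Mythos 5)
Your proposal is correct and follows essentially the same approach as the paper, which merely sketches the reduction: the vertex assignment $x_v$ of \cite{BFSS15,CLS17} depends only on the isomorphism class of $v$'s radius-$k$ neighborhood, so $x_u(1-x_v)$ is a function of the edge type (since the edge type determines the two endpoint types), and averaging over $e\sim\Unif{E}$ together with \cref{eq:half-approx} and \cref{eq:prelim:rounding} gives the bounds. You have simply made the extraction of singly-rooted types from the doubly-rooted type $T$, the well-definedness of $\Local$, and the identity $\ball{G[N]}{k}{u}=\ball{G}{k}{u}$ (via \cref{prop:prelim:doubly-induced-subgph}) explicit, which the paper leaves implicit.
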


We remark that this theorem is not stated explicitly in the papers \cite{BFSS15,CLS17}, but is directly implied by these works. The key fact is that in the deterministic algorithm for producing a fractional cut presented in \cite[\S4]{BFSS15}, the assignment to each vertex depends only on the isomorphism class of the radius-$k$ neighborhood of that vertex; thus, the probability any edge is satisfied depends only on the isomorphism classes of the neighborhoods of its two endpoints, and the type of the edge is only more informative (it contains additional information about the intersection of these two neighborhoods). A variant of this algorithm, which also suffices to prove the above theorem, is \cref{algo:cls17} in \cref{sec:multipass} below with $\alpha=0$. 

Immediately from properties of the total variation distance (in particular, \cref{prop:prelim:tv-diff}), we deduce:

\begin{corollary}\label{cor:wrapper-DICUT}
    Let $k,D \in \BN$ and let $\Local : \DbAllTypesDeg{k}{k}{D} \to [0,1]$ be the function in the previous theorem. For every $\epsilon > 0$ and $(G=(V,E),\chi : V \to [k])$ a $D$-bounded $k$-colored graph, if $\CD \in \Dist{\DbAllTypesDeg{k}{k}{D}}$ is such that $\tvdist{\CD}{\EdgeDist{G}{\chi}{k}} \leq \epsilon$, then \[ \frac12 \maxval{G} - 2\epsilon \leq \Exp_{T \sim \CD}[\Local(T)] - \epsilon \leq \maxval{G}. \]
\end{corollary}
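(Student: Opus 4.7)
The proof will be an immediate consequence of combining \cref{thm:types-to-dicut} with the total-variation/expectation lemma \cref{prop:prelim:tv-diff}. The plan is to bound $\Exp_{T \sim \CD}[\Local(T)]$ from both sides, and then translate into the slightly rearranged form stated in the corollary.

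First, I would invoke \cref{thm:types-to-dicut} to fix the two-sided inequality
\[
\tfrac12 \maxval{G} \leq \Exp_{T \sim \EdgeDist{G}{\chi}{k}}[\Local(T)] \leq \maxval{G}.
\]
Next, since $\Local$ takes values in $[0,1]$ and $\tvdist{\CD}{\EdgeDist{G}{\chi}{k}} \leq \epsilon$ by hypothesis, \cref{prop:prelim:tv-diff} applied with $f = \Local$ (viewed as a function on the finite set $\DbAllTypesDeg{k}{k}{D}$) yields
\[
\left\lvert \Exp_{T \sim \CD}[\Local(T)] - \Exp_{T \sim \EdgeDist{G}{\chi}{k}}[\Local(T)] \right\rvert \leq \epsilon.
\]

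Combining the two displays, I get $\Exp_{T \sim \CD}[\Local(T)] \leq \maxval{G} + \epsilon$ (upper bound) and $\Exp_{T \sim \CD}[\Local(T)] \geq \tfrac12 \maxval{G} - \epsilon$ (lower bound). Subtracting $\epsilon$ from both sides of these two inequalities produces exactly the claimed chain $\tfrac12 \maxval{G} - 2\epsilon \leq \Exp_{T \sim \CD}[\Local(T)] - \epsilon \leq \maxval{G}$.

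There is no real obstacle here: the content of the corollary is just the observation that the local algorithm's expected value is a bounded, affine functional of the edge-type distribution, so total variation distance translates directly into additive error. The only mild subtlety worth noting is that \cref{prop:prelim:tv-diff} is stated for functions into $[0,1]$, which matches the codomain of $\Local$ from \cref{thm:types-to-dicut}, so no rescaling is needed.
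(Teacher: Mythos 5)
Your proof is correct and matches the paper's intended argument: the paper simply remarks that the corollary follows ``immediately from properties of the total variation distance (in particular, \cref{prop:prelim:tv-diff})'', and you have supplied exactly the missing details---apply \cref{thm:types-to-dicut} for the baseline two-sided bound, apply \cref{prop:prelim:tv-diff} with $f=\Local$ to transfer from $\EdgeDist{G}{\chi}{k}$ to $\CD$, and subtract $\epsilon$ to match the stated form.
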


Now, we arrive at the main statements on estimating the edge-type distribution which we develop in the following sections.

\begin{theorem}[Single-pass adversarial-order estimator]\label{thm:est:adv-order}
For all $k, \ell, D \in \BN$ and $\epsilon,\delta > 0$, there exists $c > 0$ such that the following holds. There exists an $O(n^{1-c})$-space streaming algorithm that, for every $D$-bounded $k$-colored graph $(G = ([n], E), \chi : [n] \to [k])$, given (black-box access to) $\chi$ and a single, adversarially-ordered pass over $G$'s edges, outputs $\CD\in\Dist{\DbAllTypesDeg{k}{\ell}{D}}$ satisfying $\tvdist{\CD}{\EdgeDist{G}{\chi}{\ell}} \leq \epsilon$ except w.p. $\delta$.
\end{theorem}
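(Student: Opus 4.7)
The plan is to implement the sampling strategy sketched in the overview. Before the stream, we sample $S \subseteq [n]$ via an $O(1)$-wise independent hash with per-vertex inclusion probability $p$ (to be tuned), whose seed fits in $O(\log n)$ bits. During the stream, we maintain two structures: the induced multiset $E[S]$ of edges with both endpoints in $S$, and the global in- and out-degrees $\degin{G}{v}, \degout{G}{v}$ of every $v \in S$ incident to at least one stream edge. After the stream, for each $e \in E[S]$ we check whether every $w \in \ball{G[S]}{\ell-1}{e}$ satisfies $\deg{G[S]}{w} = \deg{G}{w}$; by \cref{prop:prelim:induced-db-nbhd} this test passes exactly when $\ball{G}{\ell}{e} \subseteq S$, in which case $\nbrtype{G[S]}{\chi|_S}{\ell}{e} = \nbrtype{G}{\chi}{\ell}{e}$. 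Letting $N_T$ count the accepted edges of induced type $T$, we output the distribution $\hat{\CD}$ obtained by renormalizing the weights $N_T/p^{a(T)}$ via \cref{prop:prelim:fix-norm}, where $a(T)$ is the number of vertices in $T$'s underlying graph.

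For correctness, a fixed type-$T$ edge $e$ of $G$ is accepted exactly when all $|\ball{G}{\ell}{e}| = a(T) \le a^* := O(D^\ell)$ vertices of $\ball{G}{\ell}{e}$ (bounded via \cref{prop:prelim:ball-size}) land in $S$, which under an $a^*$-wise independent hash happens with probability exactly $p^{a(T)}$; thus $\Exp[N_T] = |E_T| \cdot p^{a(T)}$ where $|E_T|$ denotes the number of type-$T$ edges in $G$. To bound the variance, writing $N_T = \sum_e X_e$ and taking the hash to be $2a^*$-wise independent, \cref{prop:prelim:edge-ball-indep} implies that $X_e$ and $X_{e'}$ are independent unless $\ball{G}{\ell}{e} \cap \ball{G}{\ell}{e'} \ne \emptyset$, which holds for at most $\Delta = \Delta(D,\ell)$ values of $e'$. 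A direct covariance calculation then yields $\mathrm{Var}(N_T) \le 2\Delta \cdot |E_T| \cdot p^{a(T)}$, after which Chebyshev's inequality, a union bound over the (constantly many) types in $\DbAllTypesDeg{k}{\ell}{D}$, and \cref{prop:prelim:fix-norm} give $\tvdist{\hat{\CD}}{\EdgeDist{G}{\chi}{\ell}} \le \epsilon$ with probability $\ge 1 - \delta$, provided $p^{a^*} \ge C/|E|$ for some $C = C(\epsilon, \delta, k, \ell, D)$.

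The streaming space decomposes into $O(\log n)$ for the hash seed and a global edge counter, $O(\log n)$ per stored edge of $E[S]$, and $O(\log n)$ per \emph{non-isolated} vertex of $G$ lying in $S$ (isolated vertices in $S$ consume no memory since no degree counter is ever allocated for them). Since $\Exp|E[S]| = p^2|E|$ and the expected number of non-isolated vertices in $S$ is at most $2p|E|$, setting $p = \Theta(|E|^{-1/a^*})$ achieves total space $O(|E|^{1 - 1/a^*} \log n) \le O(n^{1-1/a^*}\log n)$.

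The main obstacle is that the optimal $p$ depends on $|E|$, which the algorithm does not know in advance. The resolution is to run $O(\log n)$ parallel copies of the algorithm with $p$ varying over a geometric grid, while simultaneously counting $|E|$ exactly in $O(\log n)$ bits during the pass. At the end we select the copy whose $p$ is closest from above to $|E|^{-1/a^*}$; that copy meets the concentration requirement, and each copy individually uses $O(n^{1-1/a^*}\log n)$ space by the argument above (even when $p$ is ``too large'' for a sparse $G$, the oversampled vertices are isolated and cost nothing). This yields the claimed $O(n^{1-c})$ bound with $c := 1/(2a^*) > 0$.
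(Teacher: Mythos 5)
Your high-level plan is exactly the paper's (\cref{alg:adv-ord,alg:adv-ord:fixed-type}): sample $S$ via an $O(1)$-wise independent hash, store $G[S]$ and degree counters for non-isolated vertices in $S$, use \cref{prop:prelim:induced-db-nbhd} to accept only edges whose full radius-$\ell$ ball landed in $S$, concentrate via Chebyshev through \cref{prop:prelim:edge-ball-indep}, renormalize via \cref{prop:prelim:fix-norm}, and handle the unknown $|E|$ with a geometric grid of sampling rates together with an exact edge count. (Using a single global $a^*$ for all types rather than per-type $a(T)$ as the paper does is a harmless simplification.)

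The gap is in the space bound. You assert that \emph{each} of the $O(\log n)$ parallel copies uses $O(n^{1-1/a^*}\log n)$ space, justifying this with ``the oversampled vertices are isolated and cost nothing.'' That argument only works when $G$ has few non-isolated vertices. If $G$ is dense (say $|E| = \Theta(n)$, so almost every vertex is non-isolated), a copy run with $p$ much larger than the ``correct'' $|E|^{-1/a^*}$ — and the grid must contain such $p$ if it is to cover graphs with very few edges — will place $\Theta(p\,|E|)$ non-isolated vertices in $S$, which for $p$ near $1$ is $\Theta(n)$ storage. So without further safeguards, one copy on the grid can blow up to linear space on a worst-case input, and the claimed worst-case $O(n^{1-c})$ bound does not follow from the expectation argument alone. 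The paper handles this with two devices you omitted: (i) each copy tracks $|S|$ and aborts with \texttt{fail} the moment $|S| \geq n^{1-c}$ (\cref{line:adv-ord:space-check}), which enforces the worst-case space bound unconditionally; the separate Markov argument then shows the abort does not fire (w.h.p.) for the copy whose $\hat m$ is within a factor $2$ of $m$, which is the only copy whose output is used; and (ii) when the implied sampling scale $K \le 1$ (i.e., $\hat m$ is below a constant), the copy simply stores the whole stream, which is fine since only constantly many edges are then expected. Adding a per-copy cap and the small-$m$ fallback closes the gap; the rest of your argument is sound and matches the paper's.
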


\begin{theorem}[Single-pass random-order estimator]\label{thm:est:rand-order}
    For all $k, \ell, D \in \BN$ and $\epsilon,\delta > 0$, there exists $C > 0$ such that the following holds. There exists a $C \log n$-space streaming algorithm that, for every $D$-bounded $k$-colored graph $(G = ([n], E), \chi : [n] \to [k])$, given (black-box access to) $\chi$ and a single, randomly-ordered pass over $G$'s edges, outputs $\CD\in\Dist{\DbAllTypesDeg{k}{\ell}{D}}$ satisfying $\tvdist{\CD}{\EdgeDist{G}{\chi}{\ell}} \leq \epsilon$ except w.p. $\delta$.
\end{theorem}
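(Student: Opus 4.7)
The plan is to adapt the \cite{MMPS17} machinery for estimating vertex-neighborhood-type distributions in random-order streams to our edge-neighborhood setting. First, I would have the algorithm maintain a constant number $t = t(\epsilon, \delta, k, \ell, D)$ of independent \emph{slots}, each running a standard reservoir sampler over the incoming edge stream to pick a uniformly random seed, and maintaining alongside it a \emph{tracker}: a subgraph rooted at the current seed, reset to just the seed every time the reservoir updates, and augmented by every subsequently-arriving edge incident to an already-stored vertex at depth strictly less than $\ell$. At the end of the stream, each slot emits the type $T^{(i)} \in \DbAllTypesDeg{k}{\ell}{D}$ of its final tracker. Since each tracker contains at most $2D^\ell$ vertices by \cref{prop:prelim:ball-size}, the total space is $O(\log n)$.

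The heart of the analysis is the distributional identity I would prove by exchangeability: conditional on the final seed in a slot being any fixed edge $e$ with true radius-$\ell$ type $T$, the slot's output is distributed according to a fixed $M_T \in \Dist{\DbAllTypesDeg{k}{\ell}{D}}$ depending only on $T$ (and $k,\ell,D$). The key observation is that conditioning on ``seed equals $e$'' leaves the stream permutation uniformly random; a short induction then shows that the tracker's edges always lie inside the edge set $E_T$ of $e$'s true radius-$\ell$ neighborhood, and that the tracker's final state is determined by the restriction of the stream ordering to $E_T$ together with $T$. Since this restricted ordering is itself uniformly random with $e$ at a uniformly random rank, the identity follows. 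Crucially, because a visible subgraph never has more edges than the corresponding true subgraph, ordering types by edge count makes the stochastic matrix $M$ (with rows $M_T$) block lower-triangular with strictly positive diagonal entries, so $M$ is invertible with $\|M^{-1}\|$ bounded by a constant depending only on $(k,\ell,D)$.

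With the identity in hand, the algorithm forms $\hat p = \EmpDist{\DbAllTypesDeg{k}{\ell}{D}}{T^{(1)}, \ldots, T^{(t)}}$, computes $(M^\top)^{-1} \hat p$, and renormalizes via \cref{prop:prelim:fix-norm} to produce the output $\CD$. The $t$ slots' outputs are not exactly independent, but by \cref{prop:prelim:edge-ball-indep} and the bounded-degree assumption the probability that any two slots' $\ell$-neighborhoods overlap is $O(1/n)$; conditioning away this event makes the outputs mutually independent samples from $p = M^\top q$ where $q = \EdgeDist{G}{\chi}{\ell}$, after which \cref{prop:prelim:tv-samples} together with the bounded $\|M^{-1}\|$ delivers $\tvdist{\CD}{q} \le \epsilon$ except w.p. $\delta$ for a sufficiently large constant $t$.

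The hardest part will be the exchangeability argument establishing $T^{(i)} \sim M_T$, since the reservoir's random update time, the stream ordering of $E_T$-edges, and the seed's rank within that ordering are intricately correlated; isolating the true type $T$ as the sole relevant conditioning variable requires careful bookkeeping, especially to verify that the tracker's state genuinely stays confined to $E_T$ at every intermediate step. A secondary subtlety is ensuring that the inverse-and-renormalize step yields a genuine TV bound: this is handled by taking $t$ large enough that the empirical error $\|\hat p - p\|_1$ is driven well below the smallest diagonal entry of $M$ before inversion.
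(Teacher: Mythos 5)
Your proposal is correct but follows a genuinely different route. The paper's \Cref{alg:rand-ord:vis} avoids reservoir sampling entirely: it takes the first $t$ edges of the random-order stream as the slot seeds, then re-feeds those seeds (followed by the remainder of the stream) through the trackers, so the transfer kernel $\VisDist{T}$ grows each tracker over an \emph{entire} uniform ordering of the ball $E[N]$ (the seed's rank in the stream is irrelevant there, because the tracker is initialized with the seed's endpoints). Your reservoir-with-reset design instead resets the tracker at the last reservoir update, so only the suffix of the ordering after the seed is visible; the resulting kernel $M_T$ (seed at a uniformly random rank within $E[N]$) differs from $\VisDist{T}$ --- its diagonal entries shrink by roughly a $1/|E[N]|$ factor --- but it is still type-invariant and still triangular (your edge-count ordering and the paper's subgraph partial order give the same triangular structure), and both proofs finish with the identical $M^{-1}$-then-renormalize step via \Cref{prop:prelim:fix-norm}. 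The paper's seed-first design buys a materially simpler exchangeability argument: the seeds are automatically a without-replacement sample, and conditioned on them the rest of the stream is uniform, so \Cref{prop:rand-ord:strm-random} gives independence once the $\ell$-balls are disjoint, and the three-hybrid argument in \Cref{thm:rand-ord:vis-est} absorbs the with/without-replacement gap and the overlap event. Your route must additionally verify that conditioning on ``reservoir seed $=e$'' leaves the stream permutation uniform --- which does hold, because $\Pr[\text{slot}=e \mid \sigma^{-1}(e)=p]=1/m$ for every $p$ --- and that the tracker output is a function of $\sigma$ restricted to $E[N]$ together with $e$'s rank; this is precisely the bookkeeping you rightly flag as the hard step. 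One small correction: the overlap probability from \Cref{prop:prelim:edge-ball-indep} is $O(\Delta/m)$, not $O(1/n)$; this matters if $m$ is small, and the fix (implicit in the paper too) is to store the graph exactly when $m$ is below a constant threshold.
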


We are now ready to prove \cref{thm:single-pass:adversarial-order} and \cref{thm:single-pass:random-order}, assuming \cref{thm:est:adv-order} and \cref{thm:est:rand-order}.

\begin{proof}[Proof of \cref{thm:single-pass:adversarial-order}]
Let $h:[n]\rightarrow [100/\epsilon]$ be a $2$-wise independent hash function that describes a coloring $\chi: V \rightarrow [100/\epsilon]$. It follows from \cref{prop:coloring} that with probability at least $9/10$, the fraction of monochromatic edges is at most $\epsilon/2$. We ``delete'' these monochromatic edges from the stream and compute the $\mdcut$ value of the remaining graph $\tilde{G}$. With probability at least $9/10$, $|\maxval{\tilde{G}}-\maxval{G}|\le \epsilon/2$. For the reduced graph $\tilde{G}$, the hash function $h$ gives black-box access to a proper coloring of its vertices. Thus, we can apply the algorithm from \cref{thm:est:adv-order} to output a distribution $\CD\in\Dist{\DbAllTypesDeg{k}{\ell}{D}}$ satisfying $\tvdist{\CD}{\EdgeDist{\tilde{G}}{\chi}{\ell}} \leq \epsilon/4$ except w.p. $1/10$. It follows from \cref{cor:wrapper-DICUT} that \[ \frac12 \maxval{\tilde{G}} - \epsilon/2 \leq \Exp_{T \sim \CD}[\Local(T)] - \epsilon/4 \leq \maxval{\tilde{G}}. \] Applying the union bound, we conclude that there is a streaming algorithm which $(1/2-\epsilon)$-approximates the $\mdcut$ value of a $D$-bounded $n$-vertex graph in $O(n^{1-\Omega(1)})$ space using a single, adversarially-ordered pass over the list of edges, with probability at least $8/10$.
\end{proof}

The proof of \cref{thm:single-pass:random-order} is analogous.

\section{Adversarial-ordering, $o(n)$-space, single-pass algorithm}\label{sec:adv-ord}

In this section, we prove \cref{thm:est:adv-order}, using \cref{alg:adv-ord} that we describe below. The algorithm uses \cref{alg:adv-ord:fixed-type} as a sub-routine. In addition to the inputs given to \cref{thm:est:adv-order}, the latter algorithm takes as input a target type $T \in \DbAllTypesDeg{k}{\ell}{D}$, and an estimate $\hat{m} \in \BN$ for the number of edges in the graph, and outputs an estimate for the probability mass of $T$ in $\EdgeDist{G}{\chi}{\ell}$.

\begin{algorithm}[H]
    \caption{$\textsf{Bounded-Degree-Adversarial}_D(n, k, \chi, \ell, \epsilon, \delta, \vecsigma)$}
    \label{alg:adv-ord}
    \begin{algorithmic}[1]
        \Statex \textbf{Parameters:} Number of vertices $n \in \BN$, number of colors $k \in \BN$, coloring $\chi : [n] \to [k]$, maximum degree $D \in \BN$, radius $\ell \in \BN$, accuracy $\epsilon > 0$, and failure probability $\delta > 0$.
        \Statex \textbf{Input:}  A stream of edges $\vecsigma$ from $G$
        \State Maintain a global counter for the number of edges $m$
        \For{every integer $b$ from $0$ to $\lfloor \log (nD/2)\rfloor$}
        \For{every type $T\in  \DbAllTypesDeg{k}{\ell}{D}$}
         \State $Y_{b,T} \gets \textsf{Bounded-Degree-Adversarial-FixedType}_D(n, k, \chi, \ell, \epsilon/\NDbTypes{k}{\ell}{D}, \delta/\NDbTypes{k}{\ell}{D}, \vecsigma,T,2^b)$
        \EndFor    
     \EndFor
     \State $\gamma \gets \sum_{T \in \DbAllTypesDeg{k}{\ell}{D}} Y_{b,T}$
     \State $\CN \gets (\frac1\gamma Y_{b,T})_{T\in \DbAllTypesDeg{k}{\ell}{D}}$ where $b = \lfloor \log m \rfloor$
    \State Output $\CN$

    \alglinenoNew{algcommon}
    \alglinenoPush{algcommon}
    \end{algorithmic}
\end{algorithm}

Given a type $T \in \DbAllTypesDeg{k}{\ell}{D}$, and an arbitrary representative $(G,\chi,r,s)$ of $T$, let $a(T) := |\ball{G}{\ell}{r,s}|$ (one easily verifies that this does not depend on the choice of representative).

\begin{algorithm}[H]
    \caption{$\textsf{Bounded-Degree-Adversarial-FixedType}_D(n, k, \chi, \ell, \epsilon, \delta, \vecsigma,T,\hat{m})$}
    \label{alg:adv-ord:fixed-type}
    \begin{algorithmic}[1]

        \alglinenoPop{algcommon}
        
        \Statex \textbf{Parameters:} Number of vertices $n \in \BN$, number of colors $k \in \BN$, coloring $\chi : [n] \to [k]$, maximum degree $D \in \BN$, radius $\ell \in \BN$, accuracy $\epsilon > 0$, failure probability $\delta > 0$, target type $T \in \DbAllTypesDeg{k}{\ell}{D}$, and estimate $\hat{m} \in \BN$ for number of edges.
        \Statex \textbf{Input:} A stream of edges $\vecsigma$ from $G$.
        \Statex
        \Statex\textbf{Pre-processing:}

        \State Set $\Delta$ to be the constant given by \cref{prop:prelim:edge-ball-indep}.

        \State Set $K \gets ((\epsilon^2 \delta \hat{m})/(2D \Delta))^{1/a(T)}$.

        \If{$K \leq 1$}
        \State Store the entire input stream $\vecsigma$ and return the exact fraction of type-$T$ edges
        \EndIf

        \State Let $H : [n] \to [2^{\lfloor \log_2 K \rfloor}]$ be a $2a(T)$-wise independent hash function

        \State Initialize $S \gets \emptyset$ (set) and $F \gets \emptyset$ (multiset)

        \State Initialize $c \gets 1/(2\cdot a(T))$

        \Statex

        \Statex\textbf{Stream processing:}
        \For {edge $e$ in stream}\label{line:adv-ord:strm-proc}
        
            \For{$v \in \verts{e}$}
                \If{$H(v) = 1$ and $v \not\in S$}\label{line:adv-ord:hash-check}
                    \State $S \gets S \cup \{v\}$
                    \State $\degs[v] \gets 0$
                \EndIf

                \If{$v \in S$}
                \State $\degs[v] \gets \degs[v]+1$
                \EndIf
                
            \EndFor

            \If{$\verts{e} \subseteq S$}
                \State $F \gets F \cup \{e\}$
            \EndIf
            \If{$|S|\ge n^{1-c}$} \label{line:adv-ord:space-check}
            \State Terminate and output \texttt{fail}
            \EndIf
        \EndFor

        \Statex

        \Statex\textbf{Post-processing:}

        \State $X \gets 0$\label{line:adv-ord:X-init}

        \For{$u \neq v \in V$}\label{line:adv-ord:check-F}
            \If{$(u,v) \in F$}
                \If{every vertex $w \in \ball{(S,F)}{\ell-1}{u,v}$ has $\degs[w] = \deg{(S,F)}{w}$}\label{line:adv-ord:check-degs}
    
                    \If{$\nbrtype{(S,F)}{\chi}{\ell}{u,v} = T$}\label{line:adv-ord:check-T}
                        \State $X \gets X + \mult{F}{(u,v)}$\label{line:adv-ord:X-inc}
                    \EndIf
                
                \EndIf
            \EndIf
        \EndFor

        \State \textbf{return} $K^{a(T)} X/m$.

        \alglinenoPush{algcommon}
        
    \end{algorithmic}
\end{algorithm}

The key correctness lemma of this section is the following:

\begin{lemma}[Correctness and space bound for \Cref{alg:adv-ord:fixed-type}]\label{lem:adv-ord:fixed-type}
    Let $k,\ell,D \in \BN$ and $\epsilon,\delta > 0$ be constants. Let $n \in \BN$ and let $(G = (V,E), \chi : V \to[k])$ be a $k$-colored, $D$-bounded graph. Let $m = |E|$ and $\hat{m} \in \BN$. For every possible ordering $\vecsigma$ of $E$, \cref{alg:adv-ord:fixed-type} runs in $\tilde{O}(n^{1-c})$ space. Further, if $m/2 < \hat{m} \leq m$, w.p. $\geq 1-\delta$ the algorithm gives an output $Y$ satisfying
    \[
    \abs*{Y - \EdgeDist{G}{\ell}{\chi}(T)} \leq \epsilon.
    \]
\end{lemma}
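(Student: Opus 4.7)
The plan is to exhibit \Cref{alg:adv-ord:fixed-type} as a calibrated estimator of $\EdgeDist{G}{\chi}{\ell}(T) = |E_T|/m$ (where $E_T \subseteq E$ denotes the multiset of type-$T$ edges of $G$), and to prove sharp concentration by applying Chebyshev's inequality to a limited-independence vertex sample. The degenerate case $K \le 1$ is trivial: the definition of $K$ forces $\hat m = O(1)$ and hence $m \le 2\hat m = O(1)$, so the ``store the entire stream'' branch uses $O(\log n)$ bits and returns the exact fraction. I assume $K > 1$ from here on.

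The space bound and the abort event $|S| \ge n^{1-c}$ can be handled together. Since a vertex $v$ is only placed in $S$ upon the first incident edge being processed, $|S|$ is at most the number of non-isolated $v$ with $H(v)=1$, giving $\Exp[|S|] \le \min(2m, n)\cdot p$ for the effective sampling probability $p = 2^{-\lfloor \log_2 K \rfloor} \le 2/K$. Substituting $K \sim (\epsilon^2 \delta \hat m/(D\Delta))^{1/a(T)}$, $\hat m \le m \le Dn/2$, and $c = 1/(2a(T))$ shows $\Exp[|S|] = O(m^{1-1/a(T)}) = o(\delta n^{1-c})$ for $n$ large, so Markov's inequality bounds the abort probability by $\delta/2$. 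On the complement, I get the clean description $S = \{v : H(v)=1 \text{ and } v \text{ non-isolated}\}$, $F = E[S]$ as a multiset (both endpoints of any qualifying edge are already in $S$ by the time it is processed), and $\degs[v] = \deg_G(v)$ for every $v \in S$. The promised $\widetilde O(n^{1-c})$ space bound then follows from $|S| < n^{1-c}$, $|F| \le D|S|$, and the $O(\log n)$ bits needed for the $2a(T)$-wise independent hash polynomial.

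Next I reduce the post-processing counter to $X = \sum_{e \in E_T}\, \1[\ball{G}{\ell}{e} \subseteq S]$ (counted with multiplicity). The key input is \Cref{prop:prelim:induced-db-nbhd}: because $\degs$ coincides with $\deg_G$ on $S$, the loop's degree check is exactly condition~(1) of that proposition and is therefore equivalent to $\ball{G}{\ell}{e} \subseteq S$; the ``further'' clause then equates $\nbrtype{(S,F)}{\chi}{\ell}{e}$ with $\nbrtype{G}{\chi}{\ell}{e}$, so the subsequent type test picks out exactly the type-$T$ edges of $G$ whose entire $\ell$-neighborhood has been sampled. For each such $e$, $2a(T)$-wise independence of $H$ gives $\Exp[\1[\ball{G}{\ell}{e}\subseteq S]] = p^{a(T)}$ and, crucially, \emph{exactly} zero covariance between two such indicators whenever $\ball{G}{\ell}{e} \cap \ball{G}{\ell}{e'} = \emptyset$. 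The remaining overlapping pairs contribute covariance at most $p^{a(T)}$ each, and \Cref{prop:prelim:edge-ball-indep} caps the number of such pairs per edge by $\Delta$, so $\mathrm{Var}(X) \le |E_T| p^{a(T)} (1+\Delta)$. Taking $K$ to be a power of two (equivalently replacing $K$ by $2^{\lfloor \log_2 K \rfloor}$ throughout, which only tightens the previous estimates) makes $Kp = 1$ and therefore $\Exp[K^{a(T)} X/m] = |E_T|/m$ exactly. Chebyshev with threshold $t = \epsilon m / K^{a(T)}$ then bounds the deviation probability by $O(\Delta K^{a(T)}/(\epsilon^2 m))$, which the definition of $K$ makes $O(\delta)$; a final union bound with the abort event delivers the lemma.

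I expect the main obstacle to be the variance computation, where three ingredients must interlock cleanly: $2a(T)$-wise independence of $H$ to annihilate the disjoint-neighborhood covariances, the bounded-degree hypothesis invoked through \Cref{prop:prelim:edge-ball-indep} to cap the number of correlated pairs per edge, and the precise calibration of $K^{a(T)}$ against $\hat m$ so that Chebyshev's bound is $O(\delta)$ rather than merely $o(1)$. A secondary but persistent subtlety is the power-of-two rounding of the hash range, which must be absorbed into the normalization (or into the final choice of $K$) so that the estimator is genuinely unbiased.
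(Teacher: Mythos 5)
Your proposal is correct and takes essentially the same route as the paper's proof: characterize the post-processing counter $X$ via \Cref{prop:prelim:induced-db-nbhd}, compute its expectation from the $2a(T)$-wise independence of the hash, bound its variance by $O(\Delta D)\cdot\Exp[X]$ using \Cref{prop:prelim:edge-ball-indep}, apply Chebyshev, and union-bound with a Markov estimate on $|S|$. One thing you get right that the paper glosses over: your explicit observation that $K$ should be replaced by $2^{\lfloor\log_2 K\rfloor}$ so that $Kp=1$ is a genuine tidying, since the paper's proof establishes concentration of $X$ around $p^{a(T)}|E_T|$ but never explicitly reconciles this with the returned value $K^{a(T)}X/m$, which would otherwise be biased by a factor $(Kp)^{a(T)}\in[1,2^{a(T)}]$. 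One small oversight on your end: your variance bound $\Var(X)\le|E_T|p^{a(T)}(1+\Delta)$ implicitly treats each per-edge increment as a Bernoulli indicator, but since $E_T$ is a multiset the increment is $\mult{E}{e}\cdot\1[\ball{G}{\ell}{e}\subseteq S]$, which can be as large as $D$; the paper's bound $\Var[X]\le\Delta D\,\Exp[X]$ carries this extra factor of $D$, and you should too, though the calibration of $K$ has enough slack that this constant-factor difference is harmless.
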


\begin{proof}
    It follows from the condition on \cref{line:adv-ord:space-check} that the \cref{alg:adv-ord:fixed-type} always uses at most $\tilde{O}(n^{1-c})$ space (since the maximum degree is at most $D$, $|F|\le D |S|$).

    Now, we analyze the correctness of a \emph{hypothetical} version of \cref{alg:adv-ord:fixed-type} that does not perform the check in \cref{line:adv-ord:space-check}. We show that this hypothetical version satisfies both the correctness condition and $|S| \leq n^{1-c}$ w.p. $\delta$, which in turn implies the correctness result for the real algorithm.

    Note that $K > 1$ as long as $\hat{m}$ is at least a constant; further, when $K \leq 1$, the algorithm is guaranteed to be correct deterministically and uses only constant space. We assume hereafter that $K > 1$.

    Every non-isolated vertex passes the check in \cref{line:adv-ord:hash-check} with probability $p := 1/2^{\lfloor \log_2 K \rfloor}$. Thus, $2/K \geq p \geq 1/K$.

    The effect of the loop on \cref{line:adv-ord:strm-proc} is to guarantee the following at the end of the stream:
    \begin{enumerate}
        \item $S = \{v \in [n] : \deg{G}{v} > 0 \text{ and } H(v) = 1\}$.\label{item:adv-ord:S}
        \item $F = E[S]$, i.e., $F$ is the subset of edges in $E$ with both endpoints in $S$ (with multiplicity). Thus, in particular, the pair $(S,F)$ is the induced subgraph $G[S]$.\label{item:adv-ord:F}
        \item For every $v \in S$, $\degs[v] = \deg{G}{v}$.
    \end{enumerate}

    Thus, \cref{line:adv-ord:check-degs} checks whether every vertex $w \in \ball{G[S]}{\ell-1}{u,v}$ has $\deg{G[S]}{w} = \deg{G}{w}$. By \cref{prop:prelim:induced-db-nbhd}, this is equivalent to $\ball{G[S]}{\ell}{u,v} \subseteq S$. Further, \cref{line:adv-ord:check-T} checks if $\nbrtype{G[S]}{\chi|_S}{\ell}{u,v}=T$.

    Now, we analyze the random variable $X$, which controls the output of the algorithm. $X$ is set to $0$ on \cref{line:adv-ord:X-init} and increased on \cref{line:adv-ord:X-inc} for pairs $(u,v), u \neq v$ satisfying certain conditions. Let $X_{(u,v)}$ denote the amount by which $X$ is increased on \cref{line:adv-ord:X-inc} during the $(u,v)$-iteration, so that $X = \sum_{u\neq v\in V} X_{(u,v)}$. We claim:

    \begin{claim}
        For every $u \neq v \in V$, the following holds: If $(u,v) \in E$, $\ball{G}{\ell}{u,v} \subseteq S$, and $\nbrtype{G}{\chi}{\ell}{u,v} = T$, then $X_{(u,v)} = \mult{E}{(u,v)}$, and otherwise, $X_{(u,v)} = 0$.
    \end{claim}

    \begin{proof}
        Note that $X_{(u,v)} = \mult{F}{(u,v)}$ if and only if the checks on \cref{line:adv-ord:check-degs,line:adv-ord:check-F,line:adv-ord:check-T} all pass, and otherwise $X_{(u,v)} = 0$. To begin, since $F \subseteq E$, if $(u,v) \not\in E$, then $(u,v) \not\in F$, so the check on \cref{line:adv-ord:check-F} fails and $X_{(u,v)} = 0$. Otherwise, $(u,v) \in E$, and by definition of the induced subgraph, $\mult{E}{(u,v)} = \mult{F}{(u,v)}$. Next, \cref{line:adv-ord:check-degs} checks whether $\ball{G[S]}{\ell}{u,v} \subseteq S$. Finally, \cref{line:adv-ord:check-T} checks if $\nbrtype{G[S]}{\chi|_S}{\ell}{u,v}=T$. By \cref{prop:prelim:induced-db-nbhd}, $\ball{G[S]}{\ell}{u,v} \subseteq S$ implies that $\nbrtype{G[S]}{\chi|_S}{\ell}{u,v}=\nbrtype{G}{\chi|_S}{\ell}{u,v}$, so the check on \cref{line:adv-ord:check-T} passes if and only if $\nbrtype{G}{\chi|_S}{\ell}{u,v} = T$, as desired.
    \end{proof}

    Now let $\CT := \{e \in \toset{E} : \nbrtype{G}{\chi}{\ell}e = T\}$ denote the set of edges of type $T$. (Hence $|\CT| = \EdgeDist{G}{\ell}{\chi}(T) \cdot m$.) By the claim, $X_{(u,v)}$ is zero unless $(u,v) \in \CT$, in which case it equals $\mult{E}{e}$ iff $\ball{G}{\ell}{u,v} \subseteq S$. Hence we forget $X_e$ for $e \not\in \CT$ and write $X = \sum_{e \in \CT} X_e$. Further, recall that $S$ is a set which (a) contains each vertex with probability $p$ and (b) is $2a(T)$-wise independent (and in particular $a(T) = |\ball{G}{\ell}{u,v}|$). Thus $\Exp[X] = p^{a(T)} |\CT|$ and $(\Exp[X])^2 = p^{2a(T)} |\CT|^2$. Next, we compute
    \begin{align*}
    \Exp[X^2] &= \sum_{e, e' \in \CT} \Exp[X_e X_{e'}] \\
    &= \sum_{e,e' \in \CT : X_e,X_{e'}\text{ independent }} \Exp[X_e]\Exp[X_{e'}] + \sum_{e,e' \in \CT : X_e,X_{e'}\text{ dependent }} \Exp[X_e X_{e'}] \\
    &\leq (\Exp[X])^2 + \sum_{e,e' \in \CT : X_e,X_{e'}\text{ dependent }} \Exp[X_e X_{e'}]. \tag{$X_e,X_{e'}$ non-negative}
    \end{align*}
    Hence,
    \[
        \Var[X] = \Exp[X^2] - (\Exp[X]^2) \leq \sum_{e,e' \in \CT : X_e,X_{e'}\text{ dependent }} \Exp[X_e X_{e'}].
    \]

    Now when are $X_e$ and $X_{e'}$ dependent? Suppose $e = (u,v)$ and $e' = (u',v')$. Recall, $X_e$ is determined by whether $\ball{G}{\ell}{u,v} \subseteq S$ and similarly $X_{e'}$ by whether $\ball{G}{\ell}{u',v'} \subseteq S$. Recall that $|\ball{G}{\ell}{u,v}| = |\ball{G}{\ell}{u',v'}| = a(T)$ (since both $e$ and $e'$ are assumed to have type $T$) and therefore, by $2a(T)$-wise independence of $S$, $X_e$ and $X_{e'}$ are independent unless $\ball{G}{\ell}{u,v} \cap \ball{G}{\ell}{u',v'} \neq \emptyset$. Hence by \cref{prop:prelim:edge-ball-indep}, $X_e$ is dependent on $X_{e'}$ for at most $2D^{2\ell+4}$ distinct edges $e'$. Since $X_{e'} \leq D$, we deduce
    \[
        \Var[X] \leq D \sum_{e,e' \in \CT : X_e,X_{e'}\text{ dependent }} \Exp[X_e] \leq 2D^{2\ell+5} \sum_{e \in \CT}\Exp[ X_e] = \Delta D \Exp[X] \, ,
    \]
    where $\Delta$ is the constant from \cref{prop:prelim:edge-ball-indep}.
    
    Set $t := \frac{\epsilon \sqrt{p^{a(T)} m}}{\sqrt{\Delta D}}$, so that \[ t \sqrt{\Var[X]} \leq \frac{\epsilon \sqrt{p^{a(T)} m}}{\sqrt{\Delta D}} \cdot \sqrt{\Delta D \Exp[X]} = \epsilon p^{a(T)} \sqrt{m |\CT|} \leq \epsilon p^{a(T)} m. \] Hence by Chebyshev's inequality, \[ \Pr[|X-\Exp[X]| \geq \epsilon p^{a(T)} m] \leq \frac1{t^2}. \] Now we expand \[ \frac1{t^2} = \frac{\Delta D}{\epsilon^2 p^{a(T)} m} = \frac{\Delta D}{\epsilon^2 (2\Delta D/(\epsilon^2 \delta \hat{m}) m} = \frac{\delta \hat{m}}{2m} \leq \delta/2 \] since $\hat{m} \leq m$.
    
    Finally, let us analyze the number of vertices in $S$. Note that $\Exp[|S|] = pn_+$, where $n_+$ denotes the number of non-isolated vertices in $G$. Since $p = O(1/m^{1/a(T)})$ by assumption, $p n_+ = O(n_+/m^{1/a(T)}) \leq O(m^{1-1/a(T)}) = O(n^{1-1/a(T)})$ using $n_+ \leq 2m$ and $m \leq Dn$. By Markov's inequality, $\Pr[|S|\ge 2\Exp[|S|]/\delta]\le \delta/2$. For sufficiently large $n$, $2\Exp[|S|]/\delta$ is at most $n^{1-c}$. 
    
    Applying the union bound, we conclude that \cref{alg:adv-ord:fixed-type} succeeds with probability at least $1-\delta$.
\end{proof}

\begin{proof}[Proof of \cref{thm:est:adv-order}]
Again, $D,k,\ell,\epsilon, \delta$ are constants independent of $n$. Since \cref{alg:adv-ord} makes at most $O(\log n)$ calls to \cref{alg:adv-ord:fixed-type}, by \Cref{lem:adv-ord:fixed-type}, the space usage of \cref{alg:adv-ord} is at most $\tilde{O}(n^{1-c})$. We now prove correctness.

By definition, \cref{alg:adv-ord} outputs $\CN\in \Dist{\DbAllTypesDeg{k}{\ell}{D}}$ where for $T\in \DbAllTypesDeg{k}{\ell}{D}$, 
\[\CN(T) = \frac{Y_{b,T}}{\gamma},\] for $b = \lfloor \log m \rfloor$ where $\gamma = \sum_{T \in \DbAllTypesDeg{k}{\ell}{D}} Y_{b,T}$. Note that for this value of $b$, $m/2<2^b\le m$. Since $Y_{b,T}$ is the output of running \Cref{alg:adv-ord:fixed-type} with parameters $(n, k, \chi, \ell, \epsilon/\NDbTypes{k}{\ell}{D}, \delta/\NDbTypes{k}{\ell}{D}, \vecsigma,T,2^b)$, by \Cref{lem:adv-ord:fixed-type}, for every $T\in \DbAllTypesDeg{k}{\ell}{D}$, with failure probability at most  $\delta/\NDbTypes{k}{\ell}{D}$, we have
\[
\abs{Y_{b,T} - \EdgeDist{G}{\ell}{\chi}(T)} \leq \epsilon/\NDbTypes{k}{\ell}{D}.
\]
Thus, by the union bound, with failure probability at most $\delta$,
\[
\sum_{T \in \DbAllTypesDeg{k}{\ell}{D}} \abs{Y_{b,T} - \EdgeDist{G}{\ell}{\chi}(T)} \leq \epsilon.
\]
Finally, we apply \Cref{prop:prelim:fix-norm}.
\end{proof}

\section{Random-ordering, $O(\log n)$-space, single-pass algorithm}\label{sec:rand-ord}

\newcommand{\visnbhd}[4]{\mathsf{visnbhd}_{#1;#2}^{#3}(#4)}
\newcommand{\vistype}[5]{\mathsf{visnbhdtype}_{#1,#2;#3}^{#4}(#5)}
\newcommand{\VisDist}[1]{\mathsf{VisTypeDist}(#1)}
\newcommand{\EdgeVisDist}[3]{\mathsf{EdgeNbhdVisTypeDist}_{#1,#2}^{#3}}
\newcommand{\TNRG}{\BT_{\texttt{NoReplace-Global}}}
\newcommand{\TNRL}{\BT_{\texttt{NoReplace-Local}}}
\newcommand{\TRL}{\BT_{\texttt{Replace-Local}}}

Given an edge $e = (u,v)$, we use $\verts{e} := \{u,v\}$ to denote the set of $e$'s endpoints.

In this section, we prove \cref{thm:est:rand-order}. Toward this, we develop a notion capturing the subset of an edge's neighborhood resulting from greedily building a component (out to a fixed radius) from edges in an ordered stream.

\subsection{Visible types}

For a multiset $E$ of edges and an edge $e \in E$, $E \setminus \{e\}$ denotes $E$ with only a single copy of the edge $e$ removed.

\begin{definition}[Visible neighborhood]\label{def:rand-ord:vis}
Let $(G=(V,E), \chi : V \to [k], e = (u,v))$ be a doubly-rooted graph, and let $\sigma \in \Orderings{E}$. For $\ell \geq 1 \in \BN$, consider the following process:
\begin{enumerate}
    \item Initialize $\overline{V} \gets \{u,v\}$, $\overline{E} \gets \emptyset$. ($\overline{E}$ is a multiset.)
    \item For $i = 1,\ldots,m$:
    \begin{enumerate}
        \item If $\verts{\sigma(i)} \cap \ball{(\overline{V},\overline{E})}{\ell-1}{e} \neq \emptyset$, add $\sigma(i)$ to $\overline{E}$ and $\verts{\sigma(i)}$ to $\overline{V}$.
    \end{enumerate}
\end{enumerate}
The \emph{radius-$\ell$ visible neighborhood} of $e$ given the ordering $\sigma$, denoted $\visnbhd{G}{\sigma}{\ell}{e}$, is the graph $(\overline{V}, \overline{E})$.
\end{definition}

Note that if $N = \ball{G}{\ell}{e}$, then $\overline{V} \subseteq N$ and $\overline{E} \subseteq E[N]$. Also, the multiplicity of $e$ and the degrees of its endpoints are all preserved in $\visnbhd{G}{\sigma}{\ell}{e}$.

\begin{example}
    Let $G = ([4],\{(1,2),(2,3),(3,4)\})$. If $\sigma = ((2,3),(3,4))$, then $\visnbhd{G}{\sigma}{2}{1,2}$ is $([4],\{(1,2),(2,3),(3,4)\})$. On the other hand, if $\sigma = ((1,2),(3,4),(2,3))$, then the graph of $\visnbhd{G}{\sigma}{2}{1,2}$ is $([3],\{(1,2),(2,3)\})$, since the edge $(3,4)$ is not ``visible'' from $(1,2)$ at the time it is encountered in the stream.
\end{example}

\begin{definition}[Visible type]
Let $(G=(V,E),\chi : V \to [k], e)$ be a doubly-rooted $k$-colored graph and $\sigma\in \Orderings{E}$. For $\ell \in \BN$, the \emph{radius-$\ell$ visible neighborhood type} of $e$ given the ordering $\sigma$, denoted $\vistype{G}{\chi}{\sigma}{\ell}{e}$, is $\type{(\overline{V},\overline{E})}{\chi_{\overline{V}}}{e}$, where $(\overline{V},\overline{E})=\visnbhd{G}{\sigma}{\ell}{e}$.
\end{definition}

As usual, if $G$ is $D$-bounded, then $\vistype{G}{\chi}{\sigma}{\ell}{e} \in \DbAllTypesDeg{k}{\ell}{D}$.

Let $\phi : E \to E'$ be a bijection between multisets $E$ and $E'$. (Properly speaking, $\phi$ is a bijection $\toset{E} \to \toset{E'}$ such that $\mult{E}{e} = \mult{E'}{\phi(e)}$ for all $e \in \toset{E}$.) We write $\phi(\sigma) \in \Orderings{E'}$ to denote the ordering $(\phi(\sigma))(i) = \phi(\sigma(i))$.

\begin{proposition}\label{prop:rand-ord:visnbhd-iso-inv}
    Let $k,\ell \in \BN$, let $(G=(V,E),\chi : V \to [k],e)$ and $(G'=(V',E'),\chi' : V' \to [k], e')$ be two isomorphic doubly-rooted $k$-colored graphs via a map $\phi : V \to V'$. Let $\sigma \in \Orderings{E}$. We have $\vistype{G}{\chi}{\sigma}{\ell}{e} = \vistype{G'}{\chi'}{\phi(\sigma)}{\ell}{e'}$.
\end{proposition}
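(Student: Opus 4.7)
The plan is a straightforward parallel-simulation and induction argument: run the process from Definition \ref{def:rand-ord:vis} on $(G,\chi,e)$ with ordering $\sigma$ and on $(G',\chi',e')$ with ordering $\phi(\sigma)$ simultaneously, maintaining the invariant that the intermediate graphs correspond under (a restriction of) $\phi$. Since types are isomorphism invariants, matching the two final graphs under $\phi$ immediately yields the equality of visible types.

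More concretely, let $(\overline{V}_i,\overline{E}_i)$ and $(\overline{V}'_i,\overline{E}'_i)$ denote the states of the two simulations after step $i$ of the loop. I would prove by induction on $i$ that $\phi(\overline{V}_i) = \overline{V}'_i$ and $\phi(\overline{E}_i) = \overline{E}'_i$ (where $\phi$ acts on edges by $\phi(u,v) = (\phi(u),\phi(v))$, preserving multiplicities). The base case $i=0$ is immediate because $\phi(\{u,v\}) = \{\phi(u),\phi(v)\} = \verts{e'}$ and both edge-multisets are empty. For the inductive step, I would use two observations. First, by definition of $\phi(\sigma)$, the $i$-th edge encountered in the second simulation is $\phi(\sigma(i))$, whose endpoints are $\phi(\verts{\sigma(i)})$. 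Second, since $\phi$ is a graph isomorphism on $G$, it takes paths in $(\overline{V}_{i-1},\overline{E}_{i-1})$ (which sits as a subgraph of $G$) to paths in $(\overline{V}'_{i-1},\overline{E}'_{i-1})$ of the same length; together with $\phi(e) = e'$ this gives
\[
\ball{(\overline{V}'_{i-1},\overline{E}'_{i-1})}{\ell-1}{e'} \;=\; \phi\bigl(\ball{(\overline{V}_{i-1},\overline{E}_{i-1})}{\ell-1}{e}\bigr).
\]
Consequently the intersection test in step 2(a) fires at stage $i$ for the second simulation if and only if it fires for the first, and when it does, the updates on $\overline{V}$ and $\overline{E}$ commute with $\phi$ (with multiplicities preserved because $\phi$ is a bijection of multisets). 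The invariant is preserved.

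Finally, let $(\overline{V},\overline{E}) = \visnbhd{G}{\sigma}{\ell}{e}$ and $(\overline{V}',\overline{E}') = \visnbhd{G'}{\phi(\sigma)}{\ell}{e'}$. The invariant gives $\phi(\overline{V}) = \overline{V}'$ and $\phi(\overline{E}) = \overline{E}'$; since $\phi$ is a color-preserving bijection ($\chi'(\phi(w)) = \chi(w)$ for all $w \in V$) and $\phi(e) = e'$, the restriction $\phi|_{\overline{V}}$ witnesses an isomorphism of doubly-rooted $k$-colored graphs between $((\overline{V},\overline{E}), \chi|_{\overline{V}}, e)$ and $((\overline{V}',\overline{E}'), \chi'|_{\overline{V}'}, e')$. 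Hence $\type{(\overline{V},\overline{E})}{\chi|_{\overline{V}}}{e} = \type{(\overline{V}',\overline{E}')}{\chi'|_{\overline{V}'}}{e'}$, which by definition is $\vistype{G}{\chi}{\sigma}{\ell}{e} = \vistype{G'}{\chi'}{\phi(\sigma)}{\ell}{e'}$.

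The only subtle point, and thus the main thing to be careful about, is bookkeeping with multisets: one must check that $\phi$ indeed induces a multiset bijection $\overline{E} \to \overline{E}'$ and that the visibility condition depends only on $\toset{\overline{V}}$ and $\toset{\overline{E}}$ (it does, since paths and balls are defined from the underlying set structure). Everything else is a routine consequence of ``isomorphism preserves paths''.
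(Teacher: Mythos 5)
Your proposal is correct and takes essentially the same approach as the paper: simulate Definition \ref{def:rand-ord:vis} in parallel on $(G,\chi,e,\sigma)$ and $(G',\chi',e',\phi(\sigma))$, and show by induction on the loop index that $\phi$ restricted to the accumulated vertex set remains an isomorphism between the two partial visible neighborhoods (so the intersection test in step 2(a) fires in one simulation iff it fires in the other, and the updates commute with $\phi$). Your extra bookkeeping remarks about multiset preservation and about the visibility test depending only on the underlying graph structure are sound but match what the paper treats implicitly.
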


(In this statement, we are using that $\phi$ induces a bijection $E \to E'$ in the natural way: $\phi(u,v) = (\phi(u),\phi(v))$.)

\begin{proof}
    Let $(\overline{V},\overline{E}) = \visnbhd{G}{\sigma}{\ell}{e}$ and $(\overline{V}',\overline{E}') = \visnbhd{G[N]}{\sigma[N]}{\ell}{e'}$. Imagine running the procedure in \cref{def:rand-ord:vis} to build up $(\overline{V},\overline{E})$ and $(\overline{V}',\overline{E}')$ in parallel. We claim that $\phi|_{\overline{V}}$ always yields an isomorphism between these two graphs. Initially, $\overline{V} = \verts{e},\overline{E}=\{\},\overline{V}'= \verts{e'},\overline{E}'=\{\}$, so $\phi|_{\overline{V}}$ indeed gives an isomorphism. We claim that after every step of the iteration in \cref{def:rand-ord:vis}, $\phi|_{\overline{V}}$ still gives an isomorphism. Indeed, at the $i$-th iteration, inductively we have $(\overline{V}',\overline{E}') = \phi((\overline{V},\overline{E}))$. Thus, the conditions ``$\verts{\sigma(i)} \cap \ball{(\overline{V},\overline{E})}{\ell-1}{e} \neq \emptyset$'' and ``$\verts{\phi(\sigma(i))} \cap \ball{\phi(\overline{V},\overline{E})}{\ell-1}{\phi(u),\phi(v)} \neq \emptyset$'' are equivalent, and so the edge is added in one graph iff it is added in the other. Finally, we observe that the operations ``add $\sigma(i)$ to $\overline{E}$ and $\verts{\sigma(i)}$ to $\overline{V}$'' and ``add $\phi(\sigma(i))$ to $\overline{E}'$ and $\verts{\phi(\sigma(i))}$ to $\overline{V}'$'' maintain the invariant that $\phi|_{\overline{V}}$ yields an isomorphism.
\end{proof}

Now, suppose $G = (V,E)$ is a graph and $\sigma \in \Orderings{E}$. For $U \subseteq V$, we write $\sigma[U] \in \Orderings{E[U]}$ for the \emph{induced ordering} on $U$, that is, the subordering consisting only of edges whose endpoints are both in $U$.

\begin{proposition}\label{prop:rand-ord:vis-nbhd}
    Let $(G = (V,E), \chi : V \to [k], e)$ be a doubly-rooted $k$-colored graph and $\sigma\in \Orderings{E}$. Then \[ \visnbhd{G}{\sigma}{\ell}{e} = \visnbhd{G[N]}{\sigma[N]}{\ell}{e} \] where $N := \ball{G}{\ell}{e}$.
\end{proposition}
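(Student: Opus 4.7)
The plan is to run the greedy process of \cref{def:rand-ord:vis} in parallel on $(G,\sigma)$ and on $(G[N],\sigma[N])$, and argue that the two runs produce identical intermediate (and hence final) graphs. Both runs begin in the same state $\overline{V}=\verts{e}$, $\overline{E}=\emptyset$, and the inclusion condition for adding an edge depends only on the current $(\overline{V},\overline{E})$ and on the edge's endpoints, so the only thing to verify is that the stream of candidate edges is ``effectively'' the same.

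The heart of the argument is the invariant that, throughout the run on $(G,\sigma)$, we have $\overline{V}\subseteq N$ (and therefore $\overline{E}\subseteq E[N]$ as multisets). This is true at initialization since $\verts{e}\subseteq N$. For the inductive step, suppose at some point the process on $(G,\sigma)$ considers an edge $\sigma(i)$ and tries to add it. This requires $\sigma(i)$ to have an endpoint $w\in\ball{(\overline{V},\overline{E})}{\ell-1}{e}$; since $(\overline{V},\overline{E})$ is a subgraph of $G$, we get $w\in\ball{G}{\ell-1}{e}$, and then the other endpoint $w'$ satisfies $w'\in\ball{G}{\ell}{e}=N$ because it is adjacent to $w$ in $G$. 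Hence $\sigma(i)\in E[N]$, and the invariant is preserved. Contrapositively, every edge in $E\setminus E[N]$ is simply ignored by the process on $(G,\sigma)$.

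Now I would finish by observing that the subsequence of $\sigma$ consisting of edges in $E[N]$ is exactly $\sigma[N]$ by the definition of induced ordering. So a straightforward induction on $i$ shows that the state $(\overline{V},\overline{E})$ after processing the $i$-th edge of $\sigma[N]$ in the run on $(G,\sigma)$ coincides with the state after the $i$-th iteration of the run on $(G[N],\sigma[N])$: both apply the identical rule ``add iff an endpoint lies in $\ball{(\overline{V},\overline{E})}{\ell-1}{e}$'' to the same edge starting from the same state. Taking $i$ to the end yields $\visnbhd{G}{\sigma}{\ell}{e}=\visnbhd{G[N]}{\sigma[N]}{\ell}{e}$.

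The main (and really only) obstacle is cleanly verifying the invariant $\overline{V}\subseteq N$, since the greedy process can in principle reach vertices far from $\verts{e}$; the point is that additions only happen via vertices at $(\overline{V},\overline{E})$-distance $\leq\ell-1$ from $\verts{e}$, which by subgraph containment are at $G$-distance $\leq\ell-1$, so their neighbors lie within the $\ell$-ball $N$. Once this invariant is in hand, the remainder is routine bookkeeping about subsequences and identical decision rules.
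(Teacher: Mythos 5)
Your proposal is correct and follows essentially the same route as the paper: both run the greedy process in parallel on $(G,\sigma)$ and on $(G[N],\sigma[N])$ and argue by induction on the iteration that the states coincide, with the key observation being that any edge triggering an addition has one endpoint in $\ball{(\overline{V},\overline{E})}{\ell-1}{e} \subseteq \ball{G}{\ell-1}{e}$ and hence both endpoints inside $N = \ball{G}{\ell}{e}$, so edges outside $E[N]$ are never added. Your phrasing via the explicit invariant $\overline{V}\subseteq N$ is a slightly cleaner packaging of the same argument the paper makes when it checks the equivalence of the two add-conditions.
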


\begin{proof}
    Let $(\overline{V},\overline{E}) = \visnbhd{G}{\sigma}{\ell}{e}$ and consider running the iterative procedure for calculating the visible neighborhood as in \cref{def:rand-ord:vis}. Consider also the calculation for $(\overline{V}',\overline{E}') = \visnbhd{G[N]}{\sigma[N]}{\ell}{e}$, which by \cref{def:rand-ord:vis} and the definition of induced orderings is equivalent to the following procedure:

    \begin{enumerate}
    \item Initialize $\overline{V}' \gets \verts{e}, \overline{E}' \gets \{\}$.
    \item For $i = 1,\ldots,m-1$:
    \begin{enumerate}
        \item If $\verts{\sigma(i)} \not\subseteq N$, skip this iteration.
        \item If $\verts{\sigma(i)} \cap \ball{(\overline{V}',\overline{E}')}{\ell-1}{e} \neq \emptyset$, add $\sigma(i)$ to $\overline{E}'$ and $\verts{\sigma(i)}$ to $\overline{V}'$.
    \end{enumerate}
    \end{enumerate}

    Note that initially, $\overline{V} = \overline{V}'$ and $\overline{E} = \overline{E}'$. Now we claim that after each loop iteration, these equalities still hold. Inductively, it suffices to prove that at every iteration, we ``add $\sigma(i)$ to $\overline{E}$ and $\verts{\sigma(i)}$ to $\overline{V}$'' iff we ``add $\sigma(i)$ to $\overline{E}'$ and $\verts{\sigma(i)}$ to $\overline{V}'$''. In other words, the condition ``$\verts{\sigma(i)} \cap \ball{(\overline{V},\overline{E})}{\ell-1}{e} \neq \emptyset$'' must be equivalent to the condition ``$\verts{\sigma(i)} \subseteq N$ and $\verts{\sigma(i)} \cap \ball{(\overline{V}',\overline{E}')}{\ell-1}{e} \neq \emptyset$''. But note that inductively, $\overline{V}'=\overline{V}$ and $\overline{E}' = \overline{E}$, and furthermore, if $\verts{\sigma(i)} \cap \ball{(\overline{V}',\overline{E}')}{\ell-1}{e} \neq \emptyset$, then $\verts{\sigma(i)} \subseteq \ball{(\overline{V}',\overline{E}')}{\ell}{e} \subseteq N $.
\end{proof}

\subsection{Distributions of visible types given random orderings}

Next, we consider what happens with a uniformly random ordering:

\begin{definition}[Visible type distribution of a type]
    Let $k, \ell,D \in \BN$, and let $T \in \DbAllTypesDeg{k}{\ell}{D}$ be a type. The \emph{visible type distribution} of $T$, denoted $\VisDist{T}$, is the following distribution on $\DbAllTypesDeg{k}{\ell}{D}$:
    \begin{enumerate}
        \item Pick any representative $(G = (V,E),\chi : V \to [k],e)$ for $T$.
        \item Sample $\sigma \sim \Unif{\Orderings{E}}$.
        \item Output $\vistype{G}{\chi}{\sigma}{\ell}{e}$.
    \end{enumerate}
\end{definition}

To see that this distribution is well-defined, i.e., does not depend on the choice of representative of $T$, observe that \cref{prop:rand-ord:visnbhd-iso-inv} says that if $(G = (V,E),\chi : V \to [k],e)$ and $(G' = (V',E'),\chi : V' \to [k],e')$ are isomorphic via $\phi : V \to V'$ then $\vistype{G}{\chi}{\sigma}{\ell}{e} = \vistype{G'}{\chi'}{\phi(\sigma)}{\ell}{e'}$. Further, if $\sigma \sim \Unif{\Orderings{E}}$, then $\phi(\sigma) \sim \Unif{\Orderings{E'}}$.

Our next proposition states that given a partition of a graph's vertices into blocks, sampling a single global ordering of all edges and looking at the induced ordering on each block is the same as sampling independent orderings on each block.

\begin{proposition}[Local vs. global sampling: orderings]\label{prop:rand-ord:induced-random}
Let $t \in \BN$ and $G=(V,E)$ be a graph. Let $e_1,\ldots,e_t \in E$ such that for $N_i := \ball{\ell-1}{G}{e_i}$, $N_i \cap N_j = \emptyset$ for $i \neq j$. The following distributions on $\prod_{i=1}^t \Orderings{E[N_i]}$ are identical:
\begin{enumerate}
    \item Sample $\sigma \sim \Unif{\Orderings{E}}$ and output $(\sigma[N_1],\ldots,\sigma[N_t])$.
    \item Sample $(\sigma^1,\ldots,\sigma^t)$ with $\sigma^i \sim \Unif{\Orderings{E[N_i]}}$ for $i \in [t]$ independently.
\end{enumerate}
\end{proposition}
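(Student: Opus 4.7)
The plan is to couple a uniformly random $\sigma \in \Orderings{E}$ with a procedure that makes the induced orderings on each $E[N_i]$ manifestly independent and uniform. Specifically, I would attach an independent $\Unif{[0,1]}$ random variable $Y_{e,c}$ to each ``copy'' of each edge in $E$ (indexing the copies of $e$ by $c \in [\mult{E}{e}]$), sort these variables in increasing order, and read off the corresponding edges (identifying copies of the same edge) to obtain $\sigma : [|E|] \to \toset{E}$.

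First I would verify that this procedure yields a uniformly random element of $\Orderings{E}$: almost surely the $Y$'s are all distinct, and every multiset ordering $\sigma \in \Orderings{E}$ is produced by exactly $\prod_{e \in \toset{E}} \mult{E}{e}!$ orderings of the distinguishable copies, so every element of $\Orderings{E}$ has equal probability $\prod_e \mult{E}{e}! / |E|!$. This is the standard representation of a uniform random ordering via i.i.d.\ continuous variables.

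Next I would use the hypothesis $N_i \cap N_j = \emptyset$ to observe that the sub-multisets $E[N_i]$ are pairwise disjoint inside $E$: an edge with both endpoints in $N_i$ cannot simultaneously have both endpoints in $N_j$. Hence the families $\{Y_{e,c} : e \in \toset{E[N_i]}, c \in [\mult{E[N_i]}{e}]\}$ for different $i$ are disjoint, and therefore independent. Moreover, $\sigma[N_i]$ is obtained by the same sort-and-read-off rule, but restricted to $E[N_i]$ and its associated $Y$'s. Applying the same argument as above with $E[N_i]$ in place of $E$, I conclude that $\sigma[N_i]$ is uniformly distributed on $\Orderings{E[N_i]}$. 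Joint independence of the tuple $(\sigma[N_1],\ldots,\sigma[N_t])$ then follows from independence of the underlying $Y$-families, matching distribution~(2) in the statement.

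The only subtlety to watch out for is the multiset structure --- $\Orderings{E}$ treats copies of the same edge as interchangeable. The continuous-variable trick handles this cleanly by reasoning about distinguishable copies and then quotienting out the $\prod_e \mult{E}{e}!$ symmetries uniformly; a direct combinatorial counting argument would also work but would require explicit bookkeeping of these multiplicities.
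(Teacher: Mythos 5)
Your proof is correct. The paper states \cref{prop:rand-ord:induced-random} without giving an explicit proof (it is treated as a routine fact), so there is no route to compare against, but the coupling you use --- attaching i.i.d.\ $\Unif{[0,1]}$ ``ranks'' to edge copies, sorting, and observing that the induced ordering on $E[N_i]$ depends only on the ranks of $E[N_i]$'s copies --- is the standard and clean way to see this. You correctly identify the two key points: (i) the sub-multisets $E[N_i]$ are pairwise disjoint inside $E$ because $N_i \cap N_j = \emptyset$ and an edge has both endpoints in at most one $N_i$, so the rank families are disjoint and hence independent, and (ii) the per-block sort-and-read-off reproduces the argument for uniformity over $\Orderings{E[N_i]}$. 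You also handle the multiset subtlety correctly by quotienting out the $\prod_e \mult{E}{e}!$ relabelings of indistinguishable copies. I see no gap.
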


We now extend this statement to hold for visible types in a graph: If we start from a ``spread-out'' set of base edges $e_1,\ldots,e_t$ (i.e., such that the radius-$\ell$ balls around each pair are disjoint), sampling a single global ordering on $E$ and looking at the visible type from each base edge is the same as sampling individual orderings on the neighborhood of each base edge:

\begin{proposition}[Local vs. global sampling: visible types]\label{prop:rand-ord:strm-random}
    Let $k,\ell,D \in \BN$ and let $(G=(V,E),\chi:V\to[k])$ be a $D$-bounded $k$-colored graph. Let $\{e_1,\ldots,e_t\} \subseteq E$ be such that for each $i \neq j \in [t]$, $\ball{G}{\ell}{e_i} \cap \ball{G}{\ell}{e_j} = \emptyset$. Let $\tau = (e_1,\ldots,e_t)$. Then the following distributions on $(\DbAllTypesDeg{k}{\ell}{D})^t$ are identical:
    \begin{enumerate}
        \item Sample $\sigma' \sim \Unif{\Orderings{E \setminus \{e_1,\ldots,e_t\}}}$. Then output \[ (\vistype{G}{\chi}{\tau \| \sigma'}{\ell}{e_1},\ldots,\vistype{G}{\chi}{\tau \| \sigma'}{\ell}{e_t}), \]
        where $\tau \| \sigma'$ denotes the ordering on $E$ given by concatenation.
        \item Output a sample from the product distribution \[ \VisDist{\nbrtype{G}{\chi}{\ell}{e_1}} \times \cdots \times \VisDist{\nbrtype{G}{\chi}{\ell}{e_t}}. \]
    \end{enumerate}
\end{proposition}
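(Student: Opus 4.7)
The plan is to reduce item 1 to a product of $t$ independent local distributions, one per base edge $e_i$, and then match each local distribution to $\VisDist{\nbrtype{G}{\chi}{\ell}{e_i}}$. Set $N_i := \ball{G}{\ell}{e_i}$. By \cref{prop:rand-ord:vis-nbhd}, $\vistype{G}{\chi}{\tau\|\sigma'}{\ell}{e_i} = \vistype{G[N_i]}{\chi|_{N_i}}{(\tau\|\sigma')[N_i]}{\ell}{e_i}$ for every $i$, so the joint distribution of visible types is determined by the joint distribution of the restricted orderings $(\tau\|\sigma')[N_i]$. The disjointness hypothesis $N_i \cap N_j = \emptyset$ together with $\verts{e_j} \subseteq N_j$ forces $e_j \notin E[N_i]$ whenever $j \neq i$, so $(\tau\|\sigma')[N_i] = (e_i) \| \sigma'[N_i]$ with $\sigma'[N_i] \in \Orderings{E[N_i]\setminus\{e_i\}}$. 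Applying \cref{prop:rand-ord:induced-random} to the graph $(V, E \setminus \{e_1,\ldots,e_t\})$ with the pairwise-disjoint vertex subsets $N_1,\ldots,N_t$, the tuple $(\sigma'[N_1],\ldots,\sigma'[N_t])$ is distributed as an independent product with $\pi_i \sim \Unif{\Orderings{E[N_i]\setminus\{e_i\}}}$ for each $i$.

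Next, taking $(G[N_i],\chi|_{N_i},e_i)$ as a representative of $\nbrtype{G}{\chi}{\ell}{e_i}$ (well-definedness follows from \cref{prop:rand-ord:visnbhd-iso-inv}), the distribution $\VisDist{\nbrtype{G}{\chi}{\ell}{e_i}}$ equals the law of $\vistype{G[N_i]}{\chi|_{N_i}}{\sigma_i}{\ell}{e_i}$ for $\sigma_i \sim \Unif{\Orderings{E[N_i]}}$. Combining with the previous paragraph, the proposition reduces to the following single-graph equidistribution: for any doubly-rooted $k$-colored graph $(H=(V_H,E_H),\chi,e)$, the law of $\vistype{H}{\chi}{(e)\|\pi}{\ell}{e}$ with $\pi \sim \Unif{\Orderings{E_H \setminus \{e\}}}$ equals the law of $\vistype{H}{\chi}{\sigma}{\ell}{e}$ with $\sigma \sim \Unif{\Orderings{E_H}}$.

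This single-graph equidistribution is the main obstacle. I plan to establish the stronger statement that $\vistype{H}{\chi}{\sigma}{\ell}{e}$ depends on $\sigma$ only through the sub-ordering of $\sigma$ on the non-$e$ edges of $E_H$. The argument is that, in the iteration of \cref{def:rand-ord:vis}, the two endpoints $u,v$ of $e$ lie in $\overline{V}$ from initialization and hence in $\ball{(\overline{V},\overline{E})}{\ell-1}{e}$ throughout. This has two consequences: (i) every copy of $e$ is admitted into $\overline{E}$ on first encounter, so the final multiplicity of $e$ in $\overline{E}$ is deterministic; and (ii) adding a copy of $e$ to $\overline{E}$ leaves $\ball{(\overline{V},\overline{E})}{\ell-1}{e}$ invariant, since any path of length $\leq \ell-1$ from $\{u,v\}$ to a third vertex that uses the new edge can be replaced by its suffix beginning at the last vertex along the path that lies in $\{u,v\}$, which itself lies in the graph without the new edge and has length no greater than the original. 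Consequently, the acceptance condition for each non-$e$ edge depends only on which non-$e$ edges preceded it, and the final $(\overline{V},\overline{E})$ is determined entirely by the sub-ordering of non-$e$ edges. Both candidate distributions in the reduction project to the uniform distribution on this sub-ordering (the second by the standard fact that restricting a uniform ordering of a multiset to a sub-multiset yields a uniform ordering of the sub-multiset), so they induce identical distributions on visible types, completing the proof.
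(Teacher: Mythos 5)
Your proof follows the paper's through the first two paragraphs: reduce via \cref{prop:rand-ord:vis-nbhd} to the restricted graphs $G[N_i]$, use \cref{prop:rand-ord:induced-random} to decouple the $t$ coordinates, and thereby reduce to a per-edge ``single-graph'' comparison between the law of $\vistype{H}{\chi}{(e)\|\pi}{\ell}{e}$ with $\pi \sim \Unif{\Orderings{E_H\setminus\{e\}}}$ and the law of $\vistype{H}{\chi}{\sigma}{\ell}{e}$ with $\sigma \sim \Unif{\Orderings{E_H}}$. Where you diverge is at the last step: the paper disposes of this comparison with the single unproven sentence that ``visible neighborhoods are invariant to the position of the base edge in the stream,'' whereas you supply an actual argument. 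Your key lemma --- that $\vistype{H}{\chi}{\sigma}{\ell}{e}$ is a function only of the sub-ordering of $\sigma$ on the non-$e$ edges --- is both correct and strictly cleaner than the paper's informal claim: since $\verts{e}=\{u,v\}\subseteq\overline V$ from initialization, (i) each copy of $e$ passes the acceptance test whenever it arrives, and (ii) inserting a copy of $e$ into $\overline E$ leaves $\ball{(\overline V,\overline E)}{\ell-1}{e}$ unchanged, because any radius-$(\ell-1)$ path from $\{u,v\}$ through that edge can be cut at its last visit to $\{u,v\}$ to give a path of no greater length avoiding it. This inductively shows the acceptance decision for each non-$e$ edge, and hence the final $(\overline V,\overline E)$, depends only on the non-$e$ sub-ordering; both candidate orderings then project to the uniform distribution on that sub-ordering. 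This is a genuine improvement in rigor over the paper's terse final sentence, and it also handles the case $\mult{E_H}{e}>1$ cleanly, where ``the position of the base edge'' is not even well-defined. One small nitpick: when $\mult{E_H}{e}>1$, the projection of $(e)\|\pi$ to the non-$e$ edges also requires the ``restriction of a uniform ordering is uniform'' fact (since $\pi$ still carries $\mult{E_H}{e}-1$ copies of $e$); you invoke it only for the second candidate, but the same fact covers the first.
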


\begin{proof}
    Let $N_i := \ball{G}{\ell}{e_i}$ denote the $\ell$-neighborhood of the $i$-th edge. Note that $(E \setminus \{e_1,\ldots,e_t\})[N_i] = (E[N_i]) \setminus \{e_i\}$ by disjointness.
    
    Now consider the first listed distribution. By \cref{prop:rand-ord:vis-nbhd}, for each $i \in [t]$, \[ \vistype{G}{\chi}{\tau \| \sigma'}{\ell}{e_i} = \vistype{G[N_i]}{\chi|_{N_i}}{(\tau \| \sigma')[N_i]}{\ell}{e_i}. \]

    Remember $\sigma'$ is an ordering on $E \setminus \{e_1,\ldots,e_t\}$. $\tau \| \sigma'$ is an ordering on $E$. In both cases, we can take an induced ordering on $N_i$ to get an ordering on $E[N_i]$. Now we observe that the induced ordering $(\tau \| \sigma')[N_i]$ on $G[N_i]$ is the same as $(\tau[N_i]) \| (\sigma'[N_i]) = e_i \| (\sigma'[N_i])$. Hence \[
    \vistype{G[N_i]}{\chi|_{N_i}}{(\tau \| \sigma')[N_i]}{\ell}{e_i} = \vistype{G[N_i]}{\chi|_{N_i}}{e_i \| \sigma'[N_i]}{\ell}{e_i}.
    \]
    Now recall that $\sigma' \sim \Unif{\Orderings{E \setminus \{e_1,\ldots,e_t\}}}$. Letting $\sigma^i = \sigma'[N_i]$, by \cref{prop:rand-ord:induced-random}, $(\sigma^1,\ldots,\sigma^t)$ is also distributed as $\sigma^i$ being drawn from $\Unif{\Orderings{E[N_i]\setminus\{e_i\}}}$ independently. Hence the first distribution is equivalent to sampling uniformly random and independent orderings $(\sigma^1,\ldots,\sigma^t)$ on $E[N_1] \setminus \{e_1\},\ldots,E[N_t] \setminus \{e_t\}$, respectively, and outputting \[ (\vistype{G[N_1]}{\chi|_{N_1}}{e_1 \| \sigma^1}{\ell}{e_1},\ldots,\vistype{G[N_t]}{\chi|_{N_ti}}{e_t \| \sigma^t}{\ell}{e_t}). \] Since visible neighborhoods are invariant to the position of the base edge in the stream, this is precisely the product distribution $(T_1,\ldots,T_t)$ where $T_i \sim \VisDist{\nbrtype{G}{\chi}{\ell}{e_i}}$ independently, as desired.
\end{proof}

\subsection{The random-ordering algorithm}

\begin{definition}[Visible type distribution of a graph]
    Let $k, \ell, D \in \BN$ and let $(G=(V,E), \chi : V \to [k])$ be a $k$-colored graph with maximum-degree $D$. The \emph{visible edge neighborhood-type distribution} of $G$, denoted $\EdgeVisDist{G}{\chi}{\ell}$, is the distribution over $\DbAllTypesDeg{k}{\ell}D$ given by sampling $T \sim \EdgeDist{G}{\chi}{\ell}$ and then outputting a sample from $\VisDist{T}$.
\end{definition}

\begin{algorithm}[H]
    \caption{Random-ordering estimator of visible type distribution}\label{alg:rand-ord:vis}
    \begin{algorithmic}[1]
        \alglinenoPop{algcommon}
        
        \Statex \textbf{Parameters:} Number of vertices $n \in \BN$, number of colors $k \in \BN$, coloring $\chi : [n] \to [k]$, maximum degree $D \in \BN$, radius $\ell \in \BN$, accuracy $\epsilon > 0$, failure probability $\delta > 0$, and sampling parameter $t \in \BN$.
        \Statex \textbf{Input:} Randomly-ordered stream of edges $\{(e_i)\}$ from $G$.
        
        \Statex
        \Statex\textbf{Stream processing, phase 1:}
        \For {edge $e$ in first $t$ edges of stream}
            \State $e_i \gets e$
            \State $\overline{V}_i \gets \verts{e_i}$ (set)
            \State $\overline{E}_i \gets \{\}$ (multiset)
        \EndFor

        \Statex
        \Statex\textbf{Stream processing, phase 2:}
        \For {edge $e =e_1,\ldots,e_t$ and then remaining edges in stream}
            \For {$i = 1,\ldots,t$}
                \If {$\verts{e} \cap \ball{(\overline{V}_i,\overline{E}_i)}{\ell-1}{e_i} \neq \emptyset$} 
                    \State $\overline{E}_i \gets \overline{E}_i \cup \{e\}$
                    \State $\overline{V}_i \gets \overline{V}_i \cup \verts{e}$
                \EndIf
            \EndFor
        \EndFor

        \Statex
        \Statex\textbf{Post-processing:}

        \For {$i = 1,\ldots,t$}
        \State $T_i := \type{(\overline{V}_i,\overline{E}_i)}{\chi|_{\overline{V}_i}}{e_i}$\label{line:rand-ord:compute-types}
        \EndFor
        
        \State \textbf{return} $\EmpDist{\DbAllTypesDeg{k}{\ell}{D}}{T_1,\ldots,T_t}$.

        \alglinenoPush{algcommon}
        
    \end{algorithmic}
\end{algorithm}

\begin{theorem}\label{thm:rand-ord:vis-est}
    For all $k, \ell, D \in \BN$ and $\epsilon,\delta > 0$, there exists an $O(\log n)$-space streaming algorithm that, for every $k$-colored graph $(G = ([n], E), \chi : [n] \to [k])$ with maximum degree $D$, given $\chi$ and a single, randomly-ordered pass over $G$'s edges, outputs $\CD \in \Dist{\DbAllTypesDeg{k}{\ell}{D}}$ satisfying $\tvdist{\CD}{\EdgeVisDist{G}{\chi}{\ell}} \leq \epsilon$ except w.p. $\delta$.
\end{theorem}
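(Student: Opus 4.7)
The plan is to argue that the joint distribution of the types $(T_1, \ldots, T_t)$ computed on \cref{line:rand-ord:compute-types} of \cref{alg:rand-ord:vis} is within total-variation distance $\epsilon/2$ of the i.i.d.\ product $(\EdgeVisDist{G}{\chi}{\ell})^t$; the theorem then follows by applying \cref{prop:prelim:tv-samples} to choose $t = t(\epsilon, \delta, k, \ell, D)$ large enough that the empirical distribution of $t$ i.i.d.\ samples from any distribution on $\DbAllTypesDeg{k}{\ell}{D}$ is within TV-distance $\epsilon/2$ of the true distribution with probability $\geq 1 - \delta/2$, together with the data-processing inequality (\cref{prop:prelim:data-proc}) applied to the deterministic map from type-tuples to empirical distributions. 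The space usage is $O(t \cdot D^\ell \log n) = O(\log n)$, since by \cref{prop:prelim:ball-size} each of the $t$ visible neighborhoods $(\overline V_i, \overline E_i)$ has $O(D^\ell)$ vertices.

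The core TV comparison factors into three steps chained via the triangle inequality. First, the tuple $(e_1, \ldots, e_t)$ of the first $t$ edges of the uniformly random stream is distributed as $\NoReplace{t}{E}$, which by \cref{prop:prelim:with-vs-without-replacement} is within TV-distance $\epsilon/8$ of $(\Unif{E})^t$, provided $|E|$ exceeds a threshold $m_0 = m_0(t, \epsilon)$. Second, when $(e_1, \ldots, e_t) \sim (\Unif{E})^t$, the event that the $\ell$-balls $\ball{G}{\ell}{e_1}, \ldots, \ball{G}{\ell}{e_t}$ are pairwise disjoint holds with probability $\geq 1 - \epsilon/8$: by \cref{prop:prelim:edge-ball-indep} any fixed edge has at most $\Delta$ other edges whose $\ell$-balls meet its own, so a union bound over $\binom{t}{2}$ pairs yields failure probability $\leq t^2 \Delta/(2|E|)$, again small for $|E|$ large enough. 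Third, conditioned on this ``spread-out'' event, \cref{prop:rand-ord:strm-random} asserts that the joint distribution of $(T_1, \ldots, T_t)$ produced by the algorithm (which processes the global stream ordering $(e_1, \ldots, e_t) \| \sigma'$, with $\sigma'$ uniform on the remaining edges) is \emph{exactly} the product $\prod_{i=1}^t \VisDist{\nbrtype{G}{\chi}{\ell}{e_i}}$.

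Unwinding the definition of $\EdgeVisDist{G}{\chi}{\ell}$ as ``sample $e \sim \Unif{E}$ and then output a draw from $\VisDist{\nbrtype{G}{\chi}{\ell}{e}}$'', this product coincides exactly with $(\EdgeVisDist{G}{\chi}{\ell})^t$ when $(e_1, \ldots, e_t) \sim (\Unif{E})^t$ i.i.d., so chaining the three bounds above yields the desired TV-distance $\leq \epsilon/2$. The main subtlety is the degenerate regime $|E| < m_0$, where both the without-replacement approximation and the disjoint-ball union bound degrade; but there $|E|$ is bounded by a constant depending only on $t, \Delta, \epsilon$, and the algorithm can be modified to store the entire stream in $O(\log n)$ bits and compute $\EdgeVisDist{G}{\chi}{\ell}$ exactly. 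Modulo this easy case distinction, all remaining work is routine bookkeeping with the preliminaries in \cref{sec:prelim}; the only interesting moving parts are the chain of TV-distance reductions and the use of bounded degree to establish the spread-out event.
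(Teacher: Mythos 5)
Your proof follows essentially the same three-step hybrid decomposition the paper uses: (i) swap without-replacement to with-replacement edge sampling via \cref{prop:prelim:with-vs-without-replacement}, (ii) use the spread-out event plus \cref{prop:rand-ord:strm-random} to replace the stream-computed types by independent local draws from $\VisDist{\cdot}$, and (iii) apply i.i.d.\ concentration of the empirical distribution via \cref{prop:prelim:tv-samples}. The paper names the intermediate hybrids $\TNRG,\TNRL,\TRL$ explicitly and bounds the spread-out failure probability under without-replacement sampling (where the base edges are automatically distinct, as the hypothesis of \cref{prop:rand-ord:strm-random} requires), whereas you bound it under with-replacement sampling and implicitly extend the algorithm's behavior to tuples with repeats; both routes give the same $O(t^2\Delta/|E|)$ bound, so these are cosmetic differences.

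One bookkeeping slip does matter as written. You target $\tvdist{\cdot}{\cdot}\le\epsilon/2$ between the algorithm's type-tuple distribution and $(\EdgeVisDist{G}{\chi}{\ell})^t$, then invoke \cref{prop:prelim:tv-samples} with accuracy $\epsilon/2$ and failure probability $\delta/2$. Transferring the i.i.d.\ bad-event probability across a TV gap of $\epsilon/2$ (via \cref{prop:prelim:tv-diff} on the indicator of the bad set, which is what is actually needed here, rather than the data-processing inequality alone) yields a final failure probability of at most $\delta/2+\epsilon/2$, which exceeds $\delta$ whenever $\epsilon>\delta$. The remedy is easy---both the with/without-replacement TV gap and the spread-out failure probability decay like $O(t^2/|E|)$ for fixed $t$, so they can be pushed below $\delta/4$ each once $|E|\ge m_0(t,\delta)$---but the chain bounds must be parametrized by $\delta$ rather than $\epsilon$, exactly as the paper does with its three $\delta/3$ terms in \cref{eq:rand-ord:E*-corr}, \cref{eq:rand-ord:E-E'}, and \cref{eq:rand-ord:E'-E*}.
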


\begin{proof}
    Let $t \in \BN$ be a constant to be determined later. Fix a graph $G = ([n], E)$ and a coloring $\chi : [n] \to [k]$. Consider running \cref{alg:rand-ord:vis} given the parameters $n,k,\chi,D,\ell,\epsilon,\delta,t$. We claim that there exists $t$ such that for sufficiently large $n$, the output $\hat{\CD}$ of the algorithm satisfies the desideratum except w.p. $\delta$. This probability is only over the choice of the random ordering of the stream of the edges $E$ --- \cref{alg:rand-ord:vis} is deterministic.

    Now, sampling uniform $\sigma \sim \Unif{\Orderings{E}}$ is equivalent to sampling $(e_1,\ldots,e_t) \sim \NoReplace{t}{E}$, sampling $\sigma' \sim \Unif{\Orderings{E \setminus \{e_1,\ldots,e_t\}}}$ independently, and then concatenating these orderings. Thus running \cref{alg:rand-ord:vis} is equivalent to the following:
    \begin{enumerate}
        \item The first phase samples $(e_1,\ldots,e_t) \sim \NoReplace{t}{E}$. Let $\tau^i = (e_1,\ldots,e_t)$.
        \item In the second phase, we sample $\sigma' \sim \Unif{\Orderings{E \setminus \{e_1,\ldots,e_t\}}}$. By the definition of visible types, and the disjointness of neighborhoods, for $i \in [t]$ we set $(\overline{V}_i,\overline{E}_i) = \visnbhd{G}{\tau \| \sigma'}{\ell}{e_i}$.
    \end{enumerate}
    Thus, the entire algorithm uses only $O(\log n)$ space, since each $(\overline{V}_i,\overline{E}_i)$ grows only to constant size. Furthermore, $T_i = \vistype{G}{\chi}{\sigma'}{\ell}{e_i}$ on \cref{line:rand-ord:compute-types}.

    Now, let $\TNRG$ denote the distribution of $\vecT = (T_1,\ldots,T_t)$ as defined on \cref{line:rand-ord:compute-types}. Rehashing the previous paragraph, $\TNRG$ is equivalent to the following two-step process:

    \begin{enumerate}
        \item Sample $(e_1,\ldots,e_t) \sim \NoReplace{t}{E}$. Let $\tau = (e_1,\ldots,e_t)$.
        \item Sample a uniform ordering $\sigma' \sim \Orderings{E \setminus \{e_1,\ldots,e_t\}}$, set $T_i \gets \vistype{G}{\chi}{\tau \| \sigma'}{\ell}{e_i}$, and output $(T_1,\ldots,T_t)$.
    \end{enumerate}
    The algorithm's correctness is then equivalent to the statement that
    \begin{equation}\label{eq:rand-ord:E}
        \Pr_{\vecT \sim \TNRG}[\tvdist{\EmpDist{\DbAllTypesDeg{k}{\ell}{D}}{\vecT}}{\EdgeVisDist{G}{\chi}{\ell}} \geq \epsilon] \leq \delta.
    \end{equation}
    Intuitively, the algorithm is correct because (i) sampling $(e_1,\ldots,e_t)$ without replacement is close to sampling $(e_1,\ldots,e_t)$ with replacement (\cref{prop:prelim:with-vs-without-replacement}) and (ii) \emph{conditioned} on the (highly likely) event that the radius-$\ell$ balls around $(e_1,\ldots,e_t)$ are all disjoint, the visible types of $(e_1,\ldots,e_t)$ given a random ordering $\sigma$ of the stream are random draws from $\VisDist{\nbrtype{G}{\chi}{\ell}{e_1}},\ldots,\VisDist{\nbrtype{G}{\chi}{\ell}{e_t}}$ (\Cref{prop:rand-ord:strm-random}). We formalize this intuition below.

    In particular, we introduce two ``hybrid'' modifications of $\TNRG$. Let $\TNRL$ denote the following distribution:
    \begin{enumerate}
        \item Sample $(e_1,\ldots,e_t) \sim \NoReplace{t}{E}$.
        \item Sample $T_i \sim \VisDist{\nbrtype{G}{\chi}{\ell}{e_i}}$ for $i \in [t]$ independently and output $(T_1,\ldots,T_t)$.
    \end{enumerate}
    Let $\TRL$ denote the distribution:
    \begin{enumerate}
        \item Sample $(e_1,\ldots,e_t) \sim (\Unif{E})^t$.
        \item Sample $T_i \sim \VisDist{\nbrtype{G}{\chi}{\ell}{e_i}}$ for $i \in [t]$ independently and output $(T_1,\ldots,T_t)$.
    \end{enumerate}
    Note that $\TNRG$, $\TNRL$, and $\TRL$ are all two-step sampling processes which first sample a $t$-tuple of edges and then sample corresponding ``visible types''. $\TNRG$ and $\TNRL$ differ only in the second step, while $\TNRL$ and $\TRL$ differ only in the first step.

    To prove the algorithm's correctness (\cref{eq:rand-ord:E}), we prove that there exists $t \in \BN$ such that:
    \begin{equation}\label{eq:rand-ord:E*-corr}
        \Pr_{\vecT \sim \TRL}[\tvdist{\EmpDist{\DbAllTypesDeg{k}{\ell}{D}}{\vecT}}{\EdgeVisDist{G}{\chi}{\ell}} \geq \epsilon] \leq \delta/3,
    \end{equation}
    and that for large enough $n \in \BN$,
    \begin{align}
        \tvdist{\TNRG}{\TNRL} &\leq \delta/3 \label{eq:rand-ord:E-E'} \\
        \tvdist{\TNRL}{\TRL} &\leq \delta/3 \label{eq:rand-ord:E'-E*}.
    \end{align}
    Together, these imply \cref{eq:rand-ord:E}. Indeed, define the ``bad event'' indicator $I : (\DbAllTypesDeg{k}{\ell}{D})^t \to \{0,1\}$ via $I(\vecT) := \1[\tvdist{\EmpDist{\DbAllTypesDeg{k}{\ell}{D}}{\vecT}}{\EdgeVisDist{G}{\chi}{\ell}} \geq \epsilon]$. Then we have
    \begin{multline*}
    \Exp_{\vecT \sim \TRL}[I(\vecT)] \leq \delta/3, \left|\Exp_{\vecT \sim \TNRG}[I(\vecT)] - \Exp_{\vecT \sim \TNRL}[I(\vecT)]\right| \leq \delta/3, \\
    \text{ and } \left|\Exp_{\vecT \sim \TNRL}[I(\vecT)] - \Exp_{\vecT \sim \TRL}[I(\vecT)]\right| \leq \delta/3,
    \end{multline*}
    via \cref{eq:rand-ord:E*-corr,prop:prelim:tv-diff,eq:rand-ord:E-E',eq:rand-ord:E'-E*}, and so the triangle inequality gives \[ \Exp_{\vecT \sim \TNRG}[I(\vecT)] = \Pr_{\vecT \sim \TNRG}[\tvdist{\EmpDist{\DbAllTypesDeg{k}{\ell}{D}}{\vecT}}{\EdgeVisDist{G}{\chi}{\ell}} \geq \epsilon] \leq \delta, \] as desired.

    Finally, it remains to prove these three inequalities. 
    
    \paragraph{Proving \cref{eq:rand-ord:E*-corr}.} We observe that in the definition of $\TRL$, each $T_i$ is distributed independently as $\EdgeVisDist{G}{\chi}{\ell}$, and for sufficiently large $t$, the empirical distribution will be $\epsilon$-close to the true distribution except with probability $\delta/3$ by \cref{prop:prelim:tv-samples}.

    \paragraph{Proving \cref{eq:rand-ord:E-E'}.} Note that if $e_1,\ldots,e_t$ have the property that for all $i \neq j \in [t]$, $\ball{G}{\ell}{e_i} \cap \ball{G}{\ell}{e_j} = \emptyset$, then by \cref{prop:rand-ord:strm-random}, the $T_i$'s are sampled independently as $\VisDist{\nbrtype{G}{\chi}{\ell}{e_i}}$, just as in $\TNRL$. So, by \cref{prop:prelim:sample-diff}, \[
    \tvdist{\TNRG}{\TNRL} \leq \Pr_{(e_1,\ldots,e_t)\sim\CE} [\exists i \neq j \in [t] : \ball{G}{\ell}{e_i} \cap \ball{G}{\ell}{e_j} \neq \emptyset]. \] But taking a union bound over $i \neq j$ and then conditioning on $e_i$, we can upper-bound this probability by $\binom{t}2 \frac{\Delta}{m-1}$, where $\Delta$ is the constant in \cref{prop:prelim:edge-ball-indep}. This bound is smaller than $\delta/3$ if $m \gg \Delta t^2/\delta$.
    
    \paragraph{Proving \cref{eq:rand-ord:E'-E*}.} Observe that $\TNRL$ and $\TRL$ are defined by sampling $(e_1,\ldots,e_t)$ from the ``without replacement'' distribution $\NoReplace{t}{E}$ and the ``with replacement'' $(\Unif{E})^t$, respectively, and then outputting the empirical distribution of $\{T_i \sim \VisDist{\nbrtype{G}{\chi}{\ell}{e_i}} : i \in [t]\}$. By \cref{prop:prelim:with-vs-without-replacement}, $\tvdist{\NoReplace{t}{E}}{(\Unif{E})^t} \leq \epsilon$. Hence by \cref{prop:prelim:data-proc}, $\tvdist{\TNRL}{\TRL} \leq \epsilon$.
\end{proof}

Finally, we arrive at:

\begin{proof}[Proof of \cref{thm:est:rand-order}]
    By invoking \cref{thm:rand-ord:vis-est}, we are given $\CD_{vis} \in \Dist{\DbAllTypesDeg{k}{\ell}{D}}$ such that $\tvdist{\CD_{vis}}{\EdgeVisDist{G}{\chi}{\ell}} \leq \epsilon$, and want to use this to build an estimate $\CD \in \Dist{\DbAllTypesDeg{k}{\ell}{D}}$ for $\EdgeDist{G}{\chi}{\ell}$. Our basic strategy is to view this as a \emph{linear algebraic} problem and apply some (extremely simple) tools from linear algebra.

    To begin, we expand by definition:
    \begin{align*}
    \EdgeVisDist{G}{\chi}{\ell}(T') &= \Exp_{T \sim \EdgeDist{G}{\chi}{\ell}} [(\VisDist{T})(T')] \\
    &= \sum_{T \in \DbAllTypesDeg{k}{\ell}{D}} \EdgeDist{G}{\chi}{\ell}(T) \cdot (\VisDist{T})(T').
    \end{align*}
    We can write this as a matrix-vector equation $\vecv_{vis} = M\vecv$, where $\vecv,\vecv_{vis} \in \BR^{\DbAllTypesDeg{k}{\ell}{D}}$ are vectors indexed by $\DbAllTypesDeg{k}{\ell}{D}$ and defined by $v_{vis}(T) = \EdgeVisDist{G}{\chi}{\ell}(T)$ and $v(T) = \EdgeDist{G}{\chi}{\ell}(T)$, respectively, and $M \in \BR^{\DbAllTypesDeg{k}{\ell}{D} \times \DbAllTypesDeg{k}{\ell}{D}}$ is indexed by pairs $\DbAllTypesDeg{k}{\ell}{D} \times \DbAllTypesDeg{k}{\ell}{D}$ given by $M(T',T) = (\VisDist{T})(T')$.
    
    Now observe that $M$ is invertible. To see this, as in \cite{MMPS17}, we define a partial ordering on $\DbAllTypesDeg{k}{\ell}{D}$ via $T' \preceq T$ iff any representative of $T'$ is isomorphic to any subgraph of any representative of $T$, and observe that $(\VisDist{T})(T') > 0$ iff $T' \preceq T$. Then, we pick any total ordering on $\DbAllTypesDeg{k}{\ell}{D}$ respecting the partial ordering; with respect to this ordering, $M$ is upper-triangular and has nonzero diagonal entries and such matrices are always diagonalizable.

    Let $\lambda$ denote the $1$-to-$1$ norm of $M^{-1}$, which is a constant depending only on $k,\ell,D$.\footnote{The $1$-to-$1$ norm of a matrix $M$ is $\sup_{\|x\|_1 = 1} \|Mx\|_1$.} Let $\vecd_{vis} \in \BR^{\DbAllTypesDeg{k}{\ell}{D}}$ be defined by $d_{vis}(T) = \CD_{vis}(T)$. We let $\vecd' \in \BR^{\DbAllTypesDeg{k}{\ell}{D}}$ be defined by $\vecd' := M^{-1} \vecd_{vis}$. By assumption on $\CD_{vis}$, $\|\vecd_{vis} - \vecv_{vis}\|_1 \leq 2\epsilon$ (cf. the definition of total variation distance, \cref{def:prelim:tv-dist}). 
    
    Finally, we bound $\|\vecd'-\vecv\|_1 = \|M^{-1} \vecd_{vis} - M^{-1} \vecv_{vis} \|_1 \leq \lambda \|\vecd_{vis} - \vecv_{vis}\|_1 \leq 2\lambda \epsilon$ since $\lambda$ is the $1$-to-$1$ norm of $M^{-1}$. Hence by \Cref{prop:prelim:fix-norm}, $\hat{\CD}(T) := d'(T)/(\sum_{T'} d'(T'))$ is a probability distribution satisfying $\tvdist{\hat{\CD}}{\Dist{\DbAllTypesDeg{k}{\ell}{D}}} \leq \|\vecd' - \vecv\|_1 \leq 4\lambda \epsilon$. The result follows from reparametrizing $\epsilon$.
\end{proof}


\section{Adversarial-ordering, $O(\log n)$-space, $O(1/\epsilon)$-pass algorithm}
\label{sec:multipass}


\newcommand\aPos{\mathtt{pos}}
\newcommand\aOpt{\mathtt{opt}}
\newcommand\aPiece{\mathtt{pre}}
\newcommand\aZero{\mathtt{zero}}
\newcommand\aOne{\mathtt{one}}
\newcommand\aX{\mathtt{x}}

\newcommand\aPosEst{\mathsf{P}}

\def\bits{\set*{ 0, 1 }}

\def\cutoff{\mathsf{cutoff}}
\def\cutest{\mathsf{Cut}\text{-}\mathsf{Est}}

\newcommand{\dirIn}{{\mathtt{in}}}
\newcommand{\dirOut}{{\mathtt{out}}}
\newcommand{\dirLow}{{\mathtt{lo}}}
\newcommand{\dirHigh}{{\mathtt{hi}}}

\newcommand{\vY}[3]{y_{#1} \paren*{#2,#3}} 
\newcommand{\vYNoAlpha}[2]{y_{#1} \paren*{#2}} 
\newcommand{\vZ}[2]{z_{#1}^\dirLow \paren*{#2}} 
\newcommand{\vA}[3]{a_{#1}\paren*{#2,#3}}

\newcommand{\vZEst}[2]{Z_{#1}^\dirLow \paren*{#2}} 

The goal of this section is to prove \cref{thm:multi-pass}, as formalized below:
\begin{theorem}[Formal version of \cref{thm:multi-pass}]
\label{thm:multipass-formal}
For all $\epsilon > 0$ small enough, there exists a randomized streaming algorithm that on input an $n$-vertex directed graph $G$ uses $\frac{200}{ \epsilon }$ passes and $\epsilon^{ - \frac{ 10^5 }{ \epsilon } } \cdot \log n$ space and outputs a value $\cutest$ satisfying:
\[
\Pr\paren*{ \paren*{ \frac{1}{2} - \epsilon } \cdot \maxval{G} \leq \cutest \leq \maxval{G} } \geq \frac{9}{10} .
\]
\end{theorem}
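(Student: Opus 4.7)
My plan is to prove \cref{thm:multipass-formal} by constructing a ``robust'' variant of the \cite{BFSS15,CLS17} local algorithm that, at each vertex, only explores a constant-size random sample of its lower-colored neighbors (rather than the full neighborhood), and then implementing this variant in the streaming model with one pass per color layer.

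\textbf{Preprocessing and the target quantity.} First, as in \cref{sec:nbr-types}, I would use a $2$-wise independent hash to impose a random $k$-coloring with $k = \Theta(1/\epsilon)$, discarding monochromatic edges at a cost of at most $O(\epsilon)$ in approximation. I would then recall that the \cite{BFSS15,CLS17} algorithm produces a fractional assignment $x : V \to [0,1]$ with $\val{G}{x} \geq \frac12 \maxval{G}$, where each $x_v$ is computed from (i) the sum $\sum_u x_u$ over lower-colored neighbors $u$ of $v$ and (ii) simple in/out-degree statistics of $v$ to lower- and higher-colored neighbors. The goal is to approximate $\val{G}{x}$ to within $\epsilon$ in $O(k)$ passes and $\epsilon^{-O(1/\epsilon)} \log n$ space.

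\textbf{Robust local algorithm and streaming implementation.} For each vertex $v$ I would define a distribution $Y_v$ over $[0,1]$ recursively: if $\chi(v) = 1$, output the exact $x_v$ from degree statistics; otherwise draw $D = \mathrm{poly}(1/\epsilon)$ lower-colored neighbors $u_1,\ldots,u_D$ independently with replacement, draw $y_i \sim Y_{u_i}$ independently, and plug $\frac{1}{D}\sum_i y_i$ (appropriately rescaled) in place of the exact lower-neighbor average into the \cite{BFSS15,CLS17} formula. The streaming implementation is then a layered BFS: in a preliminary pass I would reservoir-sample $N(\epsilon)$ seed edges $(u_j, v_j)$; in pass $i$ each currently active vertex $w$ of color $k-i+1$ uses a single scan to (a) count its in/out-degrees split by lower/higher color and (b) run $D$ parallel reservoir samplers to pick $D$ random lower-colored neighbors, which become active in pass $i+1$. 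The final estimator averages $y_{u_j}(1-y_{v_j})$ over $j$, where $y_{u_j}, y_{v_j}$ are independent fresh draws from $Y_{u_j}, Y_{v_j}$. Since recursion only descends to strictly lower colors, the tree terminates in $k$ passes, and per seed vertex stores at most $D^k = \epsilon^{-O(1/\epsilon)}$ active items, giving the claimed space.

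\textbf{Error analysis, the main obstacle.} The crux is bounding how errors compound through the $k$-level recursion. At each level I would parametrize a target accuracy $\alpha_i \in (0,1)$ with a geometric schedule $\alpha_{i+1} \leq \alpha_i/2$ (so $\alpha_k \geq \Omega(\epsilon)$ can be arranged), and prove inductively that $\Pr[|Y_v - x_v| \geq \alpha_i] \leq \alpha_i$ for every color-$i$ vertex $v$. Two error sources must fit inside $\alpha_i$: (ii-bad sample) the empirical mean of $D$ independently-sampled lower-colored neighbors deviates from the true mean, controlled by Hoeffding's inequality \cref{prop:hoeffding} with $D = \mathrm{poly}(1/\alpha_i)$; and (ii-recursive) each $y_j$ deviates from $x_{u_j}$, controlled by the inductive hypothesis at accuracy $\alpha_{i-1}$ together with \cref{claim:cls17exp-helper} to convert a high-probability bound on $Y_{u_j}$ into an expectation bound. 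The technical subtlety I expect to fight with is that the \cite{BFSS15,CLS17} per-vertex formula is piecewise in the neighbor-sum (there are thresholds at which the assignment changes regime), so small perturbations in the estimated sum could in principle produce macroscopic jumps in the output; showing that the formula is Lipschitz (or can be slightly smoothed without harming the half-approximation guarantee) is the central Lemma I would need to establish, after which the rest of the analysis assembles cleanly via union bounds over the $k$ levels and the $N(\epsilon)$ seed edges.
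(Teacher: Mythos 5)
Your high-level plan matches the paper's proof closely: preprocess with a $2$-wise independent coloring, define a recursive estimator $\aPosEst(v)$ by subsampling $D$ lower-colored neighbors with replacement, implement it by layered reservoir sampling over $O(k)$ passes (\cref{lemma:multipass:vit85,lemma:multipass:cls17estimate-streaming}), and argue by induction on color that $\Pr[|\aPosEst(v)-\aPos(v)| \geq \delta_{\chi(v)}] \leq \delta_{\chi(v)}$. You also correctly single out the central obstacle: the \cite{BFSS15,CLS17} assignment is the truncated ratio $\cutoff_v(\vZ{\dirIn}{v}-\vZ{\dirOut}{v})$, and this map could be arbitrarily sensitive to errors in the estimated sum. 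That much is the same route.

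But you leave the resolution of that obstacle as an unproved ``central Lemma,'' and the two concrete suggestions you offer in its place would not work as stated. First, without modification the formula is \emph{not} Lipschitz with a useful constant: the slope of $\cutoff_v$ is $1/(\vYNoAlpha{\dirIn}{v}+\vYNoAlpha{\dirOut}{v})$, which with the unmodified $\vYNoAlpha{\cdot}{v} = |E_{\cdot,\dirHigh}(v)|$ can be enormous compared to the scale $|E_{\dirIn,\dirLow}(v)|+|E_{\dirOut,\dirLow}(v)|$ of the additive error in $\vZEst{\dirIn}{v}-\vZEst{\dirOut}{v}$ (e.g.\ a vertex with many lower-colored but few higher-colored neighbors). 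The paper's key new ingredient is the explicit smoothing in \cref{eq:multipass:y-value}: defining $\vY{\dirIn}{\alpha}{v}=\max(|E_{\dirIn,\dirHigh}(v)|,\alpha|E_{\dirIn,\dirLow}(v)|)$ (and symmetrically for $\dirOut$) so the Lipschitz constant is at most $1/\bigl(\alpha(|E_{\dirIn,\dirLow}(v)|+|E_{\dirOut,\dirLow}(v)|)\bigr)$, and then \emph{re-proving} the half-approximation (\cref{lemma:multipass:cls17}) to show this floor costs only an additive $\alpha$ in value---this requires re-deriving the whole telescoping argument (\cref{lemma:multipass:piece,lemma:multipass:piece-opt,lemma:multipass:telescope-step}) with the extra $\alpha|E_{\cdot,\dirLow}(w)|$ slack. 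Second, your proposed geometric accuracy schedule $\alpha_{i+1}\leq\alpha_i/2$ is too weak once the smoothing is in place: the Lipschitz factor $1/\alpha$ means the per-level error blows up by roughly $1/\alpha=\epsilon^{-5}$ each color step, so the accuracies must decay polynomially in $\epsilon$ per level, not by a factor of $2$. The paper uses $\delta_a=\epsilon^{10(k+1-a)}$ with $\alpha=\epsilon^5$ so that $\delta_a \geq 10\delta_{a-1}/\alpha$, and that is precisely what makes the induction in \cref{lemma:multipass:cls17estimate} close. So while you have the right architecture, the two steps you defer or mis-specify are exactly where the paper's technical work lies.
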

The proof of \cref{thm:multipass-formal} can be found in \cref{sec:multipass:formal}. As we build up to it, we first describe an algorithm inspired by \cite{BFSS15,CLS17} that additionally takes a parameter $k$ and a proper coloring $\chi$ of $G$ that has $k$ colors. This algorithm is for general $k$ although we will only use it for $k = O\paren*{ \frac{1}{ \epsilon } }$, where $\epsilon > 0$ is as in \cref{thm:multipass-formal} and is fixed for the rest of this section. We will also assume without loss of generality that $\epsilon$ is small enough, specifically that $\epsilon < 0.01$.

\subsection{A (non-streaming) algorithm for colored graphs}
\label{sec:multipass:cls17}

As mentioned above, we first describe an algorithm that heavily builds on \cite{BFSS15,CLS17}. Note that this algorithm is not in the streaming setting and that our description is different from the original. Its input is a directed multigraph $G = \paren*{ V, E }$ and a (proper) coloring $\chi : V \to [k]$ of the vertices in $G$ with $k$ colors (for some $k > 0$). In addition, it also takes as input a parameter $\alpha \geq 0$. From \cref{sec:multipass:cls17estimate} onwards, we will only work with $\alpha = \epsilon^5$, but for the sake of generality, this section is written for an arbitrary\footnote{We emphasize that $\alpha = 0$ is possible and essentially recovers the original \cite{BFSS15,CLS17} algorithm.} $\alpha \geq 0$.

The goal of this algorithm is to output a ``position'' $\aPos(v) \in [0, 1]$ for every vertex $v \in V$. The function $\aPos : V \to [0,1]$ can be viewed as a fractional cut in $G$, and \cite{BFSS15,CLS17} show that in the $\alpha = 0$ case, $\val{G}{\aPos}$ is at least $1/2$ of $\maxval{G}$, the maximum possible value of any cut. (We recover this result in \Cref{lemma:multipass:cls17} below.)

For the purposes of this section, we assume without loss of generality that the graph $G$ has no isolated vertices. We first partition the (multiset of) edges incident at a vertex $v \in V$ into four parts, depending on the directions and colorings, as follows:
\begin{equation}
\begin{aligned}
E_{ \dirIn, \dirLow }(v) &\coloneqq \set*{ (u, v) \in E \mid \chi(u) < \chi(v) }, &\hspace{1cm} E_{ \dirOut, \dirLow }(v) &\coloneqq \set*{ (v, u) \in E \mid \chi(u) < \chi(v) } , \\
E_{ \dirIn, \dirHigh }(v) &\coloneqq \set*{ (u, v) \in E \mid \chi(u) > \chi(v) }, &\hspace{1cm} E_{ \dirOut, \dirHigh }(v) &\coloneqq \set*{ (v, u) \in E \mid \chi(u) > \chi(v) } \label{eq:multipass:edge-parts}.
\end{aligned}
\end{equation}

For all $v \in V$ and $\alpha \geq 0$, we also define:\footnote{Note that when $\alpha = 0$, we have e.g. $\vY{\dirIn}{\alpha}{v} = \card*{ E_{ \dirIn, \dirHigh }(v)}$. The additional dependence of $\vY{\dirIn}{\alpha}{v}$ on $\card*{ E_{ \dirIn, \dirLow }(v)}$ when $\alpha > 0$ is necessary to accommodate for errors we accumulate when sampling edges from $E_{ \dirIn, \dirLow }(v)$ to estimate the $\vZ{\dirIn}{v}$ quantities we will see below, see \Cref{cor:cls17estimate:z}.}
\begin{equation}
\label{eq:multipass:y-value}
\begin{aligned}
\vY{\dirIn}{\alpha}{v} &\coloneqq \max\paren*{ \card*{ E_{ \dirIn, \dirHigh }(v) }, \alpha \cdot \card*{ E_{ \dirIn, \dirLow }(v) } } , \\
\vY{\dirOut}{\alpha}{v} &\coloneqq \max\paren*{ \card*{ E_{ \dirOut, \dirHigh }(v) }, \alpha \cdot \card*{ E_{ \dirOut, \dirLow }(v) } } .
\end{aligned}
\end{equation}
Observe that $\vY{\dirIn}{\alpha}{v}, \vY{\dirOut}{\alpha}{v} \geq 0$. Further,
\begin{equation}\label{eq:multipass:y-sum}
    \vY{\dirIn}{\alpha}{v} +  \vY{\dirOut}{\alpha}{v} > 0
\end{equation}
since there are no isolated vertices and also
\begin{equation}\label{eq:multipass:e-lb}
    \card*{ E_{\dirOut,\dirHigh}(v) } \geq \vY{\dirOut}{\alpha}{v} - \alpha \card*{ E_{\dirOut,\dirLow}(v) } \text{ and }
    \card*{ E_{\dirIn,\dirHigh}(v) } \geq \vY{\dirIn}{\alpha}{v} - \alpha \card*{ E_{\dirIn,\dirLow}(v) }.
\end{equation}

We are now ready to describe the algorithm.

\begin{algorithm}[H]
	\caption{Recursive, deterministic procedure to define fractional cut $\aPos$}
	\label{algo:cls17}
	\begin{algorithmic}[1]
        \renewcommand{\algorithmicrequire}{\textbf{Input:}}
		\renewcommand{\algorithmicensure}{\textbf{Output:}}
		
		\Require An integer $k > 0$, graph $G = \paren*{ V, E }$, a proper coloring $\chi : V \to [k]$ of $G$, $\alpha \geq 0$, and a vertex $v \in V$.
		\Ensure The fractional assignment $\aPos(v) \in [0,1]$ to $v$.

        \alglinenoPop{algcommon}
  
		\State Compute recursively: 
		\[
			\vZ{\dirIn}{v} = \sum_{ (u, v) \in E_{ \dirIn, \dirLow }(v) } \aPos(u) , \hspace{2cm} \vZ{\dirOut}{v} = \sum_{ (v, u) \in E_{ \dirOut, \dirLow }(v) } \paren*{ 1 - \aPos(u) } .
		\]\label{line:cls17:z}
		
		\State Output
		\[
			\begin{cases}
				1, &\text{~if~} \vZ{\dirIn}{v} - \vZ{\dirOut}{v} \leq - \vY{\dirIn}{\alpha}{v} \\
				\frac{ \vY{\dirOut}{\alpha}{v}}{ \vY{\dirIn}{\alpha}{v} + \vY{\dirOut}{\alpha}{v} } - \frac{\vZ{\dirIn}{v} - \vZ{\dirOut}{v} }{ \vY{\dirIn}{\alpha}{v} + \vY{\dirOut}{\alpha}{v} } , &\text{~if~} {-\vY{\dirIn}{\alpha}{v}} < \vZ{\dirIn}{v} - \vZ{\dirOut}{v} \leq \vY{\dirOut}{\alpha}{v} \\
				0, &\text{~if~} \vY{\dirOut}{\alpha}{v} < \vZ{\dirIn}{v} - \vZ{\dirOut}{v} 
			\end{cases} .
		\]\label{line:cls17:pos}

        \alglinenoPush{algcommon}
        
	\end{algorithmic}
\end{algorithm}

Note that while this algorithm is recursive, a call to $\aPos(v)$ only recurses into calls to $\aPos(u)$ when $\chi(u) < \chi(v)$ (since in $E_{\dirIn,\dirLow}(v)$ and $E_{\dirOut,\dirLow}(v)$, any non-$v$ vertex has lower color by definition (\Cref{eq:multipass:edge-parts})); thus, the recursion terminates at depth at most $k$. In particular, if $\chi(v) = 1$, there are no recursive calls at all, and we have $\vZ{\dirIn}{v} = \vZ{\dirOut}{v} = 0$. Also, the three cases in \Cref{line:cls17:pos} are mutually exclusive and exhaustive since \Cref{eq:multipass:y-sum} implies $-\vY{\dirIn}{\alpha}{v} < \vY{\dirOut}{\alpha}{v}$. Finally, we emphasize that this algorithm is deterministic.

We now set up some additional notation that will be useful in the sequel. For $v \in V$, we define the quantities:
\begin{equation} \label{eq:multipass:a}
\begin{aligned}
\vA{\dirIn}{\alpha}{v} &\coloneqq \vY{\dirIn}{\alpha}{v} + \vZ{\dirIn}{v} - \vZ{\dirOut}{v}, \\
\vA{\dirOut}{\alpha}{v} &\coloneqq \vY{\dirOut}{\alpha}{v} + \vZ{\dirOut}{v} - \vZ{\dirIn}{v} .
\end{aligned}
\end{equation}
Importantly,
\begin{equation}\label{eq:multipass:a-sum}
\vA{\dirIn}{\alpha}{v} + \vA{\dirOut}{\alpha}{v} = \vY{\dirIn}{\alpha}{v} + \vY{\dirOut}{\alpha}{v} > 0.
\end{equation}
Using this notation, we can rewrite \Cref{line:cls17:pos} as:
\begin{equation}\label{eq:multipass:cls17:pos-alt}
\aPos(v) =
\begin{cases}
    1, &\text{~if~} \vA{\dirIn}{\alpha}{v} \leq 0 \\
    \frac{ \vA{\dirOut}{\alpha}{v} }{ \vA{\dirIn}{\alpha}{v} + \vA{\dirOut}{\alpha}{v} } , &\text{~if~} \vA{\dirIn}{\alpha}{v} > 0 \text{ and } \vA{\dirOut}{\alpha}{v} \geq 0 \\
    0, &\text{~if~} \vA{\dirOut}{\alpha}{v} < 0
\end{cases}.
\end{equation}
Note that these cases correspond to the respective cases in \Cref{line:cls17:pos}. As a sanity check, we observe that $\vA{\dirIn}{\alpha}{v} \leq 0$ and $\vA{\dirOut}{\alpha}{v} < 0$ cannot hold simultaneously since the sum of the left-hand sides sum is positive (\Cref{eq:multipass:a-sum}). The upshot of \Cref{eq:multipass:cls17:pos-alt} is that $\aPos(v)$ is the ``truncation'' of the ratio $\frac{ \vA{\dirOut}{\alpha}{v} }{ \vA{\dirIn}{\alpha}{v} + \vA{\dirOut}{\alpha}{v} }$ to the $[0,1]$ interval (e.g., $\frac{ \vA{\dirOut}{\alpha}{v} }{ \vA{\dirIn}{\alpha}{v} + \vA{\dirOut}{\alpha}{v} } \geq 1$ iff $\vA{\dirIn}{\alpha}{v} \leq 0$ iff $\aPos(v) = 1$).

\subsubsection{Proof of correctness}

The goal of this subsection is to prove that \cref{algo:cls17} works, in the following sense:

\begin{lemma}
\label{lemma:multipass:cls17} 
Let $k > 0$, $G = \paren*{ V, E }$ be a graph, and $\chi : V \to [k]$ be a proper coloring of $G$. Let $\alpha \geq 0$ be arbitrary and $\aPos : V \to [0,1]$ be as defined by \cref{algo:cls17}. We have:
\[
\paren*{ \frac{1}{2} - \alpha } \cdot \maxval{G} \leq \val{G}{ \aPos } \leq \maxval{G} .
\]
\end{lemma}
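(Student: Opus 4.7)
For the upper bound $\val{G}{\aPos} \leq \maxval{G}$: since $\aPos : V \to [0,1]$ is a fractional assignment, \Cref{eq:prelim:rounding} yields a Boolean assignment $z : V \to \{0,1\}$ with $\val{G}{\aPos} \leq \val{G}{z} \leq \maxval{G}$.

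For the lower bound, my plan is to use a hybrid/telescoping argument, comparing $\aPos$ against an arbitrary Boolean $x^* : V \to \{0,1\}$ and showing $\val{G}{\aPos} \geq (1/2 - \alpha)\val{G}{x^*}$; taking $x^*$ optimal then gives the lemma. Order the vertices $v_1, \ldots, v_n$ in increasing order of $\chi$, and for $i \in \{0,\ldots,n\}$ define the hybrid $y_i : V \to [0,1]$ by $y_i(v_j) = \aPos(v_j)$ for $j \leq i$ and $y_i(v_j) = x^*(v_j)$ otherwise, so $y_0 = x^*$ and $y_n = \aPos$. Since $\val{G}{\cdot}$ is affine in each coordinate, only edges incident to $v_i$ contribute to $\val{G}{y_i} - \val{G}{y_{i-1}}$, and a direct calculation gives
\[ m \cdot \bigl(\val{G}{y_i} - \val{G}{y_{i-1}}\bigr) \;=\; (\aPos(v_i) - x^*(v_i)) \cdot (B_i - A_i), \]
where $A_i = \vZ{\dirIn}{v_i} + W^{\dirIn}_i$ and $B_i = \vZ{\dirOut}{v_i} + W^{\dirOut}_i$, with $W^{\dirIn}_i \in [0, |E_{\dirIn,\dirHigh}(v_i)|]$ and $W^{\dirOut}_i \in [0, |E_{\dirOut,\dirHigh}(v_i)|]$ encoding the (partial) contributions of $x^*$'s values at the higher-colored neighbors of $v_i$. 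Telescoping, the lower bound reduces to proving
\[ \sum_{i=1}^n (\aPos(v_i) - x^*(v_i))(B_i - A_i) \;\geq\; -\tfrac{1+2\alpha}{2} \cdot m \val{G}{x^*}. \]

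The crux is a \emph{per-vertex inequality} at each $v_i$: letting
\[ h_i \;:=\; (1-x^*(v_i)) \sum_{(u,v_i) \in E_{\dirIn,\dirLow}(v_i)} x^*(u) \;+\; x^*(v_i) \sum_{(v_i,u) \in E_{\dirOut,\dirLow}(v_i)} (1-x^*(u)) \]
denote the contribution of $v_i$-incident edges to $m \val{G}{x^*}$, counted at their higher-colored endpoint (so that $\sum_i h_i = m \val{G}{x^*}$), the goal would be to prove
\[ (\aPos(v_i) - x^*(v_i))(B_i - A_i) \;\geq\; -\tfrac{1+2\alpha}{2}\, h_i. \]
I would establish this by casework on (a)~which of the three branches of \cref{line:cls17:pos} defines $\aPos(v_i)$, i.e., the signs of $\vA{\dirIn}{\alpha}{v_i}$ and $\vA{\dirOut}{\alpha}{v_i}$, and (b)~the value $x^*(v_i) \in \{0,1\}$. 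The main technical ingredients are the algorithm's balance condition $\aPos(v_i)\,\vA{\dirIn}{\alpha}{v_i} = (1-\aPos(v_i))\,\vA{\dirOut}{\alpha}{v_i}$ in the interior case (which pins down $B_i - A_i$ in the worst case over $W^{\dirIn}_i, W^{\dirOut}_i$), the upper bounds $W^{\dirIn}_i \leq |E_{\dirIn,\dirHigh}(v_i)| \leq \vY{\dirIn}{\alpha}{v_i}$ and its out-analogue, and the reverse bound $\vY{\cdot}{\alpha}{v_i} \leq |E_{\cdot,\dirHigh}(v_i)| + \alpha|E_{\cdot,\dirLow}(v_i)|$ implicit in \Cref{eq:multipass:y-value}. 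The main obstacle is the case analysis itself. The $\alpha$-slack in the approximation factor arises precisely from this last bound: the $\alpha|E_{\cdot,\dirLow}(v_i)|$ gap between $\vY{\cdot}{\alpha}{v_i}$ and $|E_{\cdot,\dirHigh}(v_i)|$ propagates through the casework into the per-vertex inequality, degrading the factor from the ``clean'' $1/2$ (recovered when $\alpha = 0$, yielding the original \cite{BFSS15,CLS17} guarantee) to $1/2 - \alpha$.
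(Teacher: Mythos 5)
Your upper bound is correct and matches the paper. Your lower bound starts from the right place---the hybrid/telescoping setup and the per-step difference formula exactly match the paper's \cref{lemma:multipass:piece}---but the ``per-vertex inequality'' you then try to prove is \emph{false}, and the way you propose to amortize the loss cannot work.

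Concretely, take $G$ with vertices $v_1,v_2$, colors $\chi(v_1)=1,\chi(v_2)=2$, and edges $(v_1,v_2)$ and $(v_2,v_1)$. One computes $\aPos(v_1)=\aPos(v_2)=\tfrac12$, $\maxval{G}=\tfrac12$ (attained by $x^*=(1,0)$), and $\val{G}{\aPos}=\tfrac14$. At $v_1$ with $x^*=(1,0)$: since $v_1$ has no lower-colored neighbors, $\vZ{\dirIn}{v_1}=\vZ{\dirOut}{v_1}=0$ so $A_1=W^{\dirIn}_1=x^*(v_2)=0$, $B_1=W^{\dirOut}_1=1-x^*(v_2)=1$, giving $(\aPos(v_1)-x^*(v_1))(B_1-A_1)=(-\tfrac12)(1)=-\tfrac12$. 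But $h_1=0$ (again because $v_1$ has no lower-colored neighbors), so the inequality you need at $v_1$ is $-\tfrac12\geq 0$, which fails. The loss at $v_1$ is ``paid back'' only globally at $v_2$ (where the actual step-difference is $0$ but $h_2=1$), so there is no per-vertex bound of this form against a single reference assignment. More generally, at any color-$1$ vertex $h_i=0$ by definition, yet the per-step loss against $x^*$ can be a strictly negative quantity of order $|E_{\dirIn,\dirHigh}(v_i)|+|E_{\dirOut,\dirHigh}(v_i)|$.

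The paper's proof avoids this by comparing against \emph{three} reference assignments simultaneously: the telescoping towards $\aPos$ is run from the all-zeros assignment $\aZero$, the all-ones assignment $\aOne$, and the optimum $\aOpt$. The step-wise \cref{lemma:multipass:telescope-step} shows that the combined \emph{gain} from the $\aZero$- and $\aOne$-hybrids dominates \emph{twice} the step-wise \emph{loss} from the $\aOpt$-hybrid, up to the $\alpha$-error; the crucial $\,x^2+y^2\ge 2xy\,$ step compares $\aPos(w)\vA{\dirOut}{\alpha}{w}+(1-\aPos(w))\vA{\dirIn}{\alpha}{w}=\frac{\vA{\dirIn}{\alpha}{w}^2+\vA{\dirOut}{\alpha}{w}^2}{\vA{\dirIn}{\alpha}{w}+\vA{\dirOut}{\alpha}{w}}$ against $2\cdot\frac{\vA{\dirIn}{\alpha}{w}\vA{\dirOut}{\alpha}{w}}{\vA{\dirIn}{\alpha}{w}+\vA{\dirOut}{\alpha}{w}}$ from \cref{lemma:multipass:piece-opt}. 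Summing and using $\val{G}{\aZero}=\val{G}{\aOne}=0$ gives $4\val{G}{\aPos}\ge 2\maxval{G}-\alpha$, and then $\maxval{G}\ge\tfrac14$ yields the claim. The per-vertex loss against $\aOpt$ alone, $\frac{\vA{\dirIn}{\alpha}{w}\vA{\dirOut}{\alpha}{w}}{\vA{\dirIn}{\alpha}{w}+\vA{\dirOut}{\alpha}{w}}$, is simply not bounded by anything like your $h_i$; you need the $\aZero/\aOne$ gains as the comparison point. To repair your argument you would essentially have to re-introduce those two additional hybrids, at which point you would be reproducing the paper's proof.
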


Note that at $\alpha=0$, this implies that the fractional assignment $\aPos$ gives a $\frac12$-approximation for the $\mdcut$ value of $G$.

Fix $k$, $G$, $\chi$, and $\alpha$ as in the lemma statement and let $n = \card*{ V }$. Let $\aPos$ and $(\vZ{\dirOut}{w})_{w \in V},(\vZ{\dirIn}{w})_{w \in V}$ denote the result of running \Cref{algo:cls17}.

We first define some machinery that will be useful in proving \cref{lemma:multipass:cls17}. Fix an arbitrary bijection $\pi : V \to [n]$ that satisfies that for all $u, v \in V$, $\pi(u) < \pi(v) \implies \chi(u) < \chi(v)$. For all assignments $\aX : V \to \bits$ and all $0 \leq j \leq n$, we define the fractional assignment $\aPiece_{ j, \aX } : V \to [0, 1]$ as follows:
\begin{equation}
\label{eq:multipass:piece}
\aPiece_{ j, \aX }(v) \coloneqq
\begin{cases}
\aPos(v) , &\text{~if~} \pi(v) \leq j \\
\aX(v) , &\text{~if~} \pi(v) > j
\end{cases} .
\end{equation}
I.e., $\aPiece_{j,\aX}$ is a fractional assignment matching the assignment $\aPos$ on vertices $\pi^{-1}(1),\ldots,\pi^{-1}(j)$ and $\aX$ on the remaining vertices. Note that $\aPiece_{0,\aX} = \aX$ and $\aPiece_{n,\aX} = \aPos$.

The following lemma bounds the loss in value from $\aPiece_{j-1,\aX}$ to $\aPiece_{j,\aX}$, i.e., from changing vertex $\pi^{-1}(j)$'s assignment from $\aX(j)$ to $\aPos(\pi^{-1}(j))$ while maintaining the rest of the assignment ($\aPos$ on vertices $\pi^{-1}(1),\ldots,\pi^{-1}(j-1)$ and $\aX$ on the remaining vertices).

\begin{lemma}
\label{lemma:multipass:piece}
Let $\aX : V \to \bits$, $j \in [n]$, and $w = \pi^{-1}(j)$. We have:
\begin{multline*}
\val{G}{ \aPiece_{ j - 1, \aX } } - \val{G}{ \aPiece_{ j, \aX } } \\
= \frac{ \aX(w) - \aPos(w) }{ \card*{ E } } \cdot \paren*{ \vZ{\dirOut}{w} - \vZ{\dirIn}{w} + \sum_{ (w, v) \in E_{ \dirOut, \dirHigh }(w) } \paren*{ 1 - \aX(v) } - \sum_{ (u, w) \in E_{ \dirIn, \dirHigh }(w) } \aX(u) } .
\end{multline*}
\end{lemma}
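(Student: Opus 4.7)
The plan is a direct and largely mechanical calculation, exploiting that $\aPiece_{j-1,\aX}$ and $\aPiece_{j,\aX}$ differ only at the single vertex $w = \pi^{-1}(j)$. First I would establish the behavior of these two fractional assignments at every neighbor of $w$. The key fact is the ordering property $\pi(u) < \pi(v) \implies \chi(u) < \chi(v)$, which by contrapositive gives $\chi(u) \geq \chi(w) \implies \pi(u) \geq \pi(w) = j$. Consequently: if $u$ is a neighbor of $w$ with $\chi(u) < \chi(w)$ then $\pi(u) \leq j-1$, so $\aPiece_{j-1,\aX}(u) = \aPiece_{j,\aX}(u) = \aPos(u)$; and if $\chi(u) > \chi(w)$ then $\pi(u) \geq j+1$, so $\aPiece_{j-1,\aX}(u) = \aPiece_{j,\aX}(u) = \aX(u)$. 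In both cases the two fractional assignments agree at $u$.

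Next I would multiply the target equation through by $|E|$ and compute the difference $|E| \cdot (\val{G}{\aPiece_{j-1,\aX}} - \val{G}{\aPiece_{j,\aX}})$ using the definition (\cref{eq:val}). Since the two assignments differ only at $w$, every term in the sum over $E$ for an edge not incident to $w$ cancels. So I only need to expand the four groups of edges touching $w$ from the partition \cref{eq:multipass:edge-parts}. For outgoing edges $(w,v)$, the difference contributes $(\aX(w) - \aPos(w))(1 - \aPiece(v))$, where $\aPiece(v)$ denotes the common value of $\aPiece_{j-1,\aX}(v) = \aPiece_{j,\aX}(v)$; this common value is $\aPos(v)$ for $v$ with $\chi(v) < \chi(w)$ (i.e., edges in $E_{\dirOut,\dirLow}(w)$) and $\aX(v)$ for $v$ with $\chi(v) > \chi(w)$ (i.e., edges in $E_{\dirOut,\dirHigh}(w)$). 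For incoming edges $(u,w)$, the difference contributes $\aPiece(u) \cdot (\aPos(w) - \aX(w)) = -(\aX(w) - \aPos(w)) \aPiece(u)$, with the common value $\aPos(u)$ for $u \in E_{\dirIn,\dirLow}(w)$-neighbors and $\aX(u)$ for $u \in E_{\dirIn,\dirHigh}(w)$-neighbors.

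Finally I would factor out $(\aX(w) - \aPos(w))$ and collect terms. The $E_{\dirOut,\dirLow}(w)$ sum $\sum_{(w,v) \in E_{\dirOut,\dirLow}(w)}(1 - \aPos(v))$ is exactly $\vZ{\dirOut}{w}$ by \cref{line:cls17:z}, and similarly the $E_{\dirIn,\dirLow}(w)$ sum $\sum_{(u,w) \in E_{\dirIn,\dirLow}(w)} \aPos(u)$ is exactly $\vZ{\dirIn}{w}$; the latter enters with a minus sign because incoming-edge contributions come with the flipped sign. The remaining two sums over $E_{\dirOut,\dirHigh}(w)$ and $E_{\dirIn,\dirHigh}(w)$ appear verbatim in the statement. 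Dividing through by $|E|$ yields the claimed identity.

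There is no real obstacle here: every step is a direct expansion, and the only point requiring care is correctly identifying, via the ordering property of $\pi$, that both fractional assignments agree at every neighbor $u$ of $w$ (so no contribution arises from changes at $u$). Once that is pinned down, the rest is a one-line algebraic rearrangement grouped by the four edge classes of \cref{eq:multipass:edge-parts}.
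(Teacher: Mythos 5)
Your proposal is correct and takes essentially the same approach as the paper: expand the value difference via \cref{eq:val}, observe that only edges incident to $w=\pi^{-1}(j)$ contribute since the two assignments agree at every other vertex, split those edges into the four classes of \cref{eq:multipass:edge-parts} using the fact that $\pi$ orders vertices consistently with $\chi$, and recognize the $\vZ{\dirIn}{w}$ and $\vZ{\dirOut}{w}$ sums from \cref{line:cls17:z}. One cosmetic slip: the implication $\chi(u) < \chi(w) \implies \pi(u) \leq j-1$ does not follow from the contrapositive you quote ($\chi(u)\geq\chi(w)\implies\pi(u)\geq j$); it instead follows from the forward implication $\pi(w)<\pi(u)\implies\chi(w)<\chi(u)$ together with $\pi$ being a bijection, which is the argument the paper gives as well.
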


\begin{proof}
From \cref{eq:val}, we get:
\begin{align*}
&\val{G}{ \aPiece_{ j - 1, \aX } } - \val{G}{ \aPiece_{ j, \aX } } \\
&= \frac{1}{ \card*{ E } } \cdot \sum_{ (u, v) \in E } \paren*{ \aPiece_{ j - 1, \aX }(u) \cdot \paren*{ 1 - \aPiece_{ j - 1, \aX }(v) } - \aPiece_{ j, \aX }(u) \cdot \paren*{ 1 - \aPiece_{ j, \aX }(v) } } \\
\intertext{As $\pi$ is a bijection, for all $v' \neq w$, we have $\pi\paren*{ v' } \neq j$. From \cref{eq:multipass:piece}, we get that $\aPiece_{ j - 1, \aX }\paren*{ v' } = \aPiece_{ j, \aX }\paren*{ v' }$ for all $v' \neq w$. It follows that any term above where both $u, v \neq w$ vanishes. We are left with terms where either $u = w \neq v$ or $u \neq w = v$ (note that $u = w = v$ is impossible as $\chi$ is a proper coloring implying that there are no self-loops). Using $\aPiece_{ j - 1, \aX }\paren*{ v' } = \aPiece_{ j, \aX }\paren*{ v' }$ for all $v' \neq w$, we rearrange to:}
&= \frac{1}{ \card*{ E } } \cdot \paren*{ \aPiece_{ j - 1, \aX }(w) - \aPiece_{ j, \aX }(w) } \cdot \paren*{ \sum_{ (w, v) \in E } \paren*{ 1 - \aPiece_{ j, \aX }(v) } - \sum_{ (u, w) \in E } \aPiece_{ j, \aX }(u) } .
\intertext{We now use the definition of $\aPiece$ (\cref{eq:multipass:piece}) and the fact that $v' \neq w \implies \pi(v') \neq j$ in all the terms (no self-loops) to get:}
&= \frac{ \aX(w) - \aPos(w) }{ \card*{ E } } \cdot \paren*{ \sum_{ \substack{ (w, v) \in E \\ \pi(v) < j } } \paren*{ 1 - \aPos(v) } - \sum_{ \substack{ (u, w) \in E \\ \pi(u) < j } } \aPos(u) + \sum_{ \substack{ (w, v) \in E \\ \pi(v) > j } } \paren*{ 1 - \aX(v) } - \sum_{ \substack{ (u, w) \in E \\ \pi(u) > j } } \aX(u) } .
\intertext{As $\chi$ is a proper coloring and $\pi$ is a bijection, for all edges $(u, v) \in E$, we have $\chi(u) \neq \chi(v) \implies u \neq v \iff \pi(u) \neq \pi(v)$. From the definition of $\pi$, this means that for all edges $(u, v) \in E$, we have $\pi(u) < \pi(v) \iff \chi(u) < \chi(v)$ and $\pi(u) > \pi(v) \iff \chi(u) > \chi(v)$. Plugging in and using the definition of the edge partitions (\cref{eq:multipass:edge-parts}) and the $\vZ{\cdot}{\cdot}$ variables (\cref{line:cls17:z}), this gives:}
&= \frac{ \aX(w) - \aPos(w) }{ \card*{ E } } \cdot \paren*{ \vZ{\dirOut}{w} - \vZ{\dirIn}{w} + \sum_{ (w, u) \in E_{ \dirOut, \dirHigh }(w) } \paren*{ 1 - \aX(u) } - \sum_{ (u, w) \in E_{ \dirIn, \dirHigh }(w) } \aX(u) } ,
\end{align*}
as desired.
\end{proof}

Note that the proof of the previous lemma did not depend on the definition of the assignment $\aPos$.

For the rest of the proof, define $\aOpt : V \to \bits$ be the maximizer in \cref{eq:maxval} so that we have $\maxval{G} = \val{G}{ \aOpt }$. The following lemma specializes the previous lemma to the case $\aX = \aOpt$:
\begin{lemma}
\label{lemma:multipass:piece-opt}
Let $j \in [n]$ and $w = \pi^{-1}(j)$. If either $\vA{\dirIn}{\alpha}{w} \leq 0$ or $\vA{\dirOut}{\alpha}{w} \leq 0$, we have $\val{G}{ \aPiece_{ j - 1, \aOpt } } - \val{G}{ \aPiece_{ j, \aOpt } } \leq 0$. Otherwise:
\[
\val{G}{ \aPiece_{ j - 1, \aOpt } } - \val{G}{ \aPiece_{ j, \aOpt } } \leq \frac{1}{ \card*{ E } } \cdot \frac{ \vA{\dirIn}{\alpha}{w} \cdot \vA{\dirOut}{\alpha}{w} }{ \vA{\dirIn}{\alpha}{w} + \vA{\dirOut}{\alpha}{w} } .
\]
\end{lemma}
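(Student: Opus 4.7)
My plan is to apply \cref{lemma:multipass:piece} with $\aX = \aOpt$ and then bound the resulting expression. Writing $w = \pi^{-1}(j)$, that lemma gives
\[
\val{G}{ \aPiece_{ j - 1, \aOpt } } - \val{G}{ \aPiece_{ j, \aOpt } } = \frac{\aOpt(w) - \aPos(w)}{|E|} \cdot B,
\]
where $B := \vZ{\dirOut}{w} - \vZ{\dirIn}{w} + \sum_{(w,v) \in E_{\dirOut,\dirHigh}(w)} (1-\aOpt(v)) - \sum_{(u,w) \in E_{\dirIn,\dirHigh}(w)} \aOpt(u)$. The goal then reduces to showing that the product $(\aOpt(w)-\aPos(w)) \cdot B$ is at most $\frac{\vA{\dirIn}{\alpha}{w}\vA{\dirOut}{\alpha}{w}}{\vA{\dirIn}{\alpha}{w}+\vA{\dirOut}{\alpha}{w}}$ in the generic case, and at most $0$ in the two degenerate cases.

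The first key step is to pin down the range of $B$. Since $\aOpt(v) \in \{0,1\}$, each summand $(1-\aOpt(v))$ and $\aOpt(u)$ lies in $[0,1]$, so
\[
-\card*{E_{\dirIn,\dirHigh}(w)} \le B - (\vZ{\dirOut}{w}-\vZ{\dirIn}{w}) \le \card*{E_{\dirOut,\dirHigh}(w)}.
\]
Combining with the trivial inequalities $\card*{E_{\dirOut,\dirHigh}(w)} \le \vY{\dirOut}{\alpha}{w}$ and $\card*{E_{\dirIn,\dirHigh}(w)} \le \vY{\dirIn}{\alpha}{w}$ from \cref{eq:multipass:y-value}, together with the definitions of $\vA{\dirIn}{\alpha}{w}$ and $\vA{\dirOut}{\alpha}{w}$ in \cref{eq:multipass:a}, this yields the clean two-sided bound
\[
-\vA{\dirIn}{\alpha}{w} \le B \le \vA{\dirOut}{\alpha}{w}.
\]

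The second key step is a case analysis on the signs of $\vA{\dirIn}{\alpha}{w}$ and $\vA{\dirOut}{\alpha}{w}$, using the reformulation \cref{eq:multipass:cls17:pos-alt} of $\aPos(w)$. If $\vA{\dirIn}{\alpha}{w} \le 0$, then $\aPos(w) = 1$, so $\aOpt(w)-\aPos(w) \le 0$, while $B \ge -\vA{\dirIn}{\alpha}{w} \ge 0$, giving a non-positive product. Symmetrically, if $\vA{\dirOut}{\alpha}{w} \le 0$ then $\aPos(w) = 0$ (noting \cref{eq:multipass:a-sum} forbids both quantities being non-positive simultaneously), so $\aOpt(w)-\aPos(w) \ge 0$ and $B \le \vA{\dirOut}{\alpha}{w} \le 0$, again producing a non-positive product. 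This handles the first bullet. In the remaining case, both $\vA{\dirIn}{\alpha}{w}$ and $\vA{\dirOut}{\alpha}{w}$ are positive, and $\aOpt(w)-\aPos(w)$ equals either $\frac{\vA{\dirIn}{\alpha}{w}}{\vA{\dirIn}{\alpha}{w}+\vA{\dirOut}{\alpha}{w}}$ (when $\aOpt(w)=1$) or $-\frac{\vA{\dirOut}{\alpha}{w}}{\vA{\dirIn}{\alpha}{w}+\vA{\dirOut}{\alpha}{w}}$ (when $\aOpt(w)=0$). In the former subcase multiply by the upper bound $B \le \vA{\dirOut}{\alpha}{w}$, and in the latter by the lower bound $B \ge -\vA{\dirIn}{\alpha}{w}$; both multiplications are by a quantity of consistent sign with the factor on $\aOpt(w)-\aPos(w)$, and both yield exactly the claimed upper bound $\frac{\vA{\dirIn}{\alpha}{w}\vA{\dirOut}{\alpha}{w}}{\vA{\dirIn}{\alpha}{w}+\vA{\dirOut}{\alpha}{w}}$.

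The only real subtlety is the sign bookkeeping when multiplying the two-sided bound on $B$ against $\aOpt(w)-\aPos(w)$: one must pick the \emph{right} side of the bound (upper vs.\ lower) depending on whether $\aOpt(w)$ is $1$ or $0$, and separately verify that picking this side is valid even when $B$ happens to have the opposite sign (in which case the product is trivially non-positive and so dominated by the positive right-hand side). Beyond this, the proof is a direct computation; no further machinery is needed.
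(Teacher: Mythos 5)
Your proof is correct and follows essentially the same approach as the paper's: apply the preceding lemma with $\aX = \aOpt$, bound the ``remaining-vertex'' term $B$ via $\aOpt(\cdot) \in [0,1]$ and the inequalities $\card{E_{\dirIn,\dirHigh}(w)} \le \vY{\dirIn}{\alpha}{w}$, $\card{E_{\dirOut,\dirHigh}(w)} \le \vY{\dirOut}{\alpha}{w}$, then case on signs using the formula \cref{eq:multipass:cls17:pos-alt} for $\aPos(w)$. The only difference is organizational: you extract the clean two-sided bound $-\vA{\dirIn}{\alpha}{w} \le B \le \vA{\dirOut}{\alpha}{w}$ upfront and nest the case on $\aOpt(w)$ inside the case on the signs of the $\vA$'s, whereas the paper cases on $\aOpt(w) \in \{0,1\}$ at the outer level and derives only the one-sided bound on $B$ needed in each branch.
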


\begin{proof}
Setting $\aX = \aOpt$ in \cref{lemma:multipass:piece}, we get:
\begin{multline*}
\val{G}{ \aPiece_{ j - 1, \aOpt } } - \val{G}{ \aPiece_{ j, \aOpt } } \\
= \frac{ \aOpt(w) - \aPos(w) }{ \card*{ E } } \cdot \paren*{ \vZ{\dirOut}{w} - \vZ{\dirIn}{w} + \sum_{ (w, u) \in E_{ \dirOut, \dirHigh }(w) } \paren*{ 1 - \aOpt(u) } - \sum_{ (u, w) \in E_{ \dirIn, \dirHigh }(w) } \aOpt(u) } .
\end{multline*}
Consider now the following cases:
\begin{itemize}
\item \textbf{When $\aOpt(w) = 0$:} Plugging this in and using $\aOpt(u) \in [0,1]$, we get:
\[
\val{G}{ \aPiece_{ j - 1, \aOpt } } - \val{G}{ \aPiece_{ j, \aOpt } } \leq \frac{1}{ \card*{ E } } \cdot \aPos(w) \cdot \paren*{ \card*{ E_{ \dirIn, \dirHigh }(w) } + \vZ{\dirIn}{w} - \vZ{\dirOut}{w} } .
\]
From the definitions of $\vY{\dirIn}{\alpha}{w}$ (\cref{eq:multipass:y-value}) and $\vA{\dirIn}{\alpha}{w}$ (\cref{eq:multipass:a}), we get:
\begin{align*}
\val{G}{ \aPiece_{ j - 1, \aOpt } } - \val{G}{ \aPiece_{ j, \aOpt } } &\leq \frac{1}{ \card*{ E } } \cdot \aPos(w) \cdot \paren*{ \vY{\dirIn}{\alpha}{w} + \vZ{\dirIn}{w} - \vZ{\dirOut}{w} } \\
&= \frac{1}{ \card*{ E } } \cdot \aPos(w) \cdot \vA{\dirIn}{\alpha}{w} .
\end{align*}
If $\vA{\dirIn}{\alpha}{w} \leq 0$, the lemma follows since $\aPos(w) \geq 0$. Else, from the definition of $\aPos$ (\cref{eq:multipass:cls17:pos-alt}), we have $\aPos(w) = 0$ if $\vA{\dirOut}{\alpha}{w} \leq 0$, in which case the lemma again follows. Otherwise, $\vA{\dirIn}{\alpha}{w}$ and $\vA{\dirOut}{\alpha}{w}$ are both positive, and by \cref{eq:multipass:cls17:pos-alt} \[ \aPos(w) = \frac{ \vA{\dirOut}{\alpha}{w} }{ \vA{\dirIn}{\alpha}{w} + \vA{\dirOut}{\alpha}{w} },
\]
in which case the lemma again follows.
\item \textbf{When $\aOpt(w) = 1$:} Plugging this in, we get:
\[
\val{G}{ \aPiece_{ j - 1, \aOpt } } - \val{G}{ \aPiece_{ j, \aOpt } } \leq \frac{1}{ \card*{ E } } \cdot \paren*{ 1 - \aPos(w) } \cdot \paren*{ \card*{ E_{ \dirOut, \dirHigh }(w) } + \vZ{\dirOut}{w} - \vZ{\dirIn}{w} } .
\]
From the definitions of $\vY{\dirOut}{\alpha}{w}$ (\cref{eq:multipass:y-value}) and $\vA{\dirOut}{\alpha}{w}$ (\cref{eq:multipass:a}), we get:
\begin{align*}
\val{G}{ \aPiece_{ j - 1, \aOpt } } - \val{G}{ \aPiece_{ j, \aOpt } } &\leq \frac{1}{ \card*{ E } } \cdot \paren*{ 1 - \aPos(w) } \cdot \paren*{ \vY{\dirOut}{\alpha}{w} + \vZ{\dirOut}{w} - \vZ{\dirIn}{w} } \\
&= \frac{1}{ \card*{ E } } \cdot \paren*{ 1 - \aPos(w) } \cdot \vA{\dirOut}{\alpha}{w} .
\end{align*}
We conclude using \cref{eq:multipass:cls17:pos-alt}, symmetrically to the previous case.
\end{itemize}
\end{proof}

Let $\aZero, \aOne : V \to \bits$ be the assignments that are $0$ and $1$ everywhere respectively.

\begin{lemma}\label{lemma:multipass:telescope-step}
    Let $j \in [n]$ and $w = \pi^{-1}(j)$. Then:
    \begin{multline*}
    \val{G}{ \aPiece_{ j, \aZero } } - \val{G}{ \aPiece_{ j - 1, \aZero } } + \val{G}{ \aPiece_{ j, \aOne } } - \val{G}{ \aPiece_{ j - 1, \aOne } } \\
    \geq 2 \cdot \paren*{ \val{G}{ \aPiece_{ j - 1, \aOpt } } - \val{G}{ \aPiece_{ j, \aOpt } } } - \frac{ \alpha \cdot \paren*{ \card*{ E_{ \dirOut, \dirLow }(w) } + \card*{ E_{ \dirIn, \dirLow }(w) } } }{ \card*{ E } } .
    \end{multline*}
\end{lemma}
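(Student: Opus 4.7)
The plan is to apply \cref{lemma:multipass:piece} separately with $\aX = \aZero$ and with $\aX = \aOne$, negate both resulting equalities so that they express the increments $\val{G}{\aPiece_{j,\aX}} - \val{G}{\aPiece_{j-1,\aX}}$, and add them. Because $\aZero \equiv 0$ and $\aOne \equiv 1$, the sums $\sum (1 - \aX(v))$ and $\sum \aX(u)$ in \cref{lemma:multipass:piece} collapse to $\card*{ E_{ \dirOut, \dirHigh }(w) }$, $\card*{ E_{ \dirIn, \dirHigh }(w) }$, or zero. After this collapse, the left-hand side of the target inequality becomes
\[
\frac{\aPos(w)}{\card*{ E }} \paren*{ \vZ{\dirOut}{w} - \vZ{\dirIn}{w} + \card*{ E_{\dirOut,\dirHigh}(w)} } + \frac{1-\aPos(w)}{\card*{ E }} \paren*{ \vZ{\dirIn}{w} - \vZ{\dirOut}{w} + \card*{ E_{\dirIn,\dirHigh}(w)} } .
\]

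Next I would apply the lower bounds in \cref{eq:multipass:e-lb}, namely $\card*{ E_{\dirOut,\dirHigh}(w) } \geq \vY{\dirOut}{\alpha}{w} - \alpha \card*{ E_{\dirOut,\dirLow}(w) }$ and the analogous bound for incoming edges, and recognize $\vA{\dirOut}{\alpha}{w}$ and $\vA{\dirIn}{\alpha}{w}$ from \cref{eq:multipass:a}. This shows that the left-hand side of the target is at least
\[
\frac{\aPos(w) \vA{\dirOut}{\alpha}{w} + (1-\aPos(w)) \vA{\dirIn}{\alpha}{w}}{\card*{ E }} - \frac{\alpha \paren*{\aPos(w)\card*{ E_{\dirOut,\dirLow}(w) } + (1-\aPos(w))\card*{ E_{\dirIn,\dirLow}(w) }}}{\card*{ E }} .
\]
Since $\aPos(w), 1 - \aPos(w) \in [0,1]$, the subtracted error term is bounded in magnitude by $\alpha \paren*{ \card*{ E_{\dirOut,\dirLow}(w)} + \card*{ E_{\dirIn,\dirLow}(w)} } / \card*{ E }$, which exactly matches the error term on the right-hand side of the target. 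It therefore suffices to prove the ``clean'' inequality
\[
\aPos(w) \vA{\dirOut}{\alpha}{w} + (1-\aPos(w)) \vA{\dirIn}{\alpha}{w} \geq 2 \card*{ E } \cdot \paren*{\val{G}{\aPiece_{j-1,\aOpt}} - \val{G}{\aPiece_{j,\aOpt}}} .
\]

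I would finish by splitting into cases on the signs of $\vA{\dirIn}{\alpha}{w}$ and $\vA{\dirOut}{\alpha}{w}$; by \cref{eq:multipass:a-sum} these have a positive sum, so at most one can be nonpositive. If $\vA{\dirIn}{\alpha}{w} \leq 0$, then \cref{eq:multipass:cls17:pos-alt} gives $\aPos(w) = 1$, so the left-hand side equals $\vA{\dirOut}{\alpha}{w} > 0$, while \cref{lemma:multipass:piece-opt} bounds the right-hand side by $0$; the case $\vA{\dirOut}{\alpha}{w} \leq 0$ is symmetric with $\aPos(w) = 0$ and left-hand side $\vA{\dirIn}{\alpha}{w} \geq 0$. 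If both $\vA{\dirIn}{\alpha}{w}$ and $\vA{\dirOut}{\alpha}{w}$ are strictly positive, then by \cref{eq:multipass:cls17:pos-alt} $\aPos(w) = \vA{\dirOut}{\alpha}{w} / (\vA{\dirIn}{\alpha}{w} + \vA{\dirOut}{\alpha}{w})$, making the left-hand side equal to $((\vA{\dirOut}{\alpha}{w})^2 + (\vA{\dirIn}{\alpha}{w})^2)/(\vA{\dirIn}{\alpha}{w} + \vA{\dirOut}{\alpha}{w})$, whereas \cref{lemma:multipass:piece-opt} bounds the right-hand side by $2 \vA{\dirIn}{\alpha}{w} \vA{\dirOut}{\alpha}{w} / (\vA{\dirIn}{\alpha}{w} + \vA{\dirOut}{\alpha}{w})$; the remaining inequality then reduces to $(\vA{\dirOut}{\alpha}{w} - \vA{\dirIn}{\alpha}{w})^2 \geq 0$.

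There is no substantive obstacle: the argument is a combination of the identity of \cref{lemma:multipass:piece}, the definitional lower bounds of \cref{eq:multipass:e-lb}, and a case analysis dictated by \cref{eq:multipass:cls17:pos-alt} that ultimately collapses to AM-GM. The only place to be careful is ensuring the three cases of \cref{eq:multipass:cls17:pos-alt} for $\aPos(w)$ align correctly with the two cases in \cref{lemma:multipass:piece-opt}, and that the error-absorption step in the middle paragraph uses $\aPos(w), 1-\aPos(w) \in [0,1]$ rather than some sign-dependent bound.
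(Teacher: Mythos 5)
Your proposal is correct and follows essentially the same route as the paper's proof: apply \cref{lemma:multipass:piece} at $\aX=\aZero$ and $\aX=\aOne$, substitute \cref{eq:multipass:e-lb} and \cref{eq:multipass:a} to surface $\vA{\dirIn}{\alpha}{w}$ and $\vA{\dirOut}{\alpha}{w}$, absorb the $\alpha$-error using $\aPos(w),1-\aPos(w)\le 1$, and then compare $\aPos(w)\vA{\dirOut}{\alpha}{w}+(1-\aPos(w))\vA{\dirIn}{\alpha}{w}$ against \cref{lemma:multipass:piece-opt} by the sign cases of \cref{eq:multipass:cls17:pos-alt}, reducing to $x^2+y^2\ge 2xy$. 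Your write-up is a bit more explicit than the paper's in the final case analysis and correctly flags that the error-absorption step is a $\geq$ rather than an equality (the paper writes ``$=$'' there but means a lower bound), but there is no substantive difference.
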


\begin{proof}
We have by \cref{lemma:multipass:piece} that:
\begin{align*}
&\val{G}{ \aPiece_{ j, \aZero } } - \val{G}{ \aPiece_{ j - 1, \aZero } } + \val{G}{ \aPiece_{ j, \aOne } } - \val{G}{ \aPiece_{ j - 1, \aOne } } \\
&= \frac{1}{ \card*{ E } } \cdot \paren*{ \aPos(w) \cdot \paren*{ \card*{ E_{ \dirOut, \dirHigh }(w) } + \vZ{\dirOut}{w} - \vZ{\dirIn}{w} } + \paren*{ 1 - \aPos(w) } \cdot \paren*{ \card*{ E_{ \dirIn, \dirHigh }(w) } + \vZ{\dirIn}{w} - \vZ{\dirOut}{w} } }.
\intertext{Since $\aPos(w) \in [0, 1]$, by \cref{eq:multipass:e-lb,eq:multipass:a} we get:}
&\geq \frac{1}{ \card*{ E } } \cdot \paren*{ \aPos(w) \cdot \paren*{ \vA{\dirOut}{\alpha}{w} - \alpha \cdot \card*{ E_{ \dirOut, \dirLow }(w) } } + \paren*{ 1 - \aPos(w) } \cdot \paren*{ \vA{\dirIn}{\alpha}{w} - \alpha \cdot \card*{ E_{ \dirIn, \dirLow }(w) } } } \\
&= \frac{1}{ \card*{ E } } \cdot \paren*{ \aPos(w) \cdot \vA{\dirOut}{\alpha}{w} + \paren*{ 1 - \aPos(w) } \cdot \vA{\dirIn}{\alpha}{w} } - \frac{ \alpha \cdot \paren*{ \card*{ E_{ \dirOut, \dirLow }(w) } + \card*{ E_{ \dirIn, \dirLow }(w) } } }{ \card*{ E } }.
\intertext{To continue, note from \cref{eq:multipass:cls17:pos-alt} that $\vA{\dirIn}{\alpha}{w} \leq 0$ implies that $\aPos(w) = 1$ and that $\vA{\dirOut}{\alpha}{w} \leq 0$ implies that $\aPos(w) = 0$. This means that the first term above is always non-negative. We claim that this term is at least $2 \cdot \paren*{ \val{G}{ \aPiece_{ j - 1, \aOpt } } - \val{G}{ \aPiece_{ j, \aOpt } } }$. If either $\vA{\dirIn}{\alpha}{w} \leq 0$ or $\vA{\dirOut}{\alpha}{w} \leq 0$, this follows from \cref{lemma:multipass:piece-opt} and the fact that the term is non-negative. Otherwise, we have from \cref{eq:multipass:cls17:pos-alt} that $\aPos(w) = \frac{ \vA{\dirOut}{\alpha}{w} }{ \vA{\dirIn}{\alpha}{w} + \vA{\dirOut}{\alpha}{w} }$ and the claim follows from \cref{lemma:multipass:piece-opt} and the fact that, for all $x, y \in \mathbb{R}$, we have $x^2 + y^2 \geq 2xy$. With this, we get:}
&\geq 2 \cdot \paren*{ \val{G}{ \aPiece_{ j - 1, \aOpt } } - \val{G}{ \aPiece_{ j, \aOpt } } } - \frac{ \alpha \cdot \paren*{ \card*{ E_{ \dirOut, \dirLow }(w) } + \card*{ E_{ \dirIn, \dirLow }(w) } } }{ \card*{ E } },
\end{align*}
as desired.
\end{proof}

We are finally ready to prove \cref{lemma:multipass:cls17}.

\begin{proof}[Proof of \cref{lemma:multipass:cls17}]
We prove the first inequality; the second inequality is trivial since $\maxval{G}$ is the maximum value over all assignments (including fractional assignments WLOG). Recall that \Cref{lemma:multipass:telescope-step} is true for all $j \in [n]$. Summing it up for all $j \in [n]$, we get:
\begin{align*}
&\val{G}{ \aPiece_{ n, \aZero } } - \val{G}{ \aPiece_{ 0, \aZero } } + \val{G}{ \aPiece_{ n, \aOne } } - \val{G}{ \aPiece_{ 0, \aOne } } \\
&\geq 2 \cdot \paren*{ \val{G}{ \aPiece_{ 0, \aOpt } } - \val{G}{ \aPiece_{ n, \aOpt } } } - \sum_{ w \in V } \frac{ \alpha \cdot \paren*{ \card*{ E_{ \dirOut, \dirLow }(w) } + \card*{ E_{ \dirIn, \dirLow }(w) } } }{ \card*{ E } } .
\intertext{As \cref{eq:multipass:edge-parts} holds and $\chi$ is a proper coloring, the multisets $\cup_{w \in V} \paren*{ E_{ \dirOut, \dirLow }(w) }$ and $\cup_{w \in V} \paren*{ E_{ \dirIn, \dirLow }(w) }$ together form a partition of $E$. Using this, we get:}
&\geq 2 \cdot \paren*{ \val{G}{ \aPiece_{ 0, \aOpt } } - \val{G}{ \aPiece_{ n, \aOpt } } } - \alpha .
\end{align*}
Using the definition of $\aPiece$ (\Cref{eq:multipass:piece}, in particular that $\aPiece_{0,\aX} = \aX$ and $\aPiece_{n,\aX} = \aPos$ for all $\aX$) and the fact that $\aZero$ and $\aOne$ have value zero, we get:
\[
4 \cdot \val{G}{ \aPos } \geq 2 \cdot \maxval{G} - \alpha .
\]
We are done as $\maxval{G} \geq \frac{1}{4}$.
\end{proof}

\subsection{Estimating $\aPos(\cdot)$}
\label{sec:multipass:cls17estimate}

We now give an algorithm to estimate the function $\aPos(\cdot)$ computed in \cref{algo:cls17}. The main idea is to use random sampling to estimate the sums in \cref{line:cls17:z} instead of computing them exactly. Similar to \cref{algo:cls17}, \cref{algo:cls17estimate} is not in the streaming setting and its input is a graph $G = \paren*{ V, E }$, a proper coloring $\chi : V \to [k]$ of the vertices in $G$ with $k$ colors (for some $k > 0$), and a vertex $v \in V$ whose $\aPos(v)$ value we are trying to estimate. Unlike \cref{algo:cls17}, we henceforth fix $\alpha = \epsilon^5$ instead of working with a general $\alpha$ and omit it from the notation above. To start, define the parameters:
\begin{equation}
\label{eq:multipass:params}
D = \epsilon^{ - 100k } \hspace{1cm}\text{and}\hspace{1cm} \forall a \in [k] : \delta_a = \epsilon^{ 10 \cdot \paren*{ k + 1 - a } } .
\end{equation}
We will eventually use $\delta_a$ to bound ``errors'' for color-$a$ vertices.
\begin{equation}\label{eq:multipass:delta-k-bound}
    \delta_k = \epsilon^{10} < \epsilon^5 = \alpha.
\end{equation}
We also use the simple estimate that for all $x > 0$,
\begin{equation}\label{eq:multipass:exp-est}
    \mathrm{e}^{-x^{-1}} \leq x.
\end{equation}
(This can be proven using that $\mathrm{e}^y \geq y$ for all $y \in \BR$, setting $y=x^{-1}$, and reciprocating.)

\begin{algorithm}[H]
	\caption{Recursive, randomized procedure to sample estimate $\aPosEst(v)$ of $\aPos(v)$.}
	\label{algo:cls17estimate}
	\begin{algorithmic}[1]
		\renewcommand{\algorithmicrequire}{\textbf{Input:}}
		\renewcommand{\algorithmicensure}{\textbf{Output:}}
		
		\Require An integer $k > 0$, a graph $G = \paren*{ V, E }$, a proper coloring $\chi : V \to [k]$ of $G$, and a vertex $v \in V$.
		\Ensure A (random) value $\aPosEst(v) \in [0,1]$.

        \alglinenoPop{algcommon}
        
		\State Set $\vZEst{\dirIn}{v} = 0$ if $E_{ \dirIn, \dirLow }(v) = \emptyset$. If not, sample independent $\paren*{ \mathsf{u}_1, v }, \dots, \paren*{ \mathsf{u}_D, v } \sim \Unif{E_{ \dirIn, \dirLow }(v)}$ and set recursively:
		\[
			\vZEst{\dirIn}{v} = \frac{ \card*{ E_{ \dirIn, \dirLow }(v) } }{D} \cdot \sum_{d = 1}^D \aPosEst\paren*{ \mathsf{u}_d } .
		\]
		Similarly, we set $\vZEst{\dirOut}{v} = 0$ if $E_{ \dirOut, \dirLow }(v) = \emptyset$. Otherwise, we sample independent $\paren*{ v, \mathsf{u}_1 }, \dots, \paren*{ v, \mathsf{u}_D } \sim \Unif{E_{ \dirOut, \dirLow }(v)}$ and set recursively:
		\[
			\vZEst{\dirOut}{v} = \frac{ \card*{ E_{ \dirOut, \dirLow }(v) } }{D} \cdot \sum_{d = 1}^D \paren*{ 1 - \aPosEst\paren*{ \mathsf{u}_d } } .
		\]
        We emphasize that each invocation of $\aPosEst(\cdot)$ above uses fresh randomness. \label{line:cls17estimate:z}
		
		\State Output \label{line:cls17estimate:pos}
		\[
			\begin{cases}
				1, &\text{~if~} \vZEst{\dirIn}{v} - \vZEst{\dirOut}{v} \leq - \vYNoAlpha{\dirIn}{v} \\
				\frac{ \vYNoAlpha{\dirOut}{v} }{ \vYNoAlpha{\dirIn}{v} + \vYNoAlpha{\dirOut}{v} } - \frac{ \vZEst{\dirIn}{v} - \vZEst{\dirOut}{v} }{ \vYNoAlpha{\dirIn}{v} + \vYNoAlpha{\dirOut}{v} }, &\text{~if~} {- \vYNoAlpha{\dirIn}{v}} < \vZEst{\dirIn}{v} - \vZEst{\dirOut}{v} \leq \vYNoAlpha{\dirOut}{v} \\
				0, &\text{~if~} \vYNoAlpha{\dirOut}{v} < \vZEst{\dirIn}{v} - \vZEst{\dirOut}{v}
			\end{cases} .
		\]

        \alglinenoPush{algcommon}
        
	\end{algorithmic}
\end{algorithm}

We show that \cref{algo:cls17estimate} indeed estimates $\aPos(\cdot)$.

\begin{lemma}
\label{lemma:multipass:cls17estimate} 
Let $k > 0$, $G = \paren*{ V, E }$ be a graph, and $\chi : V \to [k]$ be a proper coloring of $G$. Let $v \in V$, and let $\aPos(v)$ and $\aPosEst(v)$ be as defined by \cref{algo:cls17,algo:cls17estimate}, respectively. Then:
\[
\Pr\paren*{ \abs*{ \aPosEst(v) - \aPos(v) } \geq \delta_{ \chi(v) } } \leq \delta_{ \chi(v) } .
\]
\end{lemma}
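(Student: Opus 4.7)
The plan is to proceed by induction on $\chi(v)$. The base case $\chi(v)=1$ is immediate: $E_{\dirIn,\dirLow}(v)=E_{\dirOut,\dirLow}(v)=\emptyset$, so $\vZ{\dirIn}{v}=\vZ{\dirOut}{v}=\vZEst{\dirIn}{v}=\vZEst{\dirOut}{v}=0$ and $\aPosEst(v)=\aPos(v)$ deterministically. For the inductive step, fix $v$ with $\chi(v)=a\geq 2$. The key structural observation is that both \cref{algo:cls17} and \cref{algo:cls17estimate} compute the output by applying the same truncation-and-scaling map $x\mapsto\max\paren*{ 0,\min\paren*{ 1,\frac{\vY{\dirOut}{\alpha}{v}-x}{\vY{\dirIn}{\alpha}{v}+\vY{\dirOut}{\alpha}{v}} } }$ to $\vZ{\dirIn}{v}-\vZ{\dirOut}{v}$ and $\vZEst{\dirIn}{v}-\vZEst{\dirOut}{v}$ respectively, and this map is $1/(\vY{\dirIn}{\alpha}{v}+\vY{\dirOut}{\alpha}{v})$-Lipschitz. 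Combined with the lower bound $\vY{\dirIn}{\alpha}{v}+\vY{\dirOut}{\alpha}{v}\geq\alpha\paren*{ |E_{\dirIn,\dirLow}(v)|+|E_{\dirOut,\dirLow}(v)| }=\epsilon^5\paren*{ |E_{\dirIn,\dirLow}(v)|+|E_{\dirOut,\dirLow}(v)| }$ from \cref{eq:multipass:y-value}, this gives
\[
|\aPosEst(v)-\aPos(v)|\leq\frac{|\vZEst{\dirIn}{v}-\vZ{\dirIn}{v}|+|\vZEst{\dirOut}{v}-\vZ{\dirOut}{v}|}{\epsilon^5\paren*{ |E_{\dirIn,\dirLow}(v)|+|E_{\dirOut,\dirLow}(v)| }},
\]
so it suffices to show $|\vZEst{\dirIn}{v}-\vZ{\dirIn}{v}|\leq\paren*{ \epsilon^5\delta_a/2 }|E_{\dirIn,\dirLow}(v)|$ except w.p.\ $\delta_a/2$ (and symmetrically in the out direction).

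To bound $|\vZEst{\dirIn}{v}-\vZ{\dirIn}{v}|$ I would split it into a bias and a variance term. For bias: by induction, for each lower-colored neighbor $u$ we have $\Pr[|\aPosEst(u)-\aPos(u)|\geq\delta_{a-1}]\leq\delta_{a-1}$ (using that $\delta_{\chi(u)}\leq\delta_{a-1}$), so \cref{claim:cls17exp-helper} applied to $\aPosEst(u)\in[0,1]$ yields $|\Exp[\aPosEst(u)]-\aPos(u)|\leq 2\delta_{a-1}$; since $\vZEst{\dirIn}{v}=\frac{|E_{\dirIn,\dirLow}(v)|}{D}\sum_{d=1}^D\aPosEst(\mathsf{u}_d)$ with $\mathsf{u}_d\sim\Unif{E_{\dirIn,\dirLow}(v)}$, linearity of expectation then gives $|\Exp[\vZEst{\dirIn}{v}]-\vZ{\dirIn}{v}|\leq 2\delta_{a-1}|E_{\dirIn,\dirLow}(v)|$. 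For variance: the $D$ recursive invocations use mutually independent fresh randomness and return values in $[0,1]$, so $\vZEst{\dirIn}{v}$ is a sum of $D$ independent $[0,|E_{\dirIn,\dirLow}(v)|/D]$-valued random variables, and \cref{prop:hoeffding} gives $\Pr[|\vZEst{\dirIn}{v}-\Exp[\vZEst{\dirIn}{v}]|\geq s|E_{\dirIn,\dirLow}(v)|]\leq 2\exp(-2Ds^2)$ for any $s>0$. The triangle inequality then yields $|\vZEst{\dirIn}{v}-\vZ{\dirIn}{v}|\leq(2\delta_{a-1}+s)|E_{\dirIn,\dirLow}(v)|$ except w.p.\ $2\exp(-2Ds^2)$.

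To finish I would set $s=\epsilon^5\delta_a/4$; using the geometric growth $\delta_{a-1}=\epsilon^{10}\delta_a$ together with $\epsilon<0.01$, one checks $2\delta_{a-1}+s\leq\epsilon^5\delta_a/2$ as required. A union bound over the in and out directions then bounds the total failure probability by $4\exp(-2Ds^2)=4\exp(-D\epsilon^{10}\delta_a^2/8)$, and plugging in $D=\epsilon^{-100k}$ and $\delta_a=\epsilon^{10(k+1-a)}$ makes the exponent $2Ds^2$ at least $\epsilon^{-(80k-10)}/8$ uniformly for $a\leq k$, so by \cref{eq:multipass:exp-est} this failure probability is comfortably below $\delta_a$. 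The main delicate point, which these parameter choices are designed to manage, is the amplification factor $1/\alpha=\epsilon^{-5}$ incurred when passing errors across one recursive level: the growth $\delta_a/\delta_{a-1}=\epsilon^{-10}$ in the error budget has to simultaneously absorb this amplification and leave room for the sampling-error term $s$, while $D$ must be large enough to push the Hoeffding tail below $\delta_a$.
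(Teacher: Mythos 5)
Your proof is correct and follows essentially the same approach as the paper's: induction on color, the Lipschitz bound on the cutoff map via $\vY{\dirIn}{\alpha}{v}+\vY{\dirOut}{\alpha}{v}\ge \alpha(|E_{\dirIn,\dirLow}(v)|+|E_{\dirOut,\dirLow}(v)|)$, the bias bound from \cref{claim:cls17exp-helper}, Hoeffding concentration, and the same parameter bookkeeping. The one mild difference is that you bound the error in $\vZEst{\dirIn}{v}$ with a two-term split (bias of $\Exp[\vZEst{\dirIn}{v}]$ plus Hoeffding on the composite summands $\frac{|E_{\dirIn,\dirLow}(v)|}{D}\aPosEst(\mathsf{u}_d)$ over their joint randomness), whereas the paper's \cref{claim:multipass:cls17estimate:z} uses a three-term split that first conditions on the sampled neighbors being representative (\cref{claim:multipass:cls17estimate:chernoff}) and then applies the bias bound and Hoeffding to the $\aPosEst$ randomness only — your version is a slight streamlining of the same argument, and the constant accounting you give checks out under the paper's parameter settings.
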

Before proving \cref{lemma:multipass:cls17estimate}, we quickly note that \cref{claim:cls17exp-helper} and \cref{lemma:multipass:cls17estimate} together imply that $\Exp \aPosEst(v)$ is close to $\aPos(v)$.

\begin{proof}
Fix $k$, $G$, and $\chi$ as in the lemma statement. We will show the lemma by induction on $\chi(v)$. For the base case, consider any vertex $v$ satisfying $\chi(v) = 1$. For such a vertex, \cref{eq:multipass:edge-parts} says that $E_{ \dirIn, \dirLow }(v) = E_{ \dirOut, \dirLow }(v) = \emptyset$. Plugging into \cref{line:cls17:z,line:cls17estimate:z}, we get that $\vZ{\dirIn}{v} = \vZ{\dirOut}{v} = 0$ and $\vZEst{\dirIn}{v} = \vZEst{\dirOut}{v} = 0$. From \cref{line:cls17:pos,line:cls17estimate:pos}, we get that $\aPosEst(v) = \aPos(v)$ with probability $1$ and the lemma follows trivially.

For the inductive step, consider $a > 1$ and a vertex $v$ satisfying $\chi(v) = a$. We now prove the result for $v$ assuming the following ``induction hypothesis'':
\begin{equation}\label{eq:multipass:ind-hyp}
    \forall u \in V, \chi(u) < a : \quad \Pr\paren*{ \abs*{ \aPosEst(u) - \aPos(u) } \geq \delta_{ \chi(u) } } \leq \delta_{ \chi(u) } .
\end{equation}

Firstly, we observe:
\begin{claim}
\label{claim:multipass:cls17exp}
For all vertices $u$ satisfying $\chi(u) < a$, we have:
\[
\abs*{ \Exp\bracket*{ \aPosEst(u) } - \aPos(u) } \leq 2 \delta_{ a - 1 } .
\]
\end{claim}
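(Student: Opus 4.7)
The plan is to derive this claim as an immediate consequence of the induction hypothesis \eqref{eq:multipass:ind-hyp} together with the generic tail-to-expectation conversion given by \cref{claim:cls17exp-helper}. Fix a vertex $u$ with $\chi(u) < a$ and set $\mathsf{X} := \aPosEst(u)$, $\mu := \aPos(u)$, and $\delta := \delta_{\chi(u)}$. First I would check the hypotheses of \cref{claim:cls17exp-helper}: inspecting the output cases on \cref{line:cls17estimate:pos} of \cref{algo:cls17estimate} shows that $\aPosEst(u)$ always lies in $[0,1]$, and $\mu = \aPos(u) \in [0,1]$ by the analogous inspection of \cref{line:cls17:pos} of \cref{algo:cls17}, while $\delta \in [0,1]$ by \eqref{eq:multipass:params} (since $\epsilon < 0.01$). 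The induction hypothesis \eqref{eq:multipass:ind-hyp} then states exactly that $\Pr(|\mathsf{X} - \mu| \geq \delta) \leq \delta$.

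Applying \cref{claim:cls17exp-helper} yields $|\Exp[\aPosEst(u)] - \aPos(u)| \leq 2\delta_{\chi(u)}$. To conclude, I would observe that the sequence $(\delta_b)_{b \in [k]}$ is monotone nondecreasing in $b$: indeed, by \eqref{eq:multipass:params}, $\delta_b = \epsilon^{10(k+1-b)}$, and since $\epsilon < 1$, larger exponent $k+1-b$ yields smaller $\delta_b$, so $b_1 \le b_2 \implies \delta_{b_1} \le \delta_{b_2}$. Since $\chi(u) < a$, i.e., $\chi(u) \leq a-1$, we get $\delta_{\chi(u)} \leq \delta_{a-1}$, and hence $|\Exp[\aPosEst(u)] - \aPos(u)| \leq 2\delta_{a-1}$, as required.

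There is no real obstacle here: the claim is purely a bookkeeping step that packages the inductive hypothesis into a form suitable for the rest of the inductive step (which presumably bounds $|\vZEst{\dirIn}{v} - \vZ{\dirIn}{v}|$ and $|\vZEst{\dirOut}{v} - \vZ{\dirOut}{v}|$ via a Hoeffding-type concentration argument on the $D$ independent samples and then propagates this to $|\aPosEst(v) - \aPos(v)|$ through the piecewise-linear truncation on \cref{line:cls17estimate:pos}). The only thing to be careful about is the monotonicity direction of the sequence $\delta_b$, but this is immediate from \eqref{eq:multipass:params}.
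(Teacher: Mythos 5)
Your proof is correct and matches the paper's (one-line) argument exactly: both apply \cref{claim:cls17exp-helper} to the induction hypothesis \eqref{eq:multipass:ind-hyp}. You also make explicit the monotonicity step $\delta_{\chi(u)} \le \delta_{a-1}$, which the paper leaves implicit.
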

\begin{proof}
Follows from the induction hypothesis (\Cref{eq:multipass:ind-hyp}) and \cref{claim:cls17exp-helper}.
\end{proof}

Next, we state a concentration result for sums of $\aPos$-values over randomly sampled in- or out-neighbors:
\begin{claim}
\label{claim:multipass:cls17estimate:chernoff}
If $E_{ \dirIn, \dirLow }(v) \neq \emptyset$, we have:
\[
\Pr_{ \paren*{ \mathsf{u}_1, v }, \dots, \paren*{ \mathsf{u}_D, v } \sim E_{ \dirIn, \dirLow }(v) }\paren*{ \abs*{ \sum_{d = 1}^D \aPos\paren*{ \mathsf{u}_d } - \frac{D}{ \card*{ E_{ \dirIn, \dirLow }(v) } } \cdot \sum_{ (u, v) \in E_{ \dirIn, \dirLow }(v) } \aPos(u) } \geq D \cdot \delta_{ a - 1 } } \leq \delta_{ a - 1 } .
\]
Similarly, if $E_{ \dirOut, \dirLow }(v) \neq \emptyset$, we have:
\[
\Pr_{ \paren*{ v, \mathsf{u}_1 }, \dots, \paren*{ v, \mathsf{u}_D } \sim E_{ \dirOut, \dirLow }(v) }\paren*{ \abs*{ \sum_{d = 1}^D \paren*{ 1 - \aPos\paren*{ \mathsf{u}_d } } - \frac{D}{ \card*{ E_{ \dirOut, \dirLow }(v) } } \cdot \sum_{ (v, u) \in E_{ \dirOut, \dirLow }(v) } \paren*{ 1 - \aPos(u) } } \geq D \cdot \delta_{ a - 1 } } \leq \delta_{ a - 1 } .
\]
(Edges are sampled independently and uniformly at random.)
\end{claim}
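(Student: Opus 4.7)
The plan is to apply Hoeffding's inequality (\cref{prop:hoeffding}) directly to each of the two sums; no further machinery is needed.

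For the in-edge inequality, set $X_d \coloneqq \aPos(\mathsf{u}_d)$ for $d \in [D]$. Three observations are immediate: (i) the $X_d$ are mutually independent, since the edges $(\mathsf{u}_d, v)$ are sampled independently; (ii) each $X_d \in [0,1]$ because $\aPos$ is a fractional assignment into $[0,1]$; and (iii) since $(\mathsf{u}_d, v) \sim \Unif{E_{\dirIn,\dirLow}(v)}$, we have $\Exp\bracket*{X_d} = \frac{1}{\card*{E_{\dirIn,\dirLow}(v)}} \sum_{(u, v) \in E_{\dirIn,\dirLow}(v)} \aPos(u)$, so the ``target'' quantity $\frac{D}{\card*{E_{\dirIn,\dirLow}(v)}} \sum_{(u,v) \in E_{\dirIn,\dirLow}(v)} \aPos(u)$ equals $\sum_{d = 1}^D \Exp\bracket*{X_d}$. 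The out-edge case is perfectly symmetric after replacing $X_d \coloneqq \aPos(\mathsf{u}_d)$ with $X_d \coloneqq 1 - \aPos(\mathsf{u}_d)$, which still lies in $[0,1]$ and whose expectation matches the corresponding target.

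Invoking \cref{prop:hoeffding} with $a_i = 0$, $b_i = 1$, and $t = D \delta_{a-1}$, the deviation probability is at most $2 \mathrm{e}^{-2 D \delta_{a-1}^2}$. It remains to verify that this quantity is at most $\delta_{a-1}$. By \cref{eq:multipass:exp-est} applied with $x = \delta_{a-1}/2$, it suffices to show $2D \delta_{a-1}^2 \geq 2/\delta_{a-1}$, i.e.\ $D \geq \delta_{a-1}^{-3}$. Plugging in the parameter choices from \cref{eq:multipass:params} (recall $D = \epsilon^{-100k}$ and $\delta_{a-1} = \epsilon^{10(k+2-a)}$ with $2 \leq a \leq k$, since we are in the inductive step where $a > 1$), this reduces to the elementary inequality $100k \geq 30(k+2-a)$, which holds comfortably for all $k \geq 1$ and $a \geq 1$.

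There is no real obstacle here: the claim is a textbook Hoeffding application, and the parameters in \cref{eq:multipass:params} are chosen with substantial slack precisely so that arguments of this form go through. The only minor subtlety worth flagging is that $\delta_{a-1}$ appears simultaneously as the deviation radius \emph{and} as the failure probability, but this coupling is absorbed easily by the cubic gap between $D$ and $\delta_{a-1}^{-3}$ furnished by the parameter choices.
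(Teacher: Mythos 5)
Your proof is correct and follows the same route as the paper: apply Hoeffding's inequality (\cref{prop:hoeffding}) to the i.i.d.\ $[0,1]$-valued samples to get the bound $2\mathrm{e}^{-2D\delta_{a-1}^2}$, then use the parameter choices in \cref{eq:multipass:params} together with \cref{eq:multipass:exp-est} to verify this is at most $\delta_{a-1}$. The paper states the final numerical verification more tersely, while you spell out the reduction to $D \geq \delta_{a-1}^{-3}$ and hence to $100k \geq 30(k+2-a)$, which is a faithful unpacking of the same step.
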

\begin{proof}
The proof is a standard Chernoff bound. We only show the first inequality as the second one is analogous. For $d \in [D]$, let $\mathsf{X}_d$ be a random variable that takes the value $\aPos\paren*{ \mathsf{u}_d }$. Observe that these random variables are mutually independent and identically distributed with expectation $\frac{1}{ \card*{ E_{ \dirIn, \dirLow }(v) } } \cdot \sum_{ (u, v) \in E_{ \dirIn, \dirLow }(v) } \aPos(u)$. Moreover, they only take values in the interval $[0, 1]$. From the Chernoff bound (\cref{prop:hoeffding}) we have:
\[
\Pr_{ \paren*{ \mathsf{u}_1, v }, \dots, \paren*{ \mathsf{u}_D, v } \sim E_{ \dirIn, \dirLow }(v) }\paren*{ \abs*{ \sum_{d = 1}^D \aPos\paren*{ \mathsf{u}_d } - \frac{D}{ \card*{ E_{ \dirIn, \dirLow }(v) } } \cdot \sum_{ (u, v) \in E_{ \dirIn, \dirLow }(v) } \aPos(u) } \geq D \cdot \delta_{ a - 1 } } \leq 2 \cdot \mathrm{e}^{ - 2D \cdot \delta_{ a - 1 }^2 } \] which by our definition of $\delta_{a-1}$ (\Cref{eq:multipass:params}) and \Cref{eq:multipass:exp-est} is at most $\delta_{ a - 1}$.
\end{proof}

Recall the value $\vZ{\dirIn}{v}$ computed in \cref{line:cls17:z} of \Cref{algo:cls17}: If $E_{ \dirIn, \dirLow }(v) \neq \emptyset$, $\vZ{\dirIn}{v}$ takes the average $\aPos(u)$ value over neighbors from $E_{ \dirIn, \dirLow }(v)$ edges; by comparison, $\vZEst{\dirIn}{v}$ samples $D$ random neighbors $\mathsf{u}_1,\ldots,\mathsf{u}_D$ from $E_{ \dirIn, \dirLow }(v)$ edges, samples an estimate $p_i \coloneqq \aPosEst(\mathsf{u}_i)$ for each, and then outputs the average of these $p_i$'s. (If $E_{ \dirIn, \dirLow }(v) = \emptyset$, then $\vZ{\dirIn}{v} = \vZEst{\dirIn}{v} = 0$.) The following lemma shows that $\vZEst{\dirIn}{v}$ is likely close to $\vZ{\dirIn}{v}$, and similarly for $\vZ{\dirOut}{v}$ (also computed in \cref{line:cls17:z}) and $\vZEst{\dirOut}{v}$. In proving the lemma, we have to account for three sources of error: Deviation between the average $\aPos$ value over the random neighbors $\mathsf{u}_1,\ldots,\mathsf{u}_D$ and the average $\aPos$ value over all neighbors of $v$; deviation between an expected estimate $\Exp \aPosEst(\mathsf{u}_i)$ and $\aPos(\mathsf{u}_i)$; and deviation between the average of the $p_i$'s and the average of the $\Exp \aPosEst(\mathsf{u}_i)$'s.

\begin{claim}
\label{claim:multipass:cls17estimate:z}
We have:
\begin{align*}
\Pr\paren*{ \abs*{ \vZEst{\dirIn}{v} - \vZ{\dirIn}{v} } > 10 \cdot \card*{ E_{ \dirIn, \dirLow }(v) } \cdot \delta_{ a - 1 } } &\leq 2 \delta_{ a - 1 } , \\
\Pr\paren*{ \abs*{ \vZEst{\dirOut}{v} - \vZ{\dirOut}{v} } > 10 \cdot \card*{ E_{ \dirOut, \dirLow }(v) } \cdot \delta_{ a - 1 } } &\leq 2 \delta_{ a - 1 } ,
\end{align*}
where the probabilities are both over the random choices of neighbors and the invocations of $\aPosEst$.
\end{claim}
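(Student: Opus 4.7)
The in- and out-cases are analogous, so I focus on bounding $\abs*{\vZEst{\dirIn}{v} - \vZ{\dirIn}{v}}$. If $E_{\dirIn,\dirLow}(v) = \emptyset$, both quantities are zero by definition and there is nothing to show, so throughout I assume $E_{\dirIn,\dirLow}(v)$ is nonempty. My plan is to interpose an intermediate quantity
\[
\widetilde{Z}_\dirIn(v) \coloneqq \frac{\card*{E_{\dirIn,\dirLow}(v)}}{D} \sum_{d=1}^D \aPos(\mathsf{u}_d),
\]
which uses the true positions of the sampled neighbors $\mathsf{u}_1,\ldots,\mathsf{u}_D$ rather than their recursive estimates, and to bound separately the \emph{sampling error} $\abs*{\widetilde{Z}_\dirIn(v) - \vZ{\dirIn}{v}}$ and the \emph{estimation-propagation error} $\abs*{\vZEst{\dirIn}{v} - \widetilde{Z}_\dirIn(v)}$.

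The sampling error is handled immediately by the first half of \cref{claim:multipass:cls17estimate:chernoff}: applied to the sum $\sum_d \aPos(\mathsf{u}_d)$, it yields $\abs*{\widetilde{Z}_\dirIn(v) - \vZ{\dirIn}{v}} \leq \card*{E_{\dirIn,\dirLow}(v)} \cdot \delta_{a-1}$ except with probability $\delta_{a-1}$. For the estimation-propagation error, I condition on the sample $(\mathsf{u}_1,\ldots,\mathsf{u}_D)$ and observe that, since each recursive invocation of $\aPosEst$ uses independent fresh randomness, the estimates $\aPosEst(\mathsf{u}_1), \ldots, \aPosEst(\mathsf{u}_D)$ are mutually independent random variables taking values in $[0,1]$. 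Each $\mathsf{u}_d$ has color strictly less than $a$, so \cref{claim:multipass:cls17exp} (which bundles the induction hypothesis) gives $\abs*{\Exp[\aPosEst(\mathsf{u}_d)] - \aPos(\mathsf{u}_d)} \leq 2\delta_{a-1}$ pointwise; hence the conditional expectation of $\vZEst{\dirIn}{v}$ given the sample lies within $2 \card*{E_{\dirIn,\dirLow}(v)} \delta_{a-1}$ of $\widetilde{Z}_\dirIn(v)$ \emph{deterministically}. Applying Hoeffding's inequality (\cref{prop:hoeffding}) to the $D$ independent summands of $\vZEst{\dirIn}{v}$, each scaled to the interval $[0, \card*{E_{\dirIn,\dirLow}(v)}/D]$, then gives a further deviation of at most $\card*{E_{\dirIn,\dirLow}(v)} \delta_{a-1}$ from this conditional expectation, except with probability $2 \exp(-2D\delta_{a-1}^2)$.

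Combining these estimates via the triangle inequality gives $\abs*{\vZEst{\dirIn}{v} - \vZ{\dirIn}{v}} \leq 4 \card*{E_{\dirIn,\dirLow}(v)} \delta_{a-1} < 10 \card*{E_{\dirIn,\dirLow}(v)} \delta_{a-1}$ on the intersection of the two good events, which by a union bound has total failure probability at most $\delta_{a-1} + 2\exp(-2D\delta_{a-1}^2)$. The final minor computation is to verify that this is at most $2\delta_{a-1}$: by \cref{eq:multipass:exp-est} we have $2\exp(-2D\delta_{a-1}^2) \leq 1/(D \delta_{a-1}^2)$, and the parameter choices $D = \epsilon^{-100k}$ together with $\delta_{a-1} \geq \delta_1 = \epsilon^{10k}$ (from \cref{eq:multipass:params}) yield $D \delta_{a-1}^3 \geq \epsilon^{-70k} \geq 1$, so that this second term is at most $\delta_{a-1}$ as needed. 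I do not anticipate a serious obstacle here; the three main ingredients---\cref{claim:multipass:cls17estimate:chernoff}, the induction-driven \cref{claim:multipass:cls17exp}, and Hoeffding's inequality---were all set up precisely to feed into this bound, and the only care required is in checking that $D$ was chosen large enough to drive the Hoeffding failure probability below $\delta_{a-1}$, which it is by construction.
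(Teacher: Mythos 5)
Your proposal is correct and follows essentially the same route as the paper's proof: split the error into a sampling piece (handled by \cref{claim:multipass:cls17estimate:chernoff}), a bias piece from the recursive estimates (handled by \cref{claim:multipass:cls17exp} / the induction hypothesis), and a concentration piece (Hoeffding), then union-bound. The paper works directly with the sums rather than introducing an explicit intermediate $\widetilde{Z}_{\dirIn}(v)$, and is looser with constants (it only certifies the deviation exceeds $5D\delta_{a-1}$ before invoking Hoeffding, whereas you track a tighter $4$-times bound), but these are cosmetic differences only.
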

\begin{proof}
We only show the first inequality as the second one is similar. Observe from \cref{line:cls17:z,line:cls17estimate:z} that if $E_{ \dirIn, \dirLow }(v) = \emptyset$, the probability under consideration is $0$ and there is nothing to show. Thus, throughout the proof we assume that $E_{ \dirIn, \dirLow }(v) \neq \emptyset$. Using the definition of $\vZEst{\dirIn}{v}$ and $\vZ{\dirIn}{v}$, we get:
\begin{align*}
&\Pr\paren*{ \abs*{ \vZEst{\dirIn}{v} - \vZ{\dirIn}{v} } > 10 \cdot \card*{ E_{ \dirIn, \dirLow }(v) } \cdot \delta_{ a - 1 } } \\
&= \Pr_{ \paren*{ \mathsf{u}_1, v }, \dots, \paren*{ \mathsf{u}_D, v } \sim E_{ \dirIn, \dirLow }(v) }\paren*{ \abs*{ \sum_{d = 1}^D \aPosEst\paren*{ \mathsf{u}_d } - \frac{D}{ \card*{ E_{ \dirIn, \dirLow }(v) } } \cdot \sum_{ (u, v) \in E_{ \dirIn, \dirLow }(v) } \aPos(u) } > 10D \cdot \delta_{ a - 1 } } 
\intertext{which, letting $\mathcal{E}_1$ be the event whose probability is upper bounded in \cref{claim:multipass:cls17estimate:chernoff}, by a chain rule, gives:}
&\leq \Pr\paren*{ \mathcal{E}_1 } + \Pr_{ \paren*{ \mathsf{u}_1, v }, \dots, \paren*{ \mathsf{u}_D, v } \sim E_{ \dirIn, \dirLow }(v) }\paren*{ \abs*{ \sum_{d = 1}^D \aPosEst\paren*{ \mathsf{u}_d } - \frac{D}{ \card*{ E_{ \dirIn, \dirLow }(v) } } \cdot \sum_{ (u, v) \in E_{ \dirIn, \dirLow }(v) } \aPos(u) } > 10D \cdot \delta_{ a - 1 } ~\bigg\vert~ \overline{ \mathcal{E}_1 } } .
\end{align*}
By \cref{claim:multipass:cls17estimate:chernoff}, it suffices to bound the second term by $\delta_{ a - 1 }$. We shall show this bound under a stronger conditioning by fixing an arbitrary $\paren*{ u_1, v }, \dots, \paren*{ u_D, v } \in E_{ \dirIn, \dirLow }(v)$ for which $\mathcal{E}_1$ does not occur. Observe that, after this stronger conditioning, the only randomness left is the randomness in the computations of $\aPosEst(\cdot)$. For an arbitrary such $\paren*{ u_1, v }, \dots, \paren*{ u_D, v } \in E_{ \dirIn, \dirLow }(v)$, we have:
\begin{align*}
&\Pr\paren*{ \abs*{ \sum_{d = 1}^D \aPosEst\paren*{ u_d } - \frac{D}{ \card*{ E_{ \dirIn, \dirLow }(v) } } \cdot \sum_{ (u, v) \in E_{ \dirIn, \dirLow }(v) } \aPos(u) } > 10D \cdot \delta_{ a - 1 } ~\bigg\vert~ \paren*{ u_1, v }, \dots, \paren*{ u_D, v } } \\
&= \Pr\paren*{ \abs*{ \sum_{d = 1}^D \aPosEst\paren*{ u_d } - \frac{D}{ \card*{ E_{ \dirIn, \dirLow }(v) } } \cdot \sum_{ (u, v) \in E_{ \dirIn, \dirLow }(v) } \aPos(u) } > 10D \cdot \delta_{ a - 1 } } ,
\intertext{as the randomness in the computations of $\aPosEst(\cdot)$ is independent of the randomness used to sample $\paren*{ \mathsf{u}_1, v }, \dots, \paren*{ \mathsf{u}_D, v } \sim E_{ \dirIn, \dirLow }(v)$. As $\paren*{ u_1, v }, \dots, \paren*{ u_D, v } \in E_{ \dirIn, \dirLow }(v)$ are chosen so that $\mathcal{E}_1$ does not occur, we have from \cref{claim:multipass:cls17estimate:chernoff} that $\abs*{ \sum_{d = 1}^D \aPos\paren*{ u_d } - \frac{D}{ \card*{ E_{ \dirIn, \dirLow }(v) } } \cdot \sum_{ (u, v) \in E_{ \dirIn, \dirLow }(v) } \aPos(u) } \leq D \cdot \delta_{ a - 1 }$. By a triangle inequality, we can continue as:}
&\leq \Pr\paren*{ \abs*{ \sum_{d = 1}^D \aPosEst\paren*{ u_d } - \sum_{d = 1}^D \aPos\paren*{ u_d } } \geq 8D \cdot \delta_{ a - 1 } } .
\intertext{We now apply \cref{claim:multipass:cls17exp} on $u_1, \dots, u_D$ to get that for all $d \in [D]$, we have $\abs*{ \Exp\bracket*{ \aPosEst\paren*{ u_d } } - \aPos\paren*{ u_d } } \leq 2 \delta_{ a - 1 }$. By a triangle inequality, this means that we have $\abs*{ \sum_{d = 1}^D \Exp\bracket*{ \aPosEst\paren*{ u_d } } - \sum_{d = 1}^D \aPos\paren*{ u_d } } \leq 2D \cdot \delta_{ a - 1 }$. By another triangle inequality, we get:}
&\leq \Pr\paren*{ \abs*{ \sum_{d = 1}^D \aPosEst\paren*{ u_d } - \sum_{d = 1}^D \Exp\bracket*{ \aPosEst\paren*{ u_d } } } \geq 5D \cdot \delta_{ a - 1 } } .
\intertext{We now bound the last term using \cref{prop:hoeffding,eq:multipass:params}. Observe that the random variables $\aPosEst\paren*{ u_1 }, \dots, \aPosEst\paren*{ u_D }$ are mutually independent and takes values in $[0, 1]$. We get using \Cref{eq:multipass:exp-est} that:}
&\leq 2 \cdot \mathrm{e}^{ - 50D \cdot \delta_{ a - 1 }^2 } \leq \delta_{ a - 1 },
\end{align*}
as desired.
\end{proof}
A union bound now gives:
\begin{corollary}
\label{cor:cls17estimate:z}
We have:
\[
\Pr\paren[\bigg]{ \abs*{ \paren*{ \vZEst{\dirIn}{v} - \vZEst{\dirOut}{v} } - \paren*{ \vZ{\dirIn}{v} - \vZ{\dirOut}{v} } } > 10 \delta_{ a - 1 } \cdot \paren*{ \card*{ E_{ \dirIn, \dirLow }(v) } + \card*{ E_{ \dirOut, \dirLow }(v) } } } \leq 4 \delta_{ a - 1 } .
\]
\end{corollary}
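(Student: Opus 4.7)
The plan is to derive this corollary directly from \cref{claim:multipass:cls17estimate:z} by combining a union bound with the triangle inequality; no new probabilistic argument is needed. The claim already gives us tail bounds for the two quantities $\vZEst{\dirIn}{v} - \vZ{\dirIn}{v}$ and $\vZEst{\dirOut}{v} - \vZ{\dirOut}{v}$ separately, and the corollary is essentially just a restatement in terms of the relevant linear combination of these two quantities.

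More concretely, I would first define two ``bad'' events: $\mathcal{E}_{\dirIn}$ is the event that $\abs{\vZEst{\dirIn}{v} - \vZ{\dirIn}{v}} > 10 \cdot \card{E_{\dirIn,\dirLow}(v)} \cdot \delta_{a-1}$, and $\mathcal{E}_{\dirOut}$ is the analogous event for the out-direction. \cref{claim:multipass:cls17estimate:z} states that $\Pr(\mathcal{E}_{\dirIn}) \leq 2\delta_{a-1}$ and $\Pr(\mathcal{E}_{\dirOut}) \leq 2\delta_{a-1}$. A union bound then gives $\Pr(\mathcal{E}_{\dirIn} \cup \mathcal{E}_{\dirOut}) \leq 4\delta_{a-1}$.

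Next, conditioned on the complement $\overline{\mathcal{E}_{\dirIn}} \cap \overline{\mathcal{E}_{\dirOut}}$, the triangle inequality gives
\[
\abs*{ \paren*{ \vZEst{\dirIn}{v} - \vZEst{\dirOut}{v} } - \paren*{ \vZ{\dirIn}{v} - \vZ{\dirOut}{v} } } \leq \abs{\vZEst{\dirIn}{v} - \vZ{\dirIn}{v}} + \abs{\vZEst{\dirOut}{v} - \vZ{\dirOut}{v}} \leq 10\delta_{a-1} \cdot \paren*{ \card{E_{\dirIn,\dirLow}(v)} + \card{E_{\dirOut,\dirLow}(v)} }.
\]
Hence the event in the corollary is contained in $\mathcal{E}_{\dirIn} \cup \mathcal{E}_{\dirOut}$, and the claimed bound of $4\delta_{a-1}$ follows.

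There is no real obstacle here; the corollary is a bookkeeping step that packages the two independent tail bounds from \cref{claim:multipass:cls17estimate:z} into a single statement about the quantity $\vZEst{\dirIn}{v} - \vZEst{\dirOut}{v}$, which is precisely the linear combination that governs the three-case rule on \cref{line:cls17estimate:pos} of \cref{algo:cls17estimate} and will be used in the subsequent analysis of $\aPosEst(v)$ versus $\aPos(v)$.
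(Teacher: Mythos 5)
Your proof is correct and matches the paper exactly: the paper's entire proof of this corollary is the sentence ``A union bound now gives,'' and your argument simply spells out that union bound over the two bad events from \cref{claim:multipass:cls17estimate:z} together with the triangle inequality. Nothing to add or fix.
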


We now continue with our proof of \cref{lemma:multipass:cls17estimate}. For this, define the function $\cutoff_v : \mathbb{R} \to [0, 1]$ as follows:
\begin{equation}
\label{eq:multipass:cutoff}
\cutoff_v\paren*{ x } = 
\begin{cases}
	1, &\text{~if~} x \leq - \vYNoAlpha{\dirIn}{v} \\
	\frac{ \vYNoAlpha{\dirOut}{v} - x }{ \vYNoAlpha{\dirIn}{v} + \vYNoAlpha{\dirOut}{v} }, &\text{~if~} {-\vYNoAlpha{\dirIn}{v}} < x \leq \vYNoAlpha{\dirOut}{v} \\
	0, &\text{~if~} \vYNoAlpha{\dirOut}{v} < x
\end{cases} .
\end{equation}
Observe that because of \cref{line:cls17:pos,line:cls17estimate:pos}, it holds that $\aPosEst(v) = \cutoff_v\paren*{ \vZEst{\dirIn}{v} - \vZEst{\dirOut}{v} }$ and $\aPos(v) = \cutoff_v\paren*{ \vZ{\dirIn}{v} - \vZ{\dirOut}{v} }$. Moreover, observe that for all $x, y \in \mathbb{R}$,
\begin{align*}
\abs*{ \cutoff_v(x) - \cutoff_v(y) } &\leq \frac{ \abs*{ x - y } }{ \vYNoAlpha{\dirIn}{v} + \vYNoAlpha{\dirOut}{v} }. \\
\intertext{We can assume that $\card*{ E_{ \dirIn, \dirLow }(v) } + \card*{ E_{ \dirOut, \dirLow }(v) } > 0$ (as the lemma is trivial otherwise), and therefore, by \cref{eq:multipass:y-value} we have:.}
&\leq \frac{ \abs*{ x - y } }{ \epsilon^5 \cdot \paren*{ \card*{ E_{ \dirIn, \dirLow }(v) } + \card*{ E_{ \dirOut, \dirLow }(v) } } } .
\end{align*}
Using this, we derive:
\begin{align*}
&\Pr\paren*{ \abs*{ \aPosEst(v) - \aPos(v) } \geq \delta_{ \chi(v) } } \\
&\hspace{0.5cm}\leq \Pr\paren*{ \abs*{ \cutoff_v\paren*{ \vZEst{\dirIn}{v} - \vZEst{\dirOut}{v} } - \cutoff_v\paren*{ \vZ{\dirIn}{v} - \vZ{\dirOut}{v} } } \geq \delta_{ \chi(v) } } \\
&\hspace{0.5cm}\leq \Pr\paren*{ \frac{ \abs*{ \paren*{ \vZEst{\dirIn}{v} - \vZEst{\dirOut}{v} } - \paren*{ \vZ{\dirIn}{v} - \vZ{\dirOut}{v} } } }{ \epsilon^5 \cdot \paren*{ \card*{ E_{ \dirIn, \dirLow }(v) } + \card*{ E_{ \dirOut, \dirLow }(v) } } } \geq \delta_{ \chi(v) } } \\
&\hspace{0.5cm}= \Pr\paren[\bigg]{ \abs*{ \paren*{ \vZEst{\dirIn}{v} - \vZEst{\dirOut}{v} } - \paren*{ \vZ{\dirIn}{v} - \vZ{\dirOut}{v} } } \geq \delta_{ \chi(v) } \cdot \epsilon^5 \cdot \paren*{ \card*{ E_{ \dirIn, \dirLow }(v) } + \card*{ E_{ \dirOut, \dirLow }(v) } } } \\
&\hspace{0.5cm}\leq \Pr\paren[\bigg]{ \abs*{ \paren*{ \vZEst{\dirIn}{v} - \vZEst{\dirOut}{v} } - \paren*{ \vZ{\dirIn}{v} - \vZ{\dirOut}{v} } } > 10 \delta_{ a - 1 } \cdot \paren*{ \card*{ E_{ \dirIn, \dirLow }(v) } + \card*{ E_{ \dirOut, \dirLow }(v) } } } \tag{\cref{eq:multipass:params}} \\
&\hspace{0.5cm}\leq 4 \delta_{ a - 1 } \tag{\cref{cor:cls17estimate:z}} \\
&\hspace{0.5cm}\leq \delta_{ \chi(v) } ,
\end{align*}
completing the proof of \Cref{lemma:multipass:cls17estimate}.
\end{proof}

Now \Cref{lemma:multipass:cls17estimate} implies:
\begin{lemma}
\label{lemma:multipass:cls17estimate:pair} 
Let $k > 0$, $G = \paren*{ V, E }$ be a graph, and $\chi : V \to [k]$ be a proper coloring of $G$. For all $(u, v) \in E$, we have:
\[
\Pr\paren*{ \abs*{ \aPosEst(u) \cdot \paren*{ 1 - \aPosEst(v) } - \aPos(u) \cdot \paren*{ 1 - \aPos(v) } } \geq 5 \delta_k } \leq 2\delta_k .
\]
\end{lemma}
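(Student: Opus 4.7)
The plan is to decouple the two-vertex deviation by a union bound and a standard additive expansion. First, I would invoke \cref{lemma:multipass:cls17estimate} once on $u$ and once on $v$. Since $\delta_a = \epsilon^{10(k+1-a)}$ and $\epsilon < 1$, the sequence $\delta_a$ is non-decreasing in $a$, so $\delta_{\chi(u)}, \delta_{\chi(v)} \leq \delta_k$. Thus each of the ``bad'' events $\abs{\aPosEst(u) - \aPos(u)} \geq \delta_{\chi(u)}$ and $\abs{\aPosEst(v) - \aPos(v)} \geq \delta_{\chi(v)}$ has probability at most $\delta_k$, and by a union bound the probability that at least one holds is at most $2\delta_k$. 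This matches the target failure probability exactly.

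Next, I would show that on the complementary ``good'' event the quantity $\abs{\aPosEst(u)(1-\aPosEst(v)) - \aPos(u)(1-\aPos(v))}$ is bounded by $2\delta_k$, which is well under the $5\delta_k$ threshold in the statement. The standard trick is to add and subtract a hybrid term, writing
\begin{align*}
\aPosEst(u)(1-\aPosEst(v)) - \aPos(u)(1-\aPos(v))
&= (\aPosEst(u) - \aPos(u))(1 - \aPosEst(v)) \\
&\quad + \aPos(u)(\aPos(v) - \aPosEst(v)),
\end{align*}
and then applying the triangle inequality. Since every one of $\aPos(u), \aPos(v), \aPosEst(u), \aPosEst(v)$ lies in $[0,1]$, the factors $\abs{1 - \aPosEst(v)}$ and $\abs{\aPos(u)}$ are each at most $1$, so the absolute value of the difference is at most $\abs{\aPosEst(u) - \aPos(u)} + \abs{\aPosEst(v) - \aPos(v)} \leq \delta_{\chi(u)} + \delta_{\chi(v)} \leq 2\delta_k \leq 5\delta_k$.

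There is essentially no obstacle here — this lemma is a routine Lipschitz-style lifting of the single-vertex guarantee to a product of two values. The only small subtleties to mention are (i) the monotonicity $\delta_a \leq \delta_k$, which is immediate from the definition in \cref{eq:multipass:params}, and (ii) that the union bound does not require independence between the randomness used to compute $\aPosEst(u)$ and $\aPosEst(v)$, although in fact the algorithm does use fresh independent randomness for each invocation. This gives the claimed bound with slack (in fact $2\delta_k$ would suffice); the constant $5$ is presumably chosen to leave room for the way the lemma is consumed in later estimation steps.
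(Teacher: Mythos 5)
Your proof is correct and takes essentially the same approach as the paper: union bound the two single-vertex events from \cref{lemma:multipass:cls17estimate} to get failure probability $2\delta_k$, then bound the product deviation deterministically on the good event. The only difference is the algebraic step — you use the hybrid decomposition $\aPosEst(u)(1-\aPosEst(v)) - \aPos(u)(1-\aPos(v)) = (\aPosEst(u) - \aPos(u))(1 - \aPosEst(v)) + \aPos(u)(\aPos(v) - \aPosEst(v))$ to get a bound of $2\delta_k$, whereas the paper expands $(\aPos(u)+\delta_k)((1-\aPos(v))+\delta_k)$ and its counterpart directly to get $3\delta_k$ and $4\delta_k$ on the two sides; your constant is slightly tighter, but both are within the stated $5\delta_k$ threshold.
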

\begin{proof}
By \cref{lemma:multipass:cls17estimate} and a union bound, we have that except with probability at most $2 \delta_k$, we have $\abs*{ \aPosEst(u) - \aPos(u) }, \abs*{ \aPosEst(v) - \aPos(v) } < \delta_k$. We condition on this event for the rest of this proof and show that it implies that $\abs*{ \aPosEst(u) \cdot \paren*{ 1 - \aPosEst(v) } - \aPos\paren*{ u_j } \cdot \paren*{ 1 - \aPos\paren*{ v_j } } } < 5 \delta_k$.

By our conditioning, $\aPosEst(u) \leq \aPos(u) + \delta_k$ and $\aPosEst(v) \geq \aPos(v) - \delta_k$, and the latter implies $1 - \aPosEst(v) \leq (1-\aPos(v)) + \delta_k$. Since $\aPosEst(u),\aPosEst(v) \in [0,1]$, the LHSes are nonnegative, so we can multiply and deduce:
\begin{align*}
\aPosEst(u) \cdot \paren*{ 1 - \aPosEst(v) } &\leq (\aPos(u) + \delta_k) ((1-\aPos(v)) + \delta_k) \\
&= \aPos(u) (1-\aPos(v)) + \delta_k (1-\aPos(v)) + \delta_k \aPos(u) + \delta_k^2 \\
&\leq \aPos(u) \cdot \paren*{ 1 - \aPos(v) } + 3 \delta_k,
\end{align*}
using again that $\aPos(u), 1-\aPos(v) \leq 1$ and also $\delta_k \leq 1$. We also have:
\begin{align*}
\aPosEst(u) \cdot \paren*{ 1 - \aPosEst(v) } &= \aPosEst(u) - \aPosEst(u) \cdot \aPosEst(v) \\
&\geq \aPos(u) - \delta_k - \paren*{ \aPos(u) + \delta_k } \cdot \paren*{ \aPos(v) + \delta_k } \\
&\geq \aPos(u) - \aPos(u) \cdot \aPos(v) - 4 \delta_k \tag{As $\aPos(u), \aPos(v), \delta_k \leq 1$} \\
&= \aPos(u) \cdot \paren*{ 1 - \aPos(v) } - 4 \delta_k .
\end{align*}
\end{proof}

By our settings of parameters (\Cref{eq:multipass:delta-k-bound}), we have:

\begin{corollary}[Corollary of \cref{lemma:multipass:cls17estimate:pair,claim:cls17exp-helper}]
\label{cor:cls17estimate:pair} 
Let $k > 0$, $G = \paren*{ V, E }$ be a graph, and $\chi : V \to [k]$ be a proper coloring of $G$. For all $(u, v) \in E$, we have:
\[
\abs*{ \Exp\bracket*{ \aPosEst(u) \cdot \paren*{ 1 - \aPosEst(v) } } - \aPos(u) \cdot \paren*{ 1 - \aPos(v) } } \leq 10 \delta_k < \epsilon^5 .
\]
\end{corollary}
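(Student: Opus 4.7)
The plan is to apply the helper Proposition \ref{claim:cls17exp-helper} to the random variable $\mathsf{X} := \aPosEst(u) \cdot (1 - \aPosEst(v))$ with the target value $\mu := \aPos(u) \cdot (1 - \aPos(v))$. First, I would observe that $\mathsf{X} \in [0,1]$ since both $\aPosEst(u), \aPosEst(v) \in [0,1]$ (by the output specification of \cref{algo:cls17estimate}), and similarly $\mu \in [0,1]$.

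Next, I would invoke \cref{lemma:multipass:cls17estimate:pair}, which gives the concentration bound $\Pr(|\mathsf{X} - \mu| \geq 5\delta_k) \leq 2\delta_k$. To fit the hypothesis of \cref{claim:cls17exp-helper}, which requires a single parameter $\delta$ with both $\Pr(|\mathsf{X}-\mu|\geq\delta)\leq\delta$ and $\delta \leq 1$, I would set $\delta := 5\delta_k$. Since $2\delta_k \leq 5\delta_k = \delta$, the probability bound only strengthens, and $\delta = 5\epsilon^{10} \leq 1$ using $\epsilon < 0.01$. Then \cref{claim:cls17exp-helper} yields $|\Exp[\mathsf{X}] - \mu| \leq 2\delta = 10\delta_k$.

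Finally, I would note that $10\delta_k = 10\epsilon^{10} < \epsilon^5$, again by the bound $\epsilon < 0.01$ (so that $10\epsilon^5 < 1$), giving the stated conclusion. There is essentially no obstacle here — this is a routine two-line application of the two cited results, with the only (very minor) bookkeeping being the choice of parameter $\delta = 5\delta_k$ to simultaneously absorb the deviation and probability bounds from \cref{lemma:multipass:cls17estimate:pair} into the single-parameter form required by \cref{claim:cls17exp-helper}.
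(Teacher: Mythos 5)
Your proof is correct and matches the paper's intended argument: the corollary is tagged explicitly as following from \cref{lemma:multipass:cls17estimate:pair} and \cref{claim:cls17exp-helper}, and your choice $\delta = 5\delta_k$ (absorbing the $2\delta_k$ probability bound) is exactly the right way to cast the lemma's two-parameter tail bound into the single-parameter form that \cref{claim:cls17exp-helper} requires. The final numeric check $10\delta_k = 10\epsilon^{10} < \epsilon^5$ via $\epsilon < 0.01$ is also correct and is what the paper's ``by our settings of parameters'' refers to.
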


\subsection{Estimating the Maximum Directed Cut}
\label{sec:multipass:cls17-dicut}

We are now ready to present our algorithm to estimate the value of the maximum directed cut. As before, we assume that the input graph has a proper coloring $\chi$ with say $k$ colors. As the algorithm is trivial otherwise, we will also assume that the edge set of the input graph is non-empty. Note that, as written, it is not a streaming algorithm. In \cref{sec:multipass:formal}, we explain why it can be implemented in a streaming setting as required for \cref{thm:multipass-formal}.

\begin{algorithm}[H]
	\caption{Estimator for $\maxval{G}$.}
	\label{algo:cls17-dicut}
	\begin{algorithmic}[1]
		\renewcommand{\algorithmicrequire}{\textbf{Input:}}
		\renewcommand{\algorithmicensure}{\textbf{Output:}}
		
		\Require An integer $k > 0$, a graph $G = \paren*{ V, E }$, a proper coloring $\chi : V \to [k]$ of $G$.
		\Ensure An estimate of $\maxval{G}$.

        \alglinenoPop{algcommon}
		
		\State Sample independent $\paren*{ \mathsf{u}_1, \mathsf{v}_1 }, \dots, \paren*{ \mathsf{u}_D, \mathsf{v}_D } \sim \Unif{E}$. Let $\aPosEst(\cdot)$ be as defined in \cref{algo:cls17estimate}. Output the value:
		\[
			\mathsf{Out} \gets \frac{ 1 - 100 \epsilon^5 }{D} \cdot \sum_{j = 1}^D \aPosEst\paren*{ \mathsf{u}_j } \cdot \paren*{ 1 - \aPosEst\paren*{ \mathsf{v}_j } } ,
		\]
		where the randomness used in each invocation of $\aPosEst(\cdot)$ above is independent.

        \alglinenoPush{algcommon}
	\end{algorithmic}
\end{algorithm}

The following lemma states roughly that for any fixed (fractional) assignment $\aX : V \to [0,1]$, the value of $\aX$ on $D$ independent, uniformly sampled edges of $G$ is likely close to the true value of $\aX$ on $G$.

\begin{lemma}
\label{lemma:multipass:cls17-dicut:hoeffding}
For all functions $\aX : V \to [0,1]$, we have:
\[
\Pr_{ \paren*{ \mathsf{u}_1, \mathsf{v}_1 }, \dots, \paren*{ \mathsf{u}_D, \mathsf{v}_D } \sim \Unif{E}^D }\paren*{ \abs*{ \frac{1}{D} \cdot \sum_{j = 1}^D \aX\paren*{ \mathsf{u}_j } \cdot \paren*{ 1 - \aX\paren*{ \mathsf{v}_j } } - \val{G}{\aX} } \geq \epsilon^5 } \leq \epsilon^{20} .
\]
\end{lemma}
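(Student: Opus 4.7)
The plan is a direct application of Hoeffding's inequality (\cref{prop:hoeffding}). For each $j \in [D]$, define the random variable $X_j := \aX(\mathsf{u}_j) \cdot (1-\aX(\mathsf{v}_j))$. Since $\aX$ takes values in $[0,1]$, each $X_j$ takes values in $[0,1]$, so we may apply \cref{prop:hoeffding} with $a_j = 0$ and $b_j = 1$. The $(\mathsf{u}_j,\mathsf{v}_j)$ are drawn independently from $\Unif{E}$, so the $X_j$'s are mutually independent. Moreover, by \cref{eq:val}, the expectation of each $X_j$ is exactly
\[
\Exp[X_j] = \Exp_{(u,v) \sim \Unif{E}}[\aX(u)(1-\aX(v))] = \val{G}{\aX}.
\]

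Next I would set $t := D \cdot \epsilon^5$ in \cref{prop:hoeffding}, so that the event $\abs{\frac1D \sum_j X_j - \val{G}{\aX}} \geq \epsilon^5$ is exactly the event $\abs{\sum_j X_j - \sum_j \Exp[X_j]} \geq t$. Hoeffding then gives the bound
\[
\Pr\paren*{ \abs*{ \frac{1}{D} \sum_{j=1}^D X_j - \val{G}{\aX} } \geq \epsilon^5 } \leq 2 \exp\paren*{ -\frac{2(D\epsilon^5)^2}{D} } = 2\exp\paren*{ -2 D \epsilon^{10} }.
\]

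Finally, I would verify that this bound is at most $\epsilon^{20}$ using the parameter settings from \cref{eq:multipass:params}. Since $D = \epsilon^{-100k}$ and $k \geq 1$, we have $2 D \epsilon^{10} = 2\epsilon^{10-100k} \geq 2\epsilon^{-90}$. Applying \cref{eq:multipass:exp-est} with $x = \epsilon^{90}/2$, we get $\exp(-2 \epsilon^{-90}) = \exp(-1/x) \leq x = \epsilon^{90}/2$, so
\[
2\exp(-2D\epsilon^{10}) \leq 2 \cdot \frac{\epsilon^{90}}{2} = \epsilon^{90} \leq \epsilon^{20},
\]
using the standing assumption that $\epsilon < 0.01 < 1$. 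I do not anticipate any serious obstacle; the only step requiring mild care is the bookkeeping for the exponent, which reduces to the same kind of calculation already used (for instance) in the proof of \cref{claim:multipass:cls17estimate:chernoff}.
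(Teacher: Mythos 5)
Your proof is correct and takes essentially the same route as the paper: apply Hoeffding's inequality (\cref{prop:hoeffding}) to the i.i.d.\ random variables $X_j = \aX(\mathsf{u}_j)(1-\aX(\mathsf{v}_j))$, which lie in $[0,1]$ and have mean $\val{G}{\aX}$, to get the bound $2\exp(-2D\epsilon^{10})$, and then verify this is at most $\epsilon^{20}$ using the parameter choice $D = \epsilon^{-100k}$ from \cref{eq:multipass:params} and the estimate \cref{eq:multipass:exp-est}. The only difference is that you spell out the final numerical step more explicitly than the paper, which simply asserts it.
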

\begin{proof}
The proof is a standard Chernoff bound. For all $j \in [D]$, define $\mathsf{X}_j$ be a random variable that takes the value $x\paren*{ \mathsf{u}_j } \cdot \paren*{ 1 - x\paren*{ \mathsf{v}_j } }$. Observe that these random variables are mutually independent and identically distributed with expectation $\frac{1}{ \card*{ E } } \cdot \sum_{ (u, v) \in E } x(u) \cdot \paren*{ 1 - x(v) } = \val{G}{x}$ by \cref{eq:val}. Moreover, they only take values in the interval $[0, 1]$. From \cref{prop:hoeffding,eq:multipass:params}, we have:
\[
\Pr_{ \paren*{ \mathsf{u}_1, \mathsf{v}_1 }, \dots, \paren*{ \mathsf{u}_D, \mathsf{v}_D } \in E }\paren*{ \abs*{ \frac{1}{D} \cdot \sum_{j = 1}^D x\paren*{ \mathsf{u}_j } \cdot \paren*{ 1 - x\paren*{ \mathsf{v}_j } } - \val{G}{x} } \geq \epsilon^5 } \leq 2 \cdot \mathrm{e}^{ - 2D \cdot \epsilon^{10} } < \epsilon^{20} .
\]
\end{proof}

Now, we apply the previous lemma in the case $\aX = \aPos$, along with \cref{cor:cls17estimate:pair} which states that $\aPos$ and $\aPosEst$ likely give nearby values on any fixed edge, to deduce overall correctness of our estimate:

\begin{lemma}
\label{lemma:multipass:cls17-dicut}
Let $k > 0$, $G = \paren*{ V, E }$ be a graph, and $\chi : V \to [k]$ be a proper coloring of $G$. We have:
\[
\Pr\paren*{ \paren*{ \frac{1}{2} - 200 \epsilon^5 } \cdot \maxval{G} \leq \mathsf{Out} \leq \maxval{G} } \geq 1 - \epsilon^{10} .
\]
\end{lemma}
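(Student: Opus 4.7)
The plan is to treat $\mathsf{Y}_j := \aPosEst(\mathsf{u}_j)\paren*{1-\aPosEst(\mathsf{v}_j)}$ as an i.i.d.\ sequence of $[0,1]$-valued random variables. Independence across $j$ holds because each $(\mathsf{u}_j,\mathsf{v}_j)$ is sampled independently and uniformly from $E$ and, by design of \cref{algo:cls17-dicut}, each top-level invocation of $\aPosEst$ uses fresh independent randomness; boundedness is immediate from the case split defining $\aPosEst$ in \cref{algo:cls17estimate}. The strategy has four steps: (i) compare $\Exp[\mathsf{Y}_j]$ to $\val{G}{\aPos}$ via \cref{cor:cls17estimate:pair} averaged over edges; (ii) concentrate $\tfrac{1}{D}\sum_j \mathsf{Y}_j$ around $\Exp[\mathsf{Y}_j]$ via Hoeffding; (iii) relate $\val{G}{\aPos}$ to $\maxval{G}$ via \cref{lemma:multipass:cls17} at $\alpha=\epsilon^5$; and (iv) use the scaling factor $(1-100\epsilon^5)$ together with $\maxval{G}\geq 1/4$ to absorb additive $O(\epsilon^5)$ errors into multiplicative ones.

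For (i), the per-edge bound $\abs*{\Exp[\aPosEst(u)(1-\aPosEst(v))] - \aPos(u)(1-\aPos(v))} \leq \epsilon^5$ from \cref{cor:cls17estimate:pair}, averaged over $(u,v) \sim \Unif{E}$ together with the definition of $\val{G}{\cdot}$ in \cref{eq:val}, gives $\abs*{\Exp[\mathsf{Y}_j] - \val{G}{\aPos}} \leq \epsilon^5$. For (ii), \cref{prop:hoeffding} applied to the i.i.d.\ bounded variables $\mathsf{Y}_j$ with $D = \epsilon^{-100k}$ (\cref{eq:multipass:params}) yields
\[
\Pr\paren*{\abs*{\tfrac{1}{D}\sum_{j=1}^D \mathsf{Y}_j - \Exp[\mathsf{Y}_j]} \geq \epsilon^5} \leq 2\mathrm{e}^{-2D\epsilon^{10}} \leq \epsilon^{10},
\]
where the last estimate follows from \cref{eq:multipass:exp-est}. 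Combining these two bounds, with probability at least $1-\epsilon^{10}$ we have $\abs*{\tfrac{1}{D}\sum_j \mathsf{Y}_j - \val{G}{\aPos}} \leq 2\epsilon^5$.

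For (iii) and (iv), \cref{lemma:multipass:cls17} with $\alpha = \epsilon^5$ gives $(\tfrac{1}{2} - \epsilon^5)\maxval{G} \leq \val{G}{\aPos} \leq \maxval{G}$, so on the good event $\paren*{\tfrac{1}{2}-\epsilon^5}\maxval{G} - 2\epsilon^5 \leq \tfrac{1}{D}\sum_j \mathsf{Y}_j \leq \maxval{G} + 2\epsilon^5$. Multiplying by $(1-100\epsilon^5)$, the upper bound expands to $\mathsf{Out} \leq \maxval{G} + 2\epsilon^5 - 100\epsilon^5 \maxval{G} - 200\epsilon^{10} \leq \maxval{G}$, since $\maxval{G} \geq 1/4$ (as noted at the end of the proof of \cref{lemma:multipass:cls17}) makes $100\epsilon^5 \maxval{G} \geq 25\epsilon^5 \geq 2\epsilon^5$. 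The lower bound simplifies to $\mathsf{Out} \geq \paren*{\tfrac{1}{2} - 51\epsilon^5}\maxval{G} - 2\epsilon^5$, and using $\maxval{G} \geq 1/4$ once more to write $-2\epsilon^5 \geq -8\epsilon^5\maxval{G}$ gives $\mathsf{Out} \geq \paren*{\tfrac{1}{2} - 60\epsilon^5}\maxval{G} \geq \paren*{\tfrac{1}{2} - 200\epsilon^5}\maxval{G}$.

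There is no substantial obstacle: every technical ingredient has been established in the preceding subsections. In retrospect, the $(1-100\epsilon^5)$ scaling factor in \cref{algo:cls17-dicut} is calibrated precisely so that the additive $O(\epsilon^5)$ slack from steps (i)--(ii) can be absorbed into a multiplicative $O(\epsilon^5)\cdot\maxval{G}$ error, preserving the upper bound $\mathsf{Out} \leq \maxval{G}$ on the good event while degrading the lower bound by only a constant factor.
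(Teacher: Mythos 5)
Your proof is correct, and it takes a genuinely different (and cleaner) route than the paper. The paper handles the two sources of randomness --- the edge samples $(\mathsf{u}_j, \mathsf{v}_j)$ and the internal coins of the $\aPosEst$ invocations --- by a two-stage conditioning: it first applies Hoeffding to the $\aPos$-values at the sampled edges (\cref{lemma:multipass:cls17-dicut:hoeffding}), conditions on that deviation event not occurring (fixing an arbitrary such edge tuple), uses two triangle inequalities (one via the conditioning, one via \cref{cor:cls17estimate:pair}), and then applies Hoeffding a second time to the $\aPosEst$ randomness given the fixed edges. You instead observe directly that the composite random variables $\mathsf{Y}_j = \aPosEst(\mathsf{u}_j)(1-\aPosEst(\mathsf{v}_j))$ are i.i.d.\ over the joint randomness (edge choice plus fresh $\aPosEst$ coins), compute $\Exp[\mathsf{Y}_j]$ via the tower rule together with \cref{cor:cls17estimate:pair}, and apply a single Hoeffding bound. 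This collapses the paper's nested-conditioning argument into one step, avoiding the intermediate \cref{lemma:multipass:cls17-dicut:hoeffding} entirely; the paper's route pays a small price in length in order to factor out that helper lemma. The closing arithmetic --- using $\maxval{G} \geq 1/4$ and the $(1-100\epsilon^5)$ rescaling to absorb the additive $O(\epsilon^5)$ slack into a multiplicative one while preserving $\mathsf{Out} \leq \maxval{G}$ --- is carried out a little differently (you bracket $\frac{1}{D}\sum_j \mathsf{Y}_j$ first and then rescale; the paper rescales $\val{G}{\aPos}$ first and bounds the absolute deviation of $\mathsf{Out}$) but both are correct and yield constants comfortably inside the target $200\epsilon^5$.
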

\begin{proof}
Recall from \cref{lemma:multipass:cls17} that $\paren*{ \frac{1}{2} - \epsilon^5 } \cdot \maxval{G} \leq \val{G}{ \aPos } \leq \maxval{G}$. It follows that:
\[
\paren*{ \frac{1}{2} - 100 \epsilon^5 } \cdot \maxval{G} \leq \paren*{ 1 - 100 \epsilon^5 } \cdot \val{G}{ \aPos } \leq \paren*{ 1 - 100 \epsilon^5 } \cdot \maxval{G} .
\]
This means that the lemma we wish to bound follows if we show that:
\[
\Pr\paren*{ \abs*{ \mathsf{Out} - \paren*{ 1 - 100 \epsilon^5 } \cdot \val{G}{ \aPos } } \geq 100 \epsilon^5 \cdot \maxval{G} } \leq \epsilon^{10} .
\]
By definition of $\mathsf{Out}$ and the fact that $\maxval{G} \geq \frac{1}{4}$, this follows if we show that:
\[
\Pr_{ \paren*{ \mathsf{u}_1, \mathsf{v}_1 }, \dots, \paren*{ \mathsf{u}_D, \mathsf{v}_D } \sim E }\paren*{ \abs*{ \frac{1}{D} \cdot \sum_{j = 1}^D \aPosEst\paren*{ \mathsf{u}_j } \cdot \paren*{ 1 - \aPosEst\paren*{ \mathsf{v}_j } } - \val{G}{ \aPos } } \geq 20 \epsilon^5 } \leq \epsilon^{10} .
\]
Let $\mathcal{E}$ be the event we upper bounded in \cref{lemma:multipass:cls17-dicut:hoeffding} for $\aX = \aPos$. By a chain rule, we have:
\begin{multline*}
    \Pr_{ \paren*{ \mathsf{u}_1, \mathsf{v}_1 }, \dots, \paren*{ \mathsf{u}_D, \mathsf{v}_D } \sim E }\paren*{ \abs*{ \frac{1}{D} \cdot \sum_{j = 1}^D \aPosEst\paren*{ \mathsf{u}_j } \cdot \paren*{ 1 - \aPosEst\paren*{ \mathsf{v}_j } } - \val{G}{ \aPos } } \geq 20 \epsilon^5 } \\
    \leq \Pr\paren*{ \mathcal{E} } + \Pr_{ \paren*{ \mathsf{u}_1, \mathsf{v}_1 }, \dots, \paren*{ \mathsf{u}_D, \mathsf{v}_D } \sim E }\paren*{ \abs*{ \frac{1}{D} \cdot \sum_{j = 1}^D \aPosEst\paren*{ \mathsf{u}_j } \cdot \paren*{ 1 - \aPosEst\paren*{ \mathsf{v}_j } } - \val{G}{ \aPos } } \geq 20 \epsilon^5 ~\bigg\vert~ \overline{ \mathcal{E} } }.
\end{multline*}
By \cref{lemma:multipass:cls17-dicut:hoeffding}, it suffices to bound the second term by $\epsilon^{20}$. We shall show this bound under a stronger conditioning by fixing an arbitrary $\paren*{ u_1, v_1 }, \dots, \paren*{ u_D, v_D } \in E$ for which $\mathcal{E}$ does not occur. Observe that, after this stronger conditioning, the only randomness left is the randomness in the computations of $\aPosEst(\cdot)$. For an arbitrary such $\paren*{ u_1, v_1 }, \dots, \paren*{ u_D, v_D } \in E$, we have:
\begin{align*}
    &\Pr\paren*{ \abs*{ \frac{1}{D} \cdot \sum_{j = 1}^D \aPosEst\paren*{ \mathsf{u}_j } \cdot \paren*{ 1 - \aPosEst\paren*{ \mathsf{v}_j } } - \val{G}{ \aPos } } \geq 20 \epsilon^5 ~\bigg\vert~ \paren*{ u_1, v_1 }, \dots, \paren*{ u_D, v_D } } \\
    &\leq \Pr\paren*{ \abs*{ \frac{1}{D} \cdot \sum_{j = 1}^D \aPosEst\paren*{ u_j } \cdot \paren*{ 1 - \aPosEst\paren*{ v_j } } - \val{G}{ \aPos } } \geq 20 \epsilon^5 } ,
\intertext{as the randomness in the computations of $\aPosEst(\cdot)$ is independent of the randomness used to sample $\paren*{ \mathsf{u}_1, \mathsf{v}_1 }, \dots, \paren*{ \mathsf{u}_D, \mathsf{v}_D } \sim E$. Next, by our choice of $\paren*{ u_1, v_1 }, \dots, \paren*{ u_D, v_D }$, we have that $\abs*{ \frac{1}{D} \cdot \sum_{j = 1}^D \aPos\paren*{ u_j } \cdot \paren*{ 1 - \aPos\paren*{ v_j } } - \val{G}{ \aPos } } \leq \epsilon^5$. By a triangle inequality, this means that:}
&\leq \Pr\paren*{ \abs*{ \frac{1}{D} \cdot \sum_{j = 1}^D \aPosEst\paren*{ u_j } \cdot \paren*{ 1 - \aPosEst\paren*{ v_j } } - \frac{1}{D} \cdot \sum_{j = 1}^D \aPos\paren*{ u_j } \cdot \paren*{ 1 - \aPos\paren*{ v_j } } } \geq 18 \epsilon^5 } .
\intertext{We now apply \cref{cor:cls17estimate:pair} on the edges $\paren*{ u_1, v_1 }, \dots, \paren*{ u_D, v_D }$ to get that for all $j \in [D]$, we have the bound $\abs*{ \Exp\bracket*{ \aPosEst\paren*{ u_j } \cdot \paren*{ 1 - \aPosEst\paren*{ v_j } } } - \aPos\paren*{ u_j } \cdot \paren*{ 1 - \aPos\paren*{ v_j } } } \leq \epsilon^5$. By a triangle inequality, this means that we have $\abs*{ \frac{1}{D} \cdot \sum_{j = 1}^D \Exp\bracket*{ \aPosEst\paren*{ u_j } \cdot \paren*{ 1 - \aPosEst\paren*{ v_j } } } - \frac{1}{D} \cdot \sum_{j = 1}^D \aPos\paren*{ u_j } \cdot \paren*{ 1 - \aPos\paren*{ v_j } } } \leq \epsilon^5$. By another triangle inequality, we get:}
&\leq \Pr\paren*{ \abs*{ \frac{1}{D} \cdot \sum_{j = 1}^D \aPosEst\paren*{ u_j } \cdot \paren*{ 1 - \aPosEst\paren*{ v_j } } - \frac{1}{D} \cdot \sum_{j = 1}^D \Exp\bracket*{ \aPosEst\paren*{ u_j } \cdot \paren*{ 1 - \aPosEst\paren*{ v_j } } } } \geq 5 \epsilon^5 } .
\intertext{We now apply Hoeffding's inequality (\cref{prop:hoeffding}). We get using \Cref{eq:multipass:exp-est} that:}
&\leq 2 \cdot \mathrm{e}^{ - 10 D \cdot \epsilon^{10} } < \epsilon^{20},
\end{align*}
as desired.
\end{proof}

\subsection{Streaming implementation of \cref{algo:cls17estimate}}
\label{sec:multipass:cls17estimate-streaming}

\cref{lemma:multipass:cls17-dicut} shows that \cref{algo:cls17-dicut} on input $G$ and $\chi$ computes an estimate of $\maxval{G}$ with high probability. However, it is not in the streaming model and therefore not enough for \cref{thm:multipass-formal}. We now show that it can be implemented in the streaming setting. For this, we first show how \cref{algo:cls17estimate}, the main building block of \cref{algo:cls17-dicut} is implementable in the streaming setting. Namely, we show that:

\begin{lemma}
\label{lemma:multipass:cls17estimate-streaming}
For all $\epsilon > 0$ and integers $0 < a \leq k$, there exists a randomized streaming algorithm that on input an $n$-vertex directed graph $G = \paren*{ V, E }$, a proper coloring $\chi : V \to [k]$ of $G$ using $k$ colors, and a non-isolated vertex $v \in V$ satisfying $\chi(v) \leq a$, uses $a$ passes and $\paren*{ 10 D }^a \cdot \log n$ space and outputs a value identically distributed to $\aPosEst(v)$.
\end{lemma}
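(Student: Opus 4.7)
The plan is to simulate \cref{algo:cls17estimate} level-by-level in its recursion tree, devoting one stream pass to each level. Throughout, I maintain a \emph{frontier} consisting of the vertices whose $\aPosEst$ values are currently being computed; initially this frontier is just $\set*{v}$. During pass $i \in [a]$, for each frontier vertex $u$ I run two standard streaming subroutines in parallel: (i) I exactly count the four quantities $\card*{E_{\dirIn,\dirLow}(u)}, \card*{E_{\dirOut,\dirLow}(u)}, \card*{E_{\dirIn,\dirHigh}(u)}, \card*{E_{\dirOut,\dirHigh}(u)}$ (which determine $\vYNoAlpha{\dirIn}{u}$ and $\vYNoAlpha{\dirOut}{u}$ via \cref{eq:multipass:y-value}), and (ii) I use uniform reservoir sampling with fresh independent randomness to draw $D$ samples from each of $E_{\dirIn,\dirLow}(u)$ and $E_{\dirOut,\dirLow}(u)$. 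The endpoints of the newly sampled edges form the frontier for the next pass.

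Because every recursive call in \cref{algo:cls17estimate} follows an edge in $E_{\dirIn,\dirLow}$ or $E_{\dirOut,\dirLow}$, and therefore strictly decreases the color of the active vertex, the recursion tree has depth at most $\chi(v) \leq a$; so $a$ passes suffice to reach all leaves (color-$1$ vertices, or vertices with both $E_{\dirIn,\dirLow}$ and $E_{\dirOut,\dirLow}$ empty). After the last pass I perform a deterministic bottom-up traversal of the recorded tree, reconstructing $\vZEst{\dirIn}{\cdot}$ and $\vZEst{\dirOut}{\cdot}$ from the stored samples exactly as in \cref{line:cls17estimate:z}, and then $\aPosEst(\cdot)$ via \cref{line:cls17estimate:pos}. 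This post-processing uses no additional passes.

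For correctness of the output distribution, the key point is that uniform reservoir sampling returns an \emph{exactly} uniform sample from a stream; moreover, because each frontier vertex uses its own fresh randomness for each of its $2D$ reservoirs, all recursive calls receive mutually independent samples, precisely matching the joint distribution induced by \cref{algo:cls17estimate}. For the space bound, each frontier vertex contributes at most $2D$ new vertices to the next frontier, so the $i$-th frontier has size at most $(2D)^{i-1}$, and the total number of recursive calls tracked across all $a$ passes is at most $\sum_{i=1}^a (2D)^{i-1} \leq 2(2D)^{a-1}$. Each such call stores only its vertex identity, four counts, and $2D$ sampled edges, i.e., $O(D \log n)$ bits. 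The total space is therefore $O(D \cdot (2D)^{a-1} \cdot \log n) \leq (10D)^a \log n$. The main thing to verify carefully is that running the $2D$ reservoirs independently at each frontier vertex realizes the exact joint distribution over samples that \cref{algo:cls17estimate} produces sequentially; this reduces directly to the two facts that each reservoir is uniform and all reservoirs use independent randomness.
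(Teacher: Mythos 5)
Your proof is correct and takes essentially the same approach as the paper: the paper presents the argument as an induction on $a$ (with $\mathbb{A}_a$ spending one pass on reservoir sampling and then invoking $\mathbb{A}_{a-1}$ in parallel on the sampled neighbors), which when unrolled is exactly your level-by-level frontier/BFS simulation with bottom-up post-processing. The one small caveat is that the frontier must be treated as a multiset of recursion-tree nodes (so that a vertex sampled twice gets two independent reservoirs), which your independence remark and size accounting already handle implicitly.
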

\begin{proof}
Proof by induction on $a$. For the base case $a = 1$, observe from \cref{algo:cls17estimate,eq:multipass:edge-parts} that for all $v \in V$ satisfying $\chi(v) = 1$, we have $\aPosEst(v) = \frac{ \vYNoAlpha{\dirOut}{v} }{ \vYNoAlpha{\dirIn}{v} + \vYNoAlpha{\dirOut}{v} } = \frac{ \card*{ E_{ \dirOut, \dirHigh }(v) } }{ \card*{ E_{ \dirIn, \dirHigh }(v) } + \card*{ E_{ \dirOut, \dirHigh }(v) } }$ (with probability $1$), where $\card*{ E_{ \dirIn, \dirHigh }(v) }$ and $\card*{ E_{ \dirOut, \dirHigh }(v) }$ are the in-degree and out-degree of $v$ respectively. These can be easily computed in one pass and $10 \log n$ space, and the lemma follows. 

We now show the lemma for $a > 1$ assuming it holds for $a - 1$. Fix $\epsilon$ and $k$ as in the lemma and let $\mathbb{A}_{a - 1}$ be the streaming algorithm promised by the induction hypothesis. We will need the following well known lemma, that follows from reservoir sampling of \cite{Vit85}. Let $D = \epsilon^{ - 100k }$ be as in \cref{eq:multipass:params}.
\begin{lemma}[Corollary of \cite{Vit85}]
\label{lemma:multipass:vit85}
There exists a randomized streaming algorithm $\mathbb{A}_{ \mathsf{Samp} }$ that on input an $n$-vertex directed graph $G = \paren*{ V, E }$, a proper coloring $\chi : V \to [k]$ of $G$ using $k$ colors, and a vertex $v \in V$ uses $1$ pass and $2D \log n$ space and outputs $D$ independent and uniformly random samples from $E_{ \dirIn, \dirLow }(v)$ if $E_{ \dirIn, \dirLow }(v) = \emptyset$, and otherwise outputs $\bot$. An analogous result holds for any union of the multisets defined in \cref{eq:multipass:edge-parts}.
\end{lemma}

To show the lemma for $a$, we define the algorithm $\mathbb{A}_a$ that imitates \cref{algo:cls17estimate}. This is done by first using one pass to run \cref{lemma:multipass:vit85} and compute the samples and values required in \cref{line:cls17estimate:z}. These samples, together with the algorithm $\mathbb{A}_{a - 1}$ are then used to compute $\vZEst{\dirIn}{v}$ and $\vZEst{\dirOut}{v}$, which is then used to compute the value $\aPosEst(v)$ exactly as in \cref{line:cls17estimate:pos}. The correctness and the space and pass complexity of $\mathbb{A}_a$ are straightforward from the induction hypothesis. 
\end{proof}

\subsection{Proof of \cref{thm:multipass-formal}}
\label{sec:multipass:formal}

We are now ready to prove \cref{thm:multipass-formal}. Fix $\epsilon > 0$ and define $k = \frac{200}{ \epsilon } - 1 \geq \frac{150}{ \epsilon }$ as we assume $\epsilon < 0.01$. Sample a $2$-wise independent hash function $\chi : V \to [k]$ and define the algorithm:

\begin{algorithm}[H]
	\caption{The algorithm $\mathbb{A}_{ \mathsf{multipass} }$ proving \cref{thm:multipass-formal}.}
	\label{algo:multipass-dicut}
	\begin{algorithmic}[1]
		\renewcommand{\algorithmicrequire}{\textbf{Input:}}
		\renewcommand{\algorithmicensure}{\textbf{Output:}}
		
		\Require A graph $G = \paren*{ V, E }$. As the stream consists of edges, we can without loss of generality that $G$ has no isolated vertices and at least one edge.
		\Ensure A value $\cutest$. 
		\alglinenoPop{algcommon}
		\State Throughout this algorithm, we ignore all edges $(u, v)$ in the stream for which $\chi(u) = \chi(v)$. This is the same as saying what follows is run on a graph without these edges. By definition, $\chi$ is a proper coloring for this remaining graph $G' = \paren*{ V', E' }$ with $k$ colors. \label{line:multipass-dicut:crlot}

		\State If $G'$ has no edges, output $0$. Otherwise, run the algorithm $\mathbb{A}_{ \mathsf{Samp} }$ from \cref{lemma:multipass:vit85} to get samples $\paren*{ \mathsf{u}_1, \mathsf{v}_1 }, \dots, \paren*{ \mathsf{u}_D, \mathsf{v}_D } \in E'$. This takes one pass and space $10 D \log n$. \label{line:multipass-dicut:samp}
		
		\State Output the value: \label{line:multipass-dicut:out}
		\[
			\cutest \gets \frac{ 1 - 100 \epsilon^5 }{D} \cdot \sum_{j = 1}^D \mathbb{A}_k\paren*{ \mathsf{u}_j } \cdot \paren*{ 1 - \mathbb{A}_k\paren*{ \mathsf{v}_j } } ,
		\]
		where $\mathbb{A}_k$ is as promised \cref{lemma:multipass:cls17estimate-streaming} for $a = k$ and the executions of $\mathbb{A}_k$ are done in parallel using independent randomness.	
        \alglinenoPush{algcommon}
    \end{algorithmic}
\end{algorithm}

\begin{proof}[Proof of \cref{thm:multipass-formal}]
The space and pass complexity are straightforward and we simply prove correctness. Let $\mathcal{E}_1$ be the event that $\card*{ E' } < \paren*{ 1 - \frac{ \epsilon }{8} } \cdot \card*{ E }$. Due to \cref{prop:coloring}, we have that $\Pr\paren*{ \mathcal{E}_1 } \leq \frac{1}{20}$. To show the theorem, it suffices to condition on $\overline{ \mathcal{E}_1 }$ and show that the required probability is at least $\frac{19}{20}$. We shall in fact show this under a stronger conditioning by conditioning on an arbitrary graph $G'$ for which $\mathcal{E}_1$ does not occur. Observe that the fact that $G$ has at least one edge implies that $G'$ also has at least one edge and \cref{line:multipass-dicut:samp} does not output $0$. 

As \cref{line:multipass-dicut:samp} does not output $0$, we can conclude from \cref{lemma:multipass:vit85,lemma:multipass:cls17estimate-streaming} that \cref{algo:multipass-dicut} simply imitates \cref{algo:cls17-dicut} on the graph $G'$ that has proper coloring $\chi$. This means that:
\begin{align*}
&\Pr\paren*{ \paren*{ \frac{1}{2} - \epsilon } \cdot \maxval{G} \leq \cutest \leq \maxval{G} \mid G' } \\
&\hspace{1cm}= \Pr\paren*{ \paren*{ \frac{1}{2} - \epsilon } \cdot \maxval{G} \leq \mathsf{Out}\paren*{ G' } \leq \maxval{G} \mid G' } \\
&\hspace{1cm}= \Pr\paren*{ \paren*{ \frac{1}{2} - \epsilon } \cdot \maxval{G} \leq \mathsf{Out}\paren*{ G' } \leq \maxval{G} } \tag{$\mathsf{Out}$ is independent of $G'$} \\
&\hspace{1cm}\geq \Pr\paren*{ \paren*{ \frac{1}{2} - \epsilon^2 } \cdot \maxval{ G' } \leq \mathsf{Out}\paren*{ G' } \leq \maxval{ G' } } \tag{$G'$ is a subgraph of $G'$ and $\mathcal{E}_1$ does not happen} \\
&\hspace{1cm}\geq \frac{19}{20} . \tag{\cref{lemma:multipass:cls17-dicut}}
\end{align*}
\end{proof}

\ifnum\doubleblind=0

\section*{Acknowledgments}

\textsc{r.r.s.} is supported by the Department of Atomic Energy, Government of India, under project no. RTI4001.
\textsc{n.g.s.} is supported in part by an NSF Graduate Research Fellowship (Award DGE 2140739).
\textsc{m.s.} is supported in part by a Simons Investigator Award and NSF Award CCF 2152413.
\textsc{s.v.} is supported in part by NSF award CCF 2348475. Part of the work was conducted when \textsc{s.v.} was visiting the Simons Institute for the Theory of Computing as a research fellow in the Sublinear Algorithms program. Part of the work was conducted when \textsc{s.v.} was a graduate student at Harvard University, and supported in part by a Google Ph.D. Fellowship, a Simons Investigator Award to \textsc{m.s.}, and NSF Award CCF 2152413.

We would like to thank Hoaian Nguyen and the anonymous reviewers at SODA for helpful comments on the text.
\fi
\printbibliography

\end{document}